\renewcommand{\eqref}[1]{\hyperref[#1]{(\ref*{#1})}}
\newtheorem{theorem}{Theorem}[section]
\newtheorem{lemma}[theorem]{Lemma}
\newtheorem{corollary}[theorem]{Corollary}
\newtheorem{claim}[theorem]{Claim}
\newtheorem{proposition}[theorem]{Proposition}
\newtheorem{definition}[theorem]{Definition}
\theoremstyle{definition}
\newtheorem{remark}[theorem]{Remark}
\newtheorem{conjecture}[theorem]{Conjecture}
\renewcommand{\epsilon}{\varepsilon}
\renewcommand{\phi}{\varphi}
\newcommand{\eps}{\epsilon}
\DeclareMathOperator*{\E}{\mathbb{E}}
\providecommand{\RR}{\mathbb{R}}
\providecommand{\integers}{\mathbb{Z}}
\providecommand{\Dist}[1]{\Pi_{#1}}
\providecommand{\Distv}{\vec{\Pi}}
\newcommand{\calD}{{\mathcal{D}}}
\DeclareMathOperator{\im}{im}
\DeclareMathOperator*{\EE}{\mathbb{E}}
\DeclareMathOperator{\Gr}{Gr}
\newcommand{\R}{\mathbb{R}}
\newcommand{\iprod}[1]{ \langle{#1}\rangle}
\newcommand{\norm}[1]{\left \| {#1} \right \|}
\newcommand{\one}{\mathbbm{1}}
\newcommand{\abs}[1]{\left | {#1} \right |}
\newcommand{\sqbinom}[2]{\begin{bmatrix}#1\\#2\end{bmatrix}}
\newcommand{\set}[1]{\left \{ {#1} \right \} }
\newcommand{\sett}[2]{\left\{{#1} \;\middle|\; {#2}\right\}}
\newcommand\Tstrut{\rule{0pt}{2.6ex}}         
\newcommand\Bstrut{\rule[-0.9ex]{0pt}{0pt}}   
\newif\ifcomments
\newcommand{\prahladhuvacha}[1]{\todo[color=red!100!green!33,inline,size=\small]{ph:
    #1}}
\newcommand{\inote}[1]{\todo[color=blue!20,size=\footnotesize]{Irit: #1}}
\newcommand{\ynote}[1]{\todo[color=red!40!blue!15,size=\footnotesize]{Yuval: #1}}
\newcommand{\phnote}[1]{\todo[color=red!100!green!33, size=\footnotesize]{ph: #1}}
\newcommand{\dnote}[1]{\todo[color=green, size=\footnotesize]{yotam: #1}}
\newcommand{\prahladhuvacha}[1]{}
\newcommand{\inote}[1]{}
\newcommand{\ynote}[1]{}
\newcommand{\phnote}[1]{}
\newcommand{\dnote}[1]{}
\title{Boolean function analysis on high-dimensional expanders\thanks{A preliminary version of this paper appeared in {\em Proc. 20th RANDOM}, 2018~\cite{DiksteinDFH2018}.}}
\author{Yotam Dikstein\thanks{Weizmann Institute of Science, ISRAEL. email: {\tt \{yotam.dikstein,irit.dinur\}@weizmann.ac.il}. The research of the first and second authors was supported in part by Irit Dinur's ERC-CoG grant 772839.}
\and
Irit Dinur\footnotemark[2]
\and Yuval Filmus\thanks{Technion --- Israel Institute of Technology, ISRAEL. email: {\tt yuvalfi@cs.technion.ac.il}. Taub Fellow --- supported by the Taub Foundations. The research was funded by ISF grant 1337/16.}
 \and Prahladh Harsha\thanks{Tata Institute of Fundamental Research, INDIA. email: {\tt prahladh@tifr.res.in}. Research supported in part by UGC--ISF grant. Part of the work was done when the author was visiting the Weizmann Institute of Science.}}
\begin{document}

\maketitle

\begin{abstract}

We initiate the study of Boolean function analysis on high-dimensional expanders. We give a random-walk based definition of high-dimensional expansion, which coincides with the earlier definition in terms of two-sided link expanders. Using this definition, we describe an analog of the Fourier expansion and the Fourier levels of the Boolean hypercube for simplicial complexes. Our analog is a decomposition into approximate eigenspaces of random walks associated with the simplicial complexes. Our random-walk definition and the decomposition have the additional advantage that they extend to the more general setting of posets, encompassing both high-dimensional expanders and the Grassmann poset, which appears in recent work on the unique games conjecture.

We then use this decomposition to extend the Friedgut-Kalai-Naor theorem to high-dimensional expanders. Our results demonstrate that a constant-degree high-dimensional expander can sometimes serve as a sparse model for the Boolean slice or hypercube, and quite possibly additional results from Boolean function analysis can be carried over to this sparse model. Therefore, this model can be viewed as a derandomization of the Boolean slice, containing only $|X(k-1)|=O(n)$ points in contrast to $\binom{n}{k}$ points in the $(k)$-slice (which consists of all $n$-bit strings with exactly $k$ ones).

\end{abstract}
\section{Introduction}
Boolean function analysis is an essential tool in theory of computation. Traditionally, it studies functions on the Boolean cube $\{-1,1\}^n$. Recently, the scope of Boolean function analysis has been extended further, encompassing groups~\cite{EllisFF2015a,EllisFF2015b,Plaza2015,EllisFF2017}, association schemes~\cite{ODonnellW2013,Filmus2016-fkn,Filmus2016-orthogonal,FilmusM2019,FilmusKMW2018,DinurKKMS2018-grassman,KhotMS2018}, error-correcting codes~\cite{BarakGHMRS2015}, and quantum Boolean functions~\cite{MontanaroO2010}. Boolean function analysis on extended domains has led to progress in learning theory~\cite{ODonnellW2013} and on the unique games conjecture~\cite{KhotMS2017,DinurKKMS2018-grassman,DinurKKMS2018-2to1,BarakKS2019,KhotMS2018}.

High-dimensional expanders emerged in recent years as a new area of study, of interest to several different communities. Just as expander graphs are sparse models of the complete graph, so are high-dimensional expanders sparse models of a higher-dimensional object, namely the \emph{complete hypergraph}. Expander graphs are  central objects, appearing in a diverse list of areas. High-dimensional expanders are much newer objects which have already found connections to topological overlapping theory~\cite{FoxGLNP2012,Evra2017}, to analysis of Markov chains~\cite{AnariLOV2019}, and to coding theory~\cite{DinurHKNT2021} and property testing~\cite{KaufmanL2014}. (Note that while expander graphs are explicit derandomizations of random graphs, the mere existence of high-dimensional expanders is surprising since there is no sparse random model for generating these objects.).

\smallskip
\emph{The goal of this work is to connect these two threads of research, by introducing Boolean function analysis on high-dimensional expanders.}
\smallskip

We study Boolean functions on simplicial complexes. A \emph{pure $d$-dimensional simplicial complex $X$} is a set system consisting of an arbitrary collection of sets of size $d+1$ together with all their subsets. The sets in a simplicial complex are called \emph{faces}, and it is standard to denote by $X(i)$ the faces of $X$ whose cardinality is $i+1$. Our simplicial complexes are \emph{weighted} by a probability distribution $\Dist{d}$ on the top-level faces, which induces probability distributions $\Dist{i}$ on $X(i)$ in a natural way for all $i$: we choose $s \sim \Dist{d}$, and then choose an $i$-face $t \subset s$ uniformly at random. Our main object of study is the space of functions $f \colon X(d)\to\R $, and in particular, Boolean functions $f\colon X(d)\to\{0,1\}$.

\subsection{Many different definitions of high-dimensional expansion}
Although graph expansion has many definitions, all of which are equivalent, they each generalize to higher dimensions differently, leading to a diverse landscape of definitions.

\smallskip

The first definition studied by Linial and Meshulam \cite{LinialM2006} and by
Gromov \cite{Gromov2010} was topological, focusing on generalizations of edge expansion through coboundary maps in higher dimensions.
It was later discovered that certain bounded-degree simplicial complexes constructed by Lubotzky, Samuels and Vishne \cite{LubotzkySV2005-exphdx,LubotzkySV2005-hdx} satisfy this definition (more accurately, a variant of it called cosystolic expansion), leading to the first known family of bounded-degree complexes that satisfy Gromov's so-called ``topological overlap property'' \cite{KaufmanKL2016,DotterrerKW2018,EvraK2016}. The LSV construction comes from arithmetic quotients of Bruhat--Tits buildings, and are high-dimensional generalizations of the celebrated Lubotzky--Philips--Sarnak construction of Ramanujan expander graphs \cite{LubotzkyPS1988}.

\smallskip

Subsequent works were interested in additional properties exhibited by the LSV complexes (and others, see \cite{Li2004}) that aren't necessarily captured by the topological definitions mentioned above. For example, Dinur and Kaufman~\cite{DinurK2017} proved that the LSV complexes support so-called agreement tests that are studied in the context of probabilistically checkable proofs, and were previously known for only dense families of subsets such as the complete complex. The relevant definition for that work is spectral link-expansion, which we now describe.

Let $X$ be a $d$-dimensional simplicial complex, and let $s \in X$ be any face of dimension $\leq d-1$. The graph of the link $s$ is the graph whose vertex set consists of all elements $v \notin s$ such that $\set{v} \cup s \in X$. The edges are all pairs $\set{v,u}$ such that $\set{v,u} \cup s \in X$. A simplicial complex $X$ is a two-sided (or one-sided) \emph{link-expander} with spectral radius $\gamma$ if for every link, the non-trivial normalized eigenvalues are upper-bounded by $\gamma$ in the one-sided case, or sandwiched between $-\gamma$ and $\gamma$ in the two-sided case.

Garland \cite{Garland1973} had studied this type of spectral expansion in links, and used it to show the vanishing of the real cohomology of Bruhat--Tits buildings. Similar techniques were further explored in the work of Oppenheim~\cite{Oppenheim2018}. The notion of one-sided spectral expansion also appeared in earlier works of Kaufman, Kazhdan and Lubotzky~\cite{KaufmanKL2016}, where it was applied towards proving topological expansion.

\smallskip

A third definition, through random walks on the $i$-faces, was studied initially by Kaufman and Mass~\cite{KaufmanM2018}, where the authors defined a
combinatorial ``random-walk'' type of expansion, and showed that this type of expansion was implied by expansion of the links. This notion is concerned with
the convergence speed of high-dimensional random walks to the stationary distribution. This was refined by Dinur and Kaufman~\cite{DinurK2017}, who showed that two-sided link-expansion implies that all random walks on a high-dimensional expander converge at approximately the same speed as on the complete complex, with an error term dominated by $\gamma$.

In this paper we continue to study this two-sided definition, and show that it is in fact \emph{equivalent} to a new random-walk definition which we suggest. We find the new random-walk definition appealing because it is very natural to state, and at the same time equivalent to the powerful two-sided link-expansion definition. Moreover, the random walk definition generalizes naturally beyond simplicial complexes also to ranked posets (partially ordered sets). Finally, the random walk point of view allows for doing an analog of Fourier analysis, as we discuss below.

\smallskip

Concurrently and independently of this work, Kaufman and Oppenheim~\cite{KaufmanO2020} studied the connection between one-sided link-expansion and convergence of the relevant random walks. They showed that one-sided link-expansion (which is weaker than its two-sided variant) is already enough for getting the same conclusions about the speed of convergence of random walks as was shown for the two-sided case. This was picked up in an exciting work by Anari, Liu, Oveis Gharan and Vinzant~\cite{AnariLOV2019}, who relied on this connection to solve a longstanding open question in the area of Markov chain sampling. They showed that a certain well-studied Markov chain on bases of matroids can be viewed as a random walk on the faces of a certain one-sided link expander, thereby using the work of Kaufman and Oppenheim~\cite{KaufmanO2020} to prove convergence of this random walk.
These techniques were further developed in subsequent works that analyzed other Markov chains using their underlying high dimensional expanding structure \cite{AnariLO2024, ChenGSV2021, FengGYZ2022, ChenLV2023, ChenLV2021, AbdolazimiLO2022}.

Two independent works by \cite{BafnaHKL2022} and \cite{GurLL2022} used alternative decompositions of functions on high dimensional expanders to show hypercontractive properties of random walks in high dimensional expanders.

\subsection{Random-walk based definition of high-dimensional expanders}

\medskip

Denote the real-valued function space on $X(i)$ by $C^i := \{f\colon X(i) \to \RR \} $. There are two natural operators $U_i\colon C^i \to C^{i+1}$ and $D_{i+1}\colon C^{i+1} \to C^i$, which are defined by averaging:
\begin{align*}
U_if(s) &:= \EE_{t \sim \Dist{i}}[f(t) | t \subset s] \; \; \left (= \frac{1}{i+2} \sum_{t \subset s} f(t) \right ), \\
D_{i+1}f(t) &:= \EE_{s \sim \Dist{i+1}}[f(s) | s \supset t].
\end{align*}

The compositions $D_{i+1}U_i$ and $U_{i-1}D_i$ are Markov operators of two natural random walks on $X(i)$, the {\em upper random walk} and the {\em lower random walk}.

The first walk we consider is the \emph{upper random walk} $D_{i+1}U_i$. Given a face $t_1 \in X(i)$, we choose its neighbor $t_2$ as follows: we pick a random $s \sim \Dist{i+1}$ conditioned on $s \supset t_1$ and then choose uniformly at random $t_2 \subset s$. Note that there is a probability of $\frac{1}{i+2}$ that $t_1 = t_2$. We define the \emph{non-lazy upper random walk} by choosing $t_2 \subset s$ conditioned on $t_1 \ne t_2$. We denote the Markov operator of the non-lazy upper walk by $M^+_i$.
This operator satisfies the following equality (which can be used to define it)
\[ D_{i+1}U_i = \frac{i+1}{i+2} M_i^+ + \frac 1{i+2}I\,.
\]

A similar (but not identical) random walk on $X(i)$ is the \emph{lower random walk} $U_{i-1}D_i$. Here, given a face $t_1 \in X(i)$, we choose a neighbor $t_2$ as follows: we first choose a $r \in X(i-1), r \subseteq t_1$ uniformly at random and then choose a $t_2 \sim \Dist{i}$ conditioned on $t_2 \supset r$.

For instance, if $X$ is a graph (a $1$-dimensional simplicial complex), then the non-lazy upper random walk is the usual adjacency walk we define on a weighted graph (i.e.\ traversing from vertex to vertex by an edge). The (lazy) upper random walk has probability $\frac{1}{2}$ of staying in place, and probability $\frac{1}{2}$ of going to different adjacent vertex. The lower random walk on $V = X(0)$ doesn't depend on the current vertex: it simply chooses a vertex at random according to the distribution $\Dist{0}$ on $X(0)$.

The Up and Down operators resemble operators in several similar situations. One immediate example is the boundary and coboundary operators with real coefficients. These differ from the operators described above as they include signs according to orientation of the faces, whereas our operators ignore signs.  Stanley studied Up and Down operators in numerous combinatorial situations and the most relevant to this work is his definition of sequentially differential posets, which we discuss in \Cref{sec:intro-posets}.

The Up and Down operators also make an appearance in the Kruskal--Katona theorem. O'Donnell and Wimmer~\cite{ODonnellW2013} related the non-lazy upper and lower random walks (the two random walks are identical in their setting) to the Kruskal--Katona theorem, and used this connection to construct an optimal net for monotone Boolean functions.
\medskip

We are now ready to give our definition of a high-dimensional expander in terms of these walks.

\begin{definition}[High-Dimensional Expander] \label{def:intro-HDX}
Let $\gamma < 1$, and let $X$ be a $d$-dimensional simplicial complex. We say $X$ is a \emph{$\gamma$-high-dimensional expander} (or \emph{$\gamma$-HDX}) if for all $0 \leq i \leq d-1$, the non-lazy upper random walk is $\gamma$-similar to the lower random walk in operator norm in the following sense:
\[ \norm{M^+_i - U_{i-1}D_i} \leq \gamma. \]
\end{definition}

In the graph case, this coincides with the definition of a $\gamma$-two-sided spectral expander: recall that the lower walk on $X(0)$ is by choosing two vertices $v_1,v_2 \in X(0)$ independently. Thus $\norm{M^+_i - U_{i-1}D_i}$ is the second eigenvalue of the adjacency random walk in absolute value. For $i \geq 1$, we cannot expect the upper random walk to be similar to choosing two independent faces in $X(i)$, since the faces always share a common intersection of $i$ elements. Instead, our definition asserts that traversing through a common $(i+1)$-face is similar to traversing through a common $(i-1)$-face.

We show that this new definition is equivalent to the aforementioned definition of two-sided link expanders for constant dimension $d$, thus giving these high-dimensional expanders a new characterization.
\begin{theorem}[Equivalence between high-dimensional expander definitions] \label{thm:equivalence-main}
  Let $X$ be a $d$-dimensional simplicial complex.
  \begin{enumerate}
    \item If $X$ is a $\gamma$-two-sided link expander then X is a $\gamma$-HDX according to the definition we give.
    \item If $X$ is a $\gamma$-HDX then X is a $3(d+1)\gamma$-two-sided link expander.
  \end{enumerate}
\end{theorem}

\smallskip

Through this characterization of high-dimensional expansion, we decompose real-valued functions $f \colon X(i) \to \RR$ in an approximately orthogonal decomposition that respects the upper and lower random walk operators. We also give an example that the $O(d)$ factor in the second item of the equivalence is tight in \cref{sec:tightness-of-equivalnce-theorem}. We stress that the second direction of this equivalence is non-trivial only in the regime where $\gamma < \frac{1}{3(d+1)}$ (so when the dimension grows our definition becomes weaker).

\subsection[Decomposition of functions on X(i)]{Decomposition of functions on $X(i)$}

We begin by recalling the \emph{classical} decomposition of functions over the Boolean hypercube. Every function on the Boolean cube $\{0,1\}^n$ has a unique representation as a multilinear polynomial. In the case of the Boolean hypercube, it is convenient to view the domain as $\{1,-1\}^n$, in which case the above representation gives the \emph{Fourier expansion} of the function. The multilinear monomials can be partitioned into ``levels'' according to their degree, and this corresponds to an orthogonal decomposition of a function into a sum of its homogeneous parts, $f = \sum_{i=0}^{\deg f} f^{=i}$, a decomposition which is a basic concept in Boolean function analysis.

These concepts have known counterparts for the \emph{complete complex}, which consists of all subsets of $[n]$ of size at most $d+1$, where $d+1 \leq n/2$. The facets (top-level faces) of this complex comprise the \emph{slice} (as it is known to computer scientists) or the \emph{Johnson scheme} (as it is known to coding theorists), whose spectral theory has been elucidated by Dunkl~\cite{Dunkl1976}. For $|t| \leq d+1$, let $y_t(s) = 1$ if $t \subseteq s$ and $y_t(s) = 0$ otherwise (these are the analogs of monomials). Every function on the complete complex has a unique representation as a linear combination of monomials $\sum_t \tilde{f}(t) y_t$ (of various degrees) where the coefficients $\tilde{f}(t)$ satisfy the following \emph{harmonicity} condition:\footnote{Ryan O'Donnell has suggested the name \emph{zero-flux}, since harmonicity usually refers to vanishing of the Laplacian.} for all $i \leq d$ and all $t \in X(i)$,
\[
 \sum_{a \in [n] \setminus t} \tilde{f}(t \cup \{a\}) = 0.
\]
(If we identify $y_t$ with the product $\prod_{i \in t} x_i$ of ``variables'' $x_i$, then harmonicity of a multilinear polynomial $P$ translates to the condition $\sum_{i=1}^n \frac{\partial P}{\partial x_i} = 0$.) As in the case of the Boolean cube, this unique representation allows us to orthogonally decompose a function into its homogeneous parts (corresponding to the contribution of monomials $y_t$ with fixed $|t|$), which plays the same essential part in the complete complex as its counterpart does in the Boolean cube. Moreover, this unique representation allows extending a function from the ``slice'' to the Boolean cube (which can be viewed as a superset of the ``slice''), thus implying further results such as an invariance principle~\cite{FilmusKMW2018,FilmusM2019}.

\medskip

We generalize these concepts for complexes satisfying a technical condition we call \emph{properness}, which is satisfied when the Markov operators of the upper random walks $DU$ have full rank, or equivalently that $Ker U_i = \{0\}$ for all $i$). This holds for both the complete complex and high-dimensional expanders. We show that the results on unique decomposition for the complete complex hold for arbitrary proper complexes (\Cref{thm:decomposition}) with a generalized definition of harmonicity which incorporates the distributions $\Dist{i}$. In contrast to the case of the complete complex (and the Boolean cube), in the case of high-dimensional expanders the homogeneous parts are only approximately orthogonal.

The homogeneous components in our decomposition are ``approximate eigenfunctions'' of the Markov operators defined above, and this allows us to derive an approximate identity relating the total influence (defined through the random walks) to the norms of the components in our decomposition, in complete analogy to the same identity in the Boolean cube (expressing the total influence in terms of the Fourier expansion).
\begin{theorem}[Decomposition theorem for functions on HDX] \label{thm:decomposition-main}
Let $X$ be a proper $d$-dimensional simplicial complex.
Every function $f\colon X(\ell)\to\R$, for $\ell \leq d$, can be written uniquely as $f = f_{-1} + \cdots + f_\ell$ such that:
\begin{itemize}
\item $f_i$ is a linear combination of the functions $y_s(t) = 1_{[t \supseteq s]}$ for $s \in X(i)$, i.e.\ $y_s(t) = 1$ when $t \supseteq s$.
\item Interpreted as a function on $X(i)$, $f_i$ lies in the kernel of the Markov operator of the lower random walk $U_{i-1} D_{i}$.
\end{itemize}
If $X$ is furthermore a $\gamma$-high-dimensional expander, then the above decomposition is an almost orthogonal decomposition in the following sense:
\begin{itemize}
\item For $i \neq j$, $|\langle f_i, f_j \rangle| \approx 0$.
\item $\|f\|^2 \approx \|f_{-1}\|^2 + \cdots + \|f_\ell\|^2$.
\item If $\ell < k$ then $D_{\ell+1}U_{\ell} f_i \approx (1-\frac{i+1}{\ell+2}) f_i$, and in particular $\langle DUf,f \rangle \approx \sum_{i=-1}^\ell (1-\frac{i+1}{\ell+2}) \|f_i\|^2$.
\end{itemize}
(For an exact statement in terms of the dependence of error on $\gamma$ and $d$, see \cref{thm:approximate-orthogonality-hdx}).
\end{theorem}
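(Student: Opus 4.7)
The plan is to first establish the exact unique decomposition for arbitrary proper complexes, then use the HDX hypothesis to promote it into an approximate eigenspace decomposition.

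\textbf{Step 1 — Unique decomposition.} Properness means $D_j U_{j-1}$ is invertible on $C^{j-1}$, so $U_{j-1}\colon C^{j-1}\to C^j$ is injective. I would show $C^j = \ker(U_{j-1} D_j) \oplus U_{j-1}(C^{j-1})$ as follows: given $f\in C^j$, set $h = (D_j U_{j-1})^{-1} D_j f$ and $f_j = f - U_{j-1} h$; then $D_j f_j = 0$, hence $U_{j-1}D_j f_j = 0$. Uniqueness: if $g + U_{j-1}h = 0$ with $U_{j-1}D_j g = 0$, applying $D_j$ gives $D_j U_{j-1} h = -D_j g$; since $U_{j-1}D_j g = 0$ and $U_{j-1}$ is injective, $D_jg=0$, so $D_jU_{j-1}h=0$ and $h=0$. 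Iterating this splitting from level $\ell$ down yields $f = \sum_{i=-1}^\ell U^{\ell-i} g_i$ with $g_i \in \ker(U_{i-1}D_i)$. Rescaling so that $f_i = \binom{\ell+1}{i+1} U^{\ell-i}g_i$ recovers the sum-of-monomials form $f_i = \sum_{s \in X(i)} g_i(s) y_s$ claimed in the theorem.

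\textbf{Step 2 — Approximate eigenvalue property.} Now use the HDX hypothesis $\|D_{j+1}U_j - r_j I - \delta_j U_{j-1}D_j\| \leq \gamma$ with $r_j = \tfrac{1}{j+2}$, $\delta_j = \tfrac{j+1}{j+2}$. For $g_i \in \ker(U_{i-1}D_i)$, let $h_j = U^{j-i}g_i$. I would prove by induction on $j \geq i$ that
\[ D_{j+1}U_j h_j = \lambda_j^{(i)} h_j + \epsilon_j^{(i)}, \qquad \lambda_j^{(i)} = \frac{j-i+1}{j+2}, \qquad \|\epsilon_j^{(i)}\| = O((j-i+1)\gamma)\|g_i\|. \]
The base case $j=i$ is $D_{i+1}U_i g_i = r_i g_i + E_i g_i$ with $\|E_i g_i\| \le \gamma \|g_i\|$. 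For the inductive step,
\[ D_{j+2}U_{j+1}(U_j h_j) = r_{j+1} U_j h_j + \delta_{j+1} U_j D_{j+1} U_j h_j + E_{j+1}(U_j h_j), \]
and substituting the inductive hypothesis inside $D_{j+1}U_j h_j = \lambda_j^{(i)} h_j + \epsilon_j^{(i)}$ gives the recurrence $\lambda_{j+1}^{(i)} = r_{j+1} + \delta_{j+1}\lambda_j^{(i)}$, which telescopes to $\tfrac{j-i+2}{j+3}$. The error picks up an additive $\gamma\|h_{j+1}\|$ each step (since $\delta_{j+1}\le 1$ and $\|U\|\le 1$), so it stays $O(j\gamma)\|g_i\|$.

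\textbf{Step 3 — Approximate orthogonality and norm identities.} The operator $T_\ell = D_{\ell+1}U_\ell$ is self-adjoint with respect to $\Pi_\ell$. For $i\neq j$,
\[ (\lambda_i^{(\ell)} - \lambda_j^{(\ell)})\langle f_i, f_j\rangle = \langle f_i, \tilde\epsilon_j\rangle - \langle \tilde\epsilon_i, f_j\rangle, \]
and since the eigenvalue gap is at least $\tfrac{1}{\ell+2}$, the cross-terms are controlled once I relate $\|\tilde\epsilon_i\|$ to $\|f_i\|$. The Pythagorean-style identity $\|f\|^2 \approx \sum_i \|f_i\|^2$ then follows by expanding $\langle \sum_i f_i, \sum_j f_j\rangle$ and absorbing cross-terms, and the formula $\langle DUf, f\rangle \approx \sum_i (1-\tfrac{i+1}{\ell+2})\|f_i\|^2$ follows by expanding $\langle T_\ell \sum_i f_i, \sum_j f_j\rangle$ using both the approximate eigenvector relation and the cross-term bound.

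\textbf{Main obstacle.} The key quantitative hurdle is converting the $\|g_i\|$-bounds on the errors into $\|f_i\|$-bounds, since $U$ is a genuine contraction on $\ker(UD)$. To do this I would prove a companion norm identity by computing $\|U_j h_j\|^2 = \langle h_j, D_{j+1}U_j h_j \rangle \approx \lambda_j^{(i)} \|h_j\|^2$, so that telescoping gives $\|f_i\|^2 \approx \binom{\ell+1}{i+1}^2 \prod_{k=i}^{\ell-1} \lambda_k^{(i)} \cdot \|g_i\|^2$ up to $O(\ell\gamma)$ relative error. One must then verify that this leading product is bounded away from $0$ (which it is, as a product of explicit rationals) so that dividing is harmless. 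Once this lemma is in hand, the rest of the argument is straightforward bookkeeping of error terms polynomial in $\ell$ and linear in $\gamma$.
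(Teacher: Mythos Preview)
Your Steps 1, 2, and the ``main obstacle'' (norm conversion) track the paper's argument closely: the paper proves the decomposition via $C^{j} = \ker D_j \oplus \im U_{j-1}$ and dimension counting (\cref{thm:decomposition}), proves the approximate-eigenvector relation by an induction that commutes $D$ past $U^j$ one level at a time (\cref{clm:generalized-D-U-relation}), and converts $\|h_i\|$ to $\|f_i\|$ exactly as you suggest (\cref{lem:approximate-U}). One cosmetic slip: your ``rescaling so that $f_i = \binom{\ell+1}{i+1} U^{\ell-i}g_i$'' would break $f = \sum_i f_i$; you should simply set $f_i = U^{\ell-i}g_i$ and note that this already equals $\binom{\ell+1}{i+1}^{-1}\sum_s g_i(s)y_s$, which is a linear combination of the $y_s$.

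Your Step 3 is a genuinely different route. The paper proves approximate orthogonality (\cref{lem:approximate-orthogonality}) by a level-reduction induction: write $\langle f_i,f_j\rangle = \langle DU^{\ell-i}h_i, U^{(\ell-1)-j}h_j\rangle$, use the $DU^j$ relation to replace $DU^{\ell-i}h_i$ by $r^\ell_{\ell-i}U^{(\ell-1)-i}h_i$ plus error, and recurse down to level $\ell-1$. You instead exploit that $T_\ell = D_{\ell+1}U_\ell$ is self-adjoint and use the eigenvalue gap $|\lambda_i-\lambda_j|\geq \tfrac{1}{\ell+2}$ to extract $\langle f_i,f_j\rangle$ from the approximate eigenvector errors directly at level $\ell$. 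Your argument is cleaner and more conceptual, but it forces you to establish the norm conversion $\|f_i\| \asymp \|g_i\|$ \emph{before} orthogonality (since your error $\tilde\epsilon_i$ is bounded in terms of $\|g_i\|$), whereas the paper first proves orthogonality in the $\|h_i\|\|h_j\|$ scale and converts afterward. Both orderings work; just be aware that your spectral argument needs $\ell < d$ so that $T_\ell$ exists, which matches the restriction already present in the third bullet of the theorem.
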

Instead of requiring \(f_i\) to lie in the kernel of the walk \(UD\), they take \(f_i\) to be in the projection of the orthogonal complement of \(\operatorname{Span}\set{f_{-1},...,f_{i-1}}\). As a consequence, their decomposition is perfectly orthogonal and not just approximately orthogonal. On the flip side, the  components $f_i$ live in slightly less-nicely-defined spaces.
\prahladhuvacha{Alternate phrasing}
A similar decomposition theorem was, concurrently and independently, proved by Kaufman and Oppenheim~\cite[Theorem~1.3]{KaufmanO2020}, for one-sided spectral high-dimensional expanders. Our near-orthogonal decomposition holds only for two-sided high-dimensional expanders, while they prove an interesting norm-decomposition instead of an orthogonal decomposition, writing the norm of $f$ as the approximate sum of norms of projections of $f$ to the spaces of functions on $X(j)$. This is interesting especially in light of the fact that one-sided expanders aren't necessarily proper (for example, the complete \((d+1)\)-partite simplicial complex is not proper). In particular, there is no known way to write $f$ itself as a sum of components in analogy to \Cref{thm:decomposition-main}.

Subsequent to the earlier version of our result~\cite{DiksteinDFH2018}, Kaufman and Oppenheim~\cite[Theorem~1.5]{KaufmanO2020} extended their decomposition theorem to two-sided high-dimensional expanders. This decomposition is similar to ours but not identical. Similar to our near-orthogonal decomposition, they give a decomposition of $f$ into orthogonal components related to the spaces of functions on $X(i)$, and satisfying a similar ``nearly eigenvector'' equality for the upper walk operator. Instead of requiring \(f_i\) to lie in the kernel of the walk \(UD\), they take \(f_i\) to be in the projection of the orthogonal complement of \(\operatorname{Span}\set{f_{-1},\dots,f_{i-1}}\). As a consequence, their decomposition is perfectly orthogonal and not just approximately orthogonal. On the flip side, the components $f_i$ live in slightly less-nicely-defined spaces.

\prahladhuvacha{End of alternate phrasing}

Subsequent to our work, Alev, Jeronimo and Tulsiani~\cite{AlevJT2019} used our techniques to analyze more general random walks, which they call \emph{swap walks}. The same walks were analyzed independetly by different techniques in the work of Dikstein and Dinur~\cite{DiksteinD2019} under the name \emph{complement walks}.

\subsection{Decomposition of functions on posets} \label{sec:intro-posets}

The decomposition we suggest in this paper holds for the more general setting of graded partially ordered sets (posets): A finite \emph{graded poset} $(X,\leq, \rho)$ is a poset $(X,\leq)$ equipped with a \emph{rank} function $\rho\colon X \to \{-1\}\cup \mathbb{N}$ that respects the order, i.e.\ if $x \leq y$ then $\rho(x) \leq \rho(y)$. Additionally, if $y$ is minimal with respect to elements that are greater than $x$ (i.e. $y$ \emph{covers} $x$), then $\rho(y) = \rho(x) + 1$. Denoting $X(i) = \rho^{-1}(i)$ , we can partition the poset as follows:
\[ X = X(-1) \cup X(0) \cup \dots \cup X(d). \]
We consider graded posets with a unique minimum element $\emptyset \in X(-1)$.

Every simplicial complex is a graded poset. Another notable example is the \emph{Grassmann poset} $\Gr_q(n,d)$ which consists of all subspaces of $F_q^n$ of dimension at most $d+1$. The order is the containment relation, and the rank is the dimension minus one, $\rho(W) = \dim(W) - 1$. The Grassmann poset was recently studied in the context of proving the 2-to-1 games conjecture~\cite{KhotMS2017, DinurKKMS2018-grassman, DinurKKMS2018-2to1, KhotMS2018}, where a decomposition of functions of the Grassmann poset was useful. Such a decomposition is a special case of the general decomposition theorem in this paper.

Towards our goal of decomposing functions on graded posets, we generalize the notion of random walks on $X(i)$ as follows: A \emph{measured poset} is a graded poset with a sequence of measures $\Distv = (\Dist{-1},\ldots,\Dist{d})$ on the different levels $X(i)$, that allow us to define operators $U_i,D_{i+1}$ similar to the simplicial complex case (for a formal definition see \cref{sec:eposet}). The \emph{upper random walk} defined by the composition $D_{i+1}U_i$ is the walk where we choose two consecutive $t_1, t_2 \in X(i)$ by choosing $s \in X(i+1)$ and then $t_1, t_2 \leq s$ independently. The \emph{lower random walk} $U_{i-1}D_i$ is the walk where we choose two consecutive $t_1, t_2 \in X(i)$ by choosing $r \in X(i-1)$ and then $t_1, t_2 \geq r$ independently.

Stanley studied a special case of a measured poset that is called a \emph{sequentially differential poset}~\cite{Stanley1988}. This is a poset where
\begin{equation}\label{eq:SD-intro} D_{i+1}U_i - r_iI - \delta_i U_{i-1}D_i = 0, \end{equation}
for all $0 \leq i \leq d$ and some constants $r_i, \delta_i \in \RR$. There are many interesting examples of sequentially differential posets, such as the Grassmann poset and the complete complex. \cref{def:intro-HDX} of a high-dimensional expander resembles an approximate version of this equation:
in a simplicial complex, one may check that the non-lazy version is $\frac{i+1}{i+2} M^i_+ = D_{i+1}U_i - \frac{1}{i+2}I$. Thus
\[ \norm{M^+_i - U_{i-1}D_i} \leq \gamma \]
is equivalent to
\[ \norm{D_{i+1}U_i -\frac{1}{i+2}I - \frac{i+1}{i+2} U_{i-1}D_i} \leq \frac{i+1}{i+2}\gamma, \]
which suggests a relaxation of \eqref{eq:SD-intro} to an \emph{expanding poset} (eposet).

\begin{definition}[Expanding Poset (eposet)] \label{def:intro-eposet}
Let $\vec{r}, \vec{\delta} \in \RR_{\geq 0}^{k}$, and let $\gamma < 1$. We say $X$ is an \emph{$(\vec{r},\vec{\delta},\gamma)$-expanding poset} (or \emph{$(\vec{r},\vec{\delta},\gamma)$-eposet}) if for all $0 \leq i \leq k-1$:
\begin{equation}\label{eq:intro-asd}
\norm{D_{i+1} U_i - r_i I - \delta_i U_{i-1} D_i} \leq \gamma.
\end{equation}
\end{definition}

As we can see, $\gamma$-HDX is also an $(\vec{r},\vec{\delta},\gamma)$-eposet, for $r_i = \frac{1}{i+2},\delta_i = \frac{i+1}{i+2}$. In \cref{lem:eposet-fix-parameters} we prove that the converse is also true: every simplicial complex that is an $(\vec{r}, \vec{\delta}, \gamma)$-eposet is an $O(\gamma)$-HDX, under the assumption that the probability $\Pr_{t_1, t_2 \sim U_{i-1}D_i}[t_1=t_2]$ is small.

It turns out that eposets are the correct setup to generalize our decomposition for functions on simplicial complexes: in all eposets we can uniquely decompose functions $f \colon X(i) \to \RR$ to
\[ f = \sum_{j=-1}^i f^{=j}, \]
where the functions $f^{=j}$ are ``approximate eigenvectors'' of $D_{i+1}U_i$. Furthermore, this decomposition is ``approximately orthogonal''. Fixing $i$, the error in both approximations is $O(\gamma)$.

\subsection{An FKN theorem}
Returning to simplicial complexes, as a demonstration of the power of this setup, we generalize the fundamental result of Friedgut, Kalai, and Naor~\cite{FriedgutKN2002} on Boolean functions almost of degree~1.
We view this as a first step toward developing a full-fledged theory of Boolean functions on high-dimensional expanders.

An easy exercise shows that a Boolean degree~1 function on the Boolean cube is a \emph{dictator}, that is, depends on at most one coordinate; we call this the \emph{degree one theorem} (the easy case of the FKN Theorem with zero-error). The FKN theorem, which is the robust version of this degree one theorem,  states that a Boolean function on the Boolean cube which is \emph{close} to a degree~1 function is in fact close to a dictator, where closeness is measured in $L_2$.

The degree one theorem holds for the complete complex as well. The third author~\cite{Filmus2016-fkn} has extended the FKN theorem to the complete complex. Surprisingly, the class of approximating functions has to be extended beyond just dictators.

We prove a degree one theorem for arbitrary proper complexes, and an FKN theorem for high-dimensional expanders. In contrast to the complete complex, Boolean degree~1 functions on arbitrary complexes correspond to \emph{independent sets} rather than just single points, and this makes the proof of the degree one theorem non-trivial.
\begin{definition}[1-skeleton]
The \emph{$1$-skeleton} of a simplicial complex $X$ is the graph whose vertices are $X(0)$, the $0$-faces of the complex, and whose edges are $X(1)$, the $1$-faces of the complex.
\end{definition}
\begin{claim}[Degree one theorem on simplicial complexes] \label{thm:exact-fkn-main}
Suppose that $X$ is a proper $d$-dimensional simplicial complex, for $d \geq 2$, whose $1$-skeleton is connected. A function $f \in C^d$ is a Boolean degree~$1$ function if and only if there exists an independent set $I$ (in the $1$-skeleton of $X$) such that $f$ is the indicator of intersecting $I$ or of not intersecting $I$.
\end{claim}

Our proof of the FKN theorem for high-dimensional expanders is very different from existing proofs. It follows the same general plan as recent work on the biased Kindler--Safra theorem~\cite{DinurFH2019}. The idea is to view a high-dimensional expander as a convex combination of small sub-complexes, each of which is isomorphic to the complete $k$-dimensional complex on $O(k)$ vertices. We can then apply the known FKN theorem separately on each of these, and deduce that our function is approximately well-structured on each sub-complex. Finally, we apply the agreement theorem of Dinur and Kaufman~\cite{DinurK2017} to show that the same holds on a global level.

\begin{theorem}[FKN theorem on HDX (informal)]\label{thm:fkn-main}
Let $X$ be a $d$-dimensional $\gamma$-HDX. If $F\colon X(d) \to \{0,1\}$ is $\epsilon$-close (in $L_2^2$) to a degree~$1$ function then there exists a degree~$1$ function $g$ on $X(d)$ such that $\Pr[F \neq g] = O_{\gamma,d}(\epsilon)$.
\end{theorem}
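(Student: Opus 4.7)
The plan is to follow the local-to-global strategy flagged in the introduction, reducing the HDX setting to the known slice-FKN theorem via restrictions to small charts and then stitching the local conclusions together with an agreement theorem. Concretely, one first exhibits the distribution $\Dist{d}$ on $X(d)$ as a convex combination of distributions supported on small sub-complexes $Y_\alpha \subseteq X$, each isomorphic to the complete $d$-dimensional complex on $O(d)$ vertices. Such a covering can be extracted from $X$ by sampling a low-dimensional seed face and letting $Y_\alpha$ be an appropriate neighbourhood of it; the HDX hypothesis guarantees that the marginal on top faces matches $\Dist{d}$ up to negligible error.

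Given such a covering, I would restrict $F$ to each chart $Y_\alpha$ to obtain a Boolean function $F_\alpha$ on a complete complex, and restrict the nearby degree-$1$ approximator $h$ to a degree-$1$ function $h_\alpha$. Because restriction preserves degree and the covering averages to the original distribution, $\EE_\alpha \|F_\alpha - h_\alpha\|^2 = O(\epsilon)$; Markov's inequality then forces $F_\alpha$ to be $O(\sqrt{\epsilon})$-close to a degree-$1$ function for all but an $O(\sqrt{\epsilon})$-fraction of charts. On each good chart I invoke the slice FKN theorem of \cite{Filmus2016-fkn} to produce a Boolean degree-$1$ function $g_\alpha$ on $Y_\alpha$ with $\Pr[F_\alpha \neq g_\alpha] = O_d(\sqrt{\epsilon})$; on the remaining charts one simply sets $g_\alpha$ to any Boolean default, at a cost subsumed by the Markov loss.

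The third step is the globalization. Since each $g_\alpha$ is almost-everywhere equal to $F$ and both are Boolean, the $g_\alpha$'s must agree with one another on the overlaps $Y_\alpha \cap Y_\beta$ with high probability, which is precisely the hypothesis consumed by the agreement theorem of Dinur and Kaufman \cite{DinurK2017}. Feeding $\{g_\alpha\}$ into that theorem yields a single global Boolean function $g\colon X(d) \to \{0,1\}$ agreeing with $g_\alpha$ on almost every chart, and hence satisfying $\Pr[F \neq g] = O_{\gamma,d}(\epsilon)$ after choosing the chart scale tightly enough that the Markov and agreement losses combine back to $O(\epsilon)$ rather than $O(\sqrt{\epsilon})$. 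That $g$ is itself degree-$1$ on $X(d)$ follows from the uniqueness half of \cref{thm:decomposition-main}: a function that is degree-$1$ when restricted to almost every chart must globally lie in the span of $\one$ and the $y_v$'s for $v \in X(0)$.

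The main obstacle I anticipate is converting the \emph{average} $L_2^2$-approximation guarantee on each chart into a genuine agreement-test score that the Dinur--Kaufman theorem can consume. A priori, $g_\alpha$ and $g_\beta$ might differ widely on their common facets while both remain close to $F$; upgrading $L_2^2$-closeness to almost-everywhere equality on overlaps requires leveraging the Booleanness of both the $g_\alpha$'s and $F$, together with the spectral gap afforded by the $\gamma$-HDX hypothesis. This Booleanness-driven rounding step, which mirrors the analogous move in the biased Kindler--Safra paper \cite{DinurFH2019}, is where I expect the argument to be most delicate.
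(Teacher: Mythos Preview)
Your high-level plan---restrict to small charts, apply slice FKN locally, then stitch with the Dinur--Kaufman agreement theorem---matches the paper's. But there is a real gap in what you feed to the agreement theorem and in how you conclude that the global $g$ has degree~$1$.

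The agreement theorem of \cite{DinurK2017} (stated in the paper as \cref{thm:agree-hdx}) takes as input local functions $f_s\colon s \to \Sigma$ on the \emph{vertices} of each face, not Boolean functions on the top faces of a sub-complex; its output is a single function $X(0)\to\Sigma$. The paper exploits exactly this format: the charts are simply the faces $t\in X(2k)$, and on each one slice FKN returns a Boolean dictator $g_t(y)=\sum_{i\in t} d_t(i)\,y_i$ with coefficient function $d_t\colon t\to\{0,1,\tfrac{1}{k+1},\tfrac{1}{k+1}-1\}$. It is the ensemble $\{d_t\}_{t\in X(2k)}$ that is fed to the agreement theorem, yielding a global $d\colon X(0)\to\Sigma$, and then $g(y):=\sum_i d(i)\,y_i$ is \emph{manifestly} degree~$1$. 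Your plan to stitch the Boolean $g_\alpha$'s directly and then deduce degree~$1$ a posteriori from \cref{thm:decomposition-main} does not go through: uniqueness of the decomposition says nothing about a function that is degree~$1$ only on almost every restriction, and the agreement theorem as stated does not even accept face-valued ensembles.

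Two smaller corrections. The Markov detour to $\sqrt{\epsilon}$ is unnecessary: the paper applies slice FKN on \emph{every} $t$ with its own parameter $\epsilon_t$ to get $\Pr[F|_t\neq g_t]=O(\epsilon_t)$, and then averages over $t$ to recover $O(\epsilon)$ with no square-root loss. And the obstacle you flag---upgrading $L_2^2$-closeness to an agreement score---is handled by a second scale: slice FKN is also run on $4k$-faces $u\supset t_1,t_2$ to produce coefficient functions $e_u$, and since $g_{t_j}$ and $h_u|_{t_j}$ are both Boolean dictators, either they coincide or $\EE[(h_u|_{t_j}-g_{t_j})^2]=\Omega(1)$, which converts the $L_2^2$ bound into the required disagreement probability bound for $\{d_t\}$.
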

\subsection*{Paper organization}
We describe our general setup in \cref{sec:basic-setup}. We describe the property of properness and its implications --- a unique representation theorem and decomposition of functions into homogeneous parts --- in \cref{sec:decomposition}.  We introduce our definition of high-dimensional expanders in \cref{sec:HD-expanders-def}. In \cref{sec:HD-expanders} we show equivalence between our definition and the earlier one of two-sided link expanders.

In \cref{sec:eposet} we define expanding posets, and describe our decomposition of functions on posets. We prove almost orthogonality of the decomposition (\cref{thm:decomposition}) in this more general setting.
The full decomposition theorem for the (interesting) special case of simplicial complexes is explicitly stated in \cref{thm:approximate-orthogonality-hdx}.

We prove our degree one theorem in \cref{sec:exact-fkn}, and our FKN theorem in \cref{sec:fkn}.

\cref{thm:decomposition-main} is a combination of \cref{thm:decomposition} (first two items) and \cref{thm:approximate-orthogonality-hdx} (other three items).
\cref{thm:equivalence-main} is a restatement of \cref{thm:equivalence}.
\cref{thm:exact-fkn-main} is a restatement of \cref{thm:hard-fkn}. \cref{thm:fkn-main} is a restatement of \cref{thm:fkn-hdx}.

\section{Basic setup} \label{sec:basic-setup}

A $d$-dimensional simplicial complex $X$ is a non-empty collection of sets of size at most $d+1$ which is closed under taking subsets. We call a set of size $i+1$ an \emph{$i$-dimensional face} (or $i$-face for short), and denote the collection of all $i$-faces by $X(i)$. A $d$-dimensional simplicial complex $X$ is \emph{pure} if every $i$-face is a subset of some $d$-face. We will only be interested in pure simplicial complexes.

Let $X$ be a pure $d$-dimensional simplicial complex. Given a probability distribution $\Dist{d}$ on its top-dimensional faces $X(d)$, we define a distribution $\Distv = (\Dist{d},\ldots,\Dist{-1})$ over sequences $s_d \supsetneq s_{d-1} \supsetneq \ldots \supsetneq s_{-1} = \emptyset$ where $s_i \in X(i)$ as follows. We sample $s_d = \{v_0,v_1,...,v_d \} \in X(d)$ according to $\Dist{d}$, and order its vertices uniformly at random $(v_0,v_1,...,v_d)$. Then we set $s_i = \{v_0,v_1,...,v_i\}$. The distribution $\Dist{i}(s_i)$ is the probability of sampling $s_i$.

Let $C^i := \{ f\colon X(i)\to\R\} $ be the space of functions on $X(i)$. It is convenient to define $X(-1):=\{\emptyset\}$, and we also let $C^{-1}:=\R$. We turn $C^i$ to an inner product space by defining $\langle f,g \rangle := \EE_{\Dist{i}}[fg]$ and the associated norm $\|f\|^2 := \EE_{\Dist{i}}[f^2]$.

For $-1 \leq i < d$, we define the {\bf Up} operator $U_i\colon C^i \to C^{i+1}$ as follows:\footnote{The Up and Down operators differ from the boundary and coboundary operators of algebraic topology, which operate on linear combinations of \emph{oriented} faces.}
\[
 U_ig(s) := \frac{1}{i+2} \sum_{x \in s} g(s \setminus \{x\}) = \E_{t \subset s}[g(t)]\;,
\]
where $t$ is obtained from $s$ by removing a random element. Note that if $s \sim \Dist{i+1}$ then $t \sim \Dist{i}$.

Similarly, we define the {\bf Down} operator $D_{i+1}\colon C^{i+1} \to C^i$ for $-1 \leq i < d$ as follows:
\[
 D_{i+1} f(t) := \frac{1}{(i+2)\cdot\Dist{i}(t)} \sum_{x \notin t\colon t \cup \{x\} \in X(i+1)} \Dist{i+1}(t \cup \{x\}) \cdot f(t \cup \{x\}) = \E_{s \supset t}[f(s)]\;,
\]
where $s$ is obtained from $t$ by conditioning the vector $\Distv$ on $\Dist{i} = t$ and taking the $(i+1)$th component.

The operators $U_i,D_{i+1}$ are adjoint to each other. Indeed, if $f \in C^{i+1}$ and $g \in C^i$ then
\begin{equation}\nonumber
\langle g,D_{i+1}f \rangle = \E_{(t,s) \sim (\Dist{i},\Dist{i+1})} [g(t) f(s)] = \langle U_ig,f \rangle\;.
\end{equation}
When the domain is understood, we will use $U,D$ instead of $U_i,D_{i+1}$. This will be especially useful when considering powers of $U,D$. For example, if $f\colon X(i) \to \R$ then
\[
 U^t f \equiv U_{i+t-1} \ldots U_{i+1} U_i f.
\]

Given a face $s \in X$, the function $y_s$ is the indicator function of containing $s$. Our definition of the Up operator guarantees the correctness of the following lemma.

\begin{lemma} \label[lemma]{lem:ys-formula}
 Let $s \in X(i)$. We can think of $y_s$ as a function in $C^j$ for all $j \geq i$. Using this convention, $U_j y_s = (1-\frac{i+1}{j+2}) y_s$.
\end{lemma}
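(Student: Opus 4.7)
The plan is to prove this by direct computation using the definition of $U_j$, evaluating $U_j y_s$ at an arbitrary $(j+1)$-face and splitting into two cases based on whether that face contains $s$.

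First I would fix $s \in X(i)$ and an arbitrary $u \in X(j+1)$, and unfold the definition:
\[
 U_j y_s(u) = \frac{1}{j+2} \sum_{x \in u} y_s(u \setminus \{x\}).
\]
Since $|u \setminus \{x\}| = j+1 \geq i+1 = |s|$, the indicator $y_s(u \setminus \{x\})$ equals $1$ precisely when $s \subseteq u \setminus \{x\}$, which is equivalent to $s \subseteq u$ and $x \notin s$.

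Next I would split into cases. If $s \not\subseteq u$, then every term in the sum vanishes, so $U_j y_s(u) = 0$; and since also $y_s(u) = 0$ in this case, the identity $U_j y_s(u) = (1 - \tfrac{i+1}{j+2}) y_s(u)$ holds trivially. If instead $s \subseteq u$, then $y_s(u) = 1$, and the number of $x \in u$ satisfying $x \notin s$ is $|u| - |s| = (j+2) - (i+1) = j - i + 1$. Therefore
\[
 U_j y_s(u) = \frac{j - i + 1}{j+2} = 1 - \frac{i+1}{j+2} = \left(1 - \frac{i+1}{j+2}\right) y_s(u),
\]
as desired. Combining the two cases yields $U_j y_s = (1 - \tfrac{i+1}{j+2}) y_s$ pointwise on $X(j+1)$.

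There is no real obstacle here — the statement is essentially a combinatorial identity about the fraction of $(j+1)$-supersets of a fixed $(i+1)$-set that a uniformly random element of a given $(j+2)$-superset belongs to, and it falls out of the definition of $U_j$ as a uniform down-average. The only thing to be careful about is the convention that $y_s$ is interpreted on whatever level $C^j$ is needed, and that the pure simplicial complex structure ensures $u \setminus \{x\} \in X(j)$ whenever $u \in X(j+1)$, so the sum is well defined.
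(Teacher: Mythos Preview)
Your proof is correct and follows essentially the same approach as the paper: both compute $U_j y_s$ directly from the definition and count, for a face $t$ containing $s$, the number of elements $x \in t$ whose removal still leaves a superset of $s$. The paper compresses your two cases into the single identity $(U_j y_s)(t) = \frac{|t|-|s|}{j+2} y_s(t)$, but the content is identical.
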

\begin{proof}
 Direct calculation shows that
\[
 (U_j y_s)(t) = \frac{1}{j+2} \sum_{x \in t} y_s(t \setminus \{x\}) = \frac{|t|-|s|}{j+2} y_s(t)\;,
\]
 and so $U_j y_s = (1-\frac{i+1}{j+2}) y_s$.
\end{proof}

For $0 \leq i \leq k$, the space of harmonic functions on $X(i)$ is defined as
\[ H^i := \ker D_i = \{ f \in C^i : D_if=0\}\;. \]
We also define $H^{-1} := C^{-1}=\R$. We are interested in decomposing $C^k$, so let us define for each $-1 \leq i \leq k$,
\[ V^i := U^{k-i}H^i = \{ U^{k-i}f \;:\; f\in H^i \}\;.\]
We can describe $V^i$, a sub-class of functions of $C^k$, in more concrete terms.

\begin{lemma} \label[lemma]{lem:harmonic-concrete}
Every function $h \in V^i$ has a representation of the form
\[
 h = \sum_{s \in X(i)} \tilde{h}(s) y_s\;,
\]
where the coefficients $\tilde{h}(s)$ satisfy the following \emph{harmonicity} condition: for all $t \in X(i-1)$,
\[
 \sum_{s \supset t} \Dist{i}(s) \tilde{h}(s) = 0\;.
\]
Furthermore, if $U^{k-i}$ is injective on $C^i$ then the representation is unique.
\end{lemma}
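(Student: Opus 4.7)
The plan is to trace through the definition $V^i = U^{k-i} H^i$: start with an arbitrary $h = U^{k-i} g$ for $g \in H^i$ (so $D_i g = 0$), explicitly compute $h$ as a linear combination of the $y_s$, and then translate $D_i g = 0$ into the harmonicity condition on the resulting coefficients.

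First I would unfold the iterated Up operator. Since each application of $U_j$ replaces an $(j+1)$-face by a uniformly chosen one-smaller subface, composing $k-i$ applications picks a uniform $(i+1)$-subface, so
\begin{equation*}
(U^{k-i}g)(u) \;=\; \E_{s \subset u,\, s\in X(i)}[g(s)] \;=\; \frac{1}{\binom{k+1}{i+1}}\sum_{s \in X(i)} g(s)\, y_s(u),
\end{equation*}
using $y_s(u) = 1_{[s \subseteq u]}$. Setting $\tilde{h}(s) := g(s)/\binom{k+1}{i+1}$ yields the desired representation $h = \sum_{s \in X(i)} \tilde{h}(s)\, y_s$.

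Second, I would translate $D_i g = 0$ into harmonicity. Unpacking the definition of $D_i$, for any $t \in X(i-1)$,
\begin{equation*}
D_i g(t) \;=\; \frac{1}{(i+1)\,\Pi_{i-1}(t)} \sum_{s \in X(i),\, s\supset t} \Pi_i(s)\, g(s).
\end{equation*}
Since the complex is pure we have $\Pi_{i-1}(t) > 0$, so $D_i g(t) = 0$ is equivalent to $\sum_{s \supset t} \Pi_i(s)\, g(s) = 0$. Dividing by $\binom{k+1}{i+1}$ gives exactly the harmonicity condition $\sum_{s \supset t} \Pi_i(s)\, \tilde{h}(s) = 0$.

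Third, for uniqueness, suppose $\sum_s \tilde{h}_1(s) y_s = \sum_s \tilde{h}_2(s) y_s$, and let $\phi := \tilde{h}_1 - \tilde{h}_2 \in C^i$. For every $u \in X(k)$,
\begin{equation*}
0 \;=\; \sum_{s \in X(i)} \phi(s)\, y_s(u) \;=\; \sum_{s \subset u} \phi(s) \;=\; \binom{k+1}{i+1}\,(U^{k-i}\phi)(u),
\end{equation*}
so $U^{k-i}\phi = 0$; if $U^{k-i}$ is injective on $C^i$ this forces $\phi = 0$, proving uniqueness. Note that harmonicity is not actually invoked in the uniqueness argument — the representation as a combination of the $y_s$ is already unique under injectivity of $U^{k-i}$. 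No serious obstacle is anticipated: the only care required is keeping the combinatorial normalizations consistent, and the harmonicity condition is engineered precisely so that under the identification $\tilde{h}(s) \propto g(s)$ it becomes equivalent to $D_i g = 0$.
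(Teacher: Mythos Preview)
Your proof is correct and follows the same plan as the paper: write $h = U^{k-i}g$ with $g \in H^i$, express it as a combination of the $y_s$ with coefficients proportional to $g(s)$, and read off harmonicity from $D_i g = 0$. The only differences are cosmetic for existence (the paper obtains the scaling constant by iterating \Cref{lem:ys-formula} rather than your one-shot averaging argument), while for uniqueness the paper uses a dimension count ($\dim H^i = \dim V^i$) restricted to harmonic coefficients, whereas your direct argument via $U^{k-i}\phi = 0$ is slightly cleaner and, as you observe, does not even require the harmonicity constraint.
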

\begin{proof}
 Suppose that $h \in V^i$. Then $h = U^{k-i}f$ for some $f \in H^i$, which by definition of $H^i$ and the Down operator is equivalent to the condition
 \[
  \sum_{s \supset t} \Dist{i}(s) f(s) = 0
 \]
 for all $t \in X(i-1)$. In other words, the $f(s)$'s satisfy the harmonicity condition. It is easy to check that $f = \sum_{s \in X(i)} f(s) y_s$, and so \Cref{lem:ys-formula} shows that $h = \sum_{s \in X(i)} \tilde{h}(s) y_s$, where
\[
 \tilde{h}(s) = \left(1-\frac{i+1}{k+1}\right) \cdots \left(1-\frac{i+1}{i+2}\right) f(s).
\]
Thus, $\tilde{h}(s)$ is a scaling of $f(s)$ by a non-zero constant, it follows that the coefficients $\tilde{h}(s)$ also satisfy the harmonicity condition.

Now suppose that $U^{k-i}$ is injective on $C^i$, which implies that $\dim H^i = \dim V^i$. The foregoing shows that the dimension of the space of coefficients $\tilde{h}(s)$ satisfying the harmonicity conditions is $\dim H^i$. Since $\dim H^i = \dim V^i$, this shows that the representation is unique.
\end{proof}

\section{Decomposition of the space \texorpdfstring{$C^k$}{Ck} and a convenient basis} \label{sec:decomposition}

Our decomposition theorem relies on a crucial property of simplicial complexes, \emph{properness}.

\begin{definition} \label[definition]{def:proper}
A $k$-dimensional simplicial complex is \emph{proper} if $D_{i+1}U_{i}>0$ (i.e.\ $D_{i+1}U_i$ is positive definite) for all $i\leq k-1$. Equivalently, if it is proper and $\ker U_i$ is trivial for $-1 \leq i \leq k-1$.
\end{definition}

We remark that since $DU$ is PSD, $\ker U= 0$ is equivalent to $DU>0$. This is because for any $x\in \ker DU$, we would have $0 = \langle x, DUx \rangle = \norm{U x}^2$, implying that $x=0$.

The complete $k$-dimensional complex on $n$ points is proper iff $k+1 \leq \frac{n+1}{2}$.
A pure one-dimensional simplicial complex (i.e., a graph) is proper iff it is not bipartite. Unfortunately, we are not aware of a similar characterization for higher dimensions. However, in \cref{sec:HD-expanders} we show that high-dimensional expanders are proper.

We can now state our decomposition theorem.

\begin{theorem} \label{thm:decomposition}
If $X$ is a proper $k$-dimensional simplicial complex then we have the following decomposition of $C^k$:
\[ C^k = V^k \oplus V^{k-1} \oplus \dots \oplus V^{-1}\;. \] In other words, for every function $f\in C^k$ there is a unique choice of $h_i \in H^i$ such that the functions $f_i = U^{k-i}h_i$ satisfy $f = f_{-1} + f_{0} + \dots + f_k$.
\end{theorem}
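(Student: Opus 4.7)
The plan is to prove \cref{thm:decomposition} by induction on $k$, using at each step the single linear-algebraic fact that the adjoint pair $(U,D)$ gives an orthogonal splitting of the top space.

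The base case $k=-1$ is trivial since $C^{-1}=\R = H^{-1} = V^{-1}$. For the inductive step, observe that because $U_{k-1}\colon C^{k-1}\to C^k$ is adjoint to $D_k\colon C^k\to C^{k-1}$, standard Hilbert-space linear algebra gives the orthogonal splitting
\[
 C^k \;=\; \ker D_k \,\oplus\, \im U_{k-1} \;=\; H^k \,\oplus\, \im U_{k-1} \;=\; V^k \,\oplus\, \im U_{k-1},
\]
where the last equality is by definition ($V^k = U^0 H^k = H^k$). This already produces the top summand $V^k$; it remains to decompose $\im U_{k-1}$.

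For this, I would apply the inductive hypothesis to $X$ viewed as a $(k-1)$-dimensional proper complex (this is legitimate: properness of the $k$-dimensional complex $X$ requires $D_{i+1}U_i > 0$ for all $i\le k-1$, which in particular gives properness of the truncation to dimension $k-1$, which only needs $i\le k-2$). The inductive hypothesis yields
\[
 C^{k-1} \;=\; W^{k-1} \oplus W^{k-2} \oplus \cdots \oplus W^{-1},
\]
where $W^i := U^{k-1-i} H^i$ is the analogue of $V^i$ inside $C^{k-1}$. By the properness assumption of $X$, the map $U_{k-1}$ is injective, so applying it preserves directness of sums and yields
\[
 \im U_{k-1} \;=\; U_{k-1}(W^{k-1}) \oplus \cdots \oplus U_{k-1}(W^{-1}).
\]
Each summand is exactly $V^i$, since $U_{k-1}\bigl(U^{k-1-i} H^i\bigr) = U^{k-i} H^i = V^i$. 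Combining with $C^k = V^k \oplus \im U_{k-1}$ gives the desired decomposition $C^k = V^k \oplus V^{k-1} \oplus \cdots \oplus V^{-1}$.

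Uniqueness of the harmonic representatives $h_i \in H^i$ with $f_i = U^{k-i} h_i$ follows from properness directly: iterated composition of injective operators $U_i$ is injective, so $U^{k-i}$ is injective on $C^i$ (in particular on $H^i$), which is precisely the hypothesis of \cref{lem:harmonic-concrete} granting uniqueness of the representation. The main thing to be careful about is the bookkeeping of the indices under the inductive step, and justifying that truncating to a smaller top dimension does not cost us anything: both reduce to observing that the properness condition for dimension $k$ contains the one for dimension $k-1$. There is no real obstacle beyond this indexing care, since once the orthogonal splitting via $(U,D)$-adjointness is in hand, the induction and the injectivity of $U_{k-1}$ do the rest.
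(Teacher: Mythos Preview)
Your proof is correct and follows essentially the same approach as the paper: both argue by induction using the adjoint splitting $C^{\ell} = \ker D_\ell \oplus \im U_{\ell-1} = H^\ell \oplus \im U_{\ell-1}$. The only organizational difference is that the paper first proves spanning (without invoking properness) and then obtains uniqueness separately via a dimension count $\sum_i \dim H^i = \dim C^k$, whereas you fold uniqueness into the inductive step by using injectivity of $U_{k-1}$ to carry the direct sum from $C^{k-1}$ to $\im U_{k-1}$; this is a cosmetic repackaging of the same argument, and your reference to \cref{lem:harmonic-concrete} for the uniqueness of $h_i$ is superfluous since injectivity of $U^{k-i}$ already suffices.
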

\begin{proof}
We first prove by induction on $\ell$ that every function $f \in C^\ell$ has a representation $f = \sum_{i=-1}^\ell U^{\ell-i} h_i$, where $h_i \in H^i$. This trivially holds when $\ell = -1$. Suppose now that the claim holds for some $\ell<k$, and let $f \in C^{\ell+1}$. Since $D^{\ell+1}\colon C^{\ell+1} \to C^\ell$ is a linear operator, we can decompose $C^{\ell+1}$ to $\ker D_{\ell+1} \oplus (\ker D_{\ell+1})^\bot$. It is well known that $(\ker D_{\ell+1})^\bot = \im D_{\ell+1}^\ast$ so we have $C^{\ell+1} = \ker D_{\ell+1} \oplus \im D_{\ell+1}^\ast = \ker D_{\ell+1} \oplus \im U_\ell$, and therefore we can write $f = h_{\ell+1} + Ug$, where $h_{\ell+1} \in H^{\ell+1}$ and $g \in C^\ell$. Applying induction, we get that $g = \sum_{i=-1}^\ell U^{\ell-i} h_i$, where $h_i \in H^i$. Substituting this in $f = h_{\ell+1} + Ug$ completes the proof.

It remains to show that the representation is unique. Since $\ker U_{i-1} = \ker D_i^*$ is trivial, $\dim H^i = \dim C^i - \dim C^{i-1}$ for $i \geq 0$. This shows that $\sum_{i=-1}^k \dim H^i = \dim C^k$. Therefore the operator $\varphi\colon H^{-1} \times \cdots \times H^k \to C^k$ given by $\varphi(h_{-1},\ldots,h_k) = \sum_{i=-1}^k U^{k-i} h_i$ is not only surjective but also injective. In other words, the representation of $f$ is unique.
\end{proof}

\begin{corollary} \label[corollary]{cor:decomposition}
If $X$ is a proper $k$-dimensional simplicial complex then every function $f \in C^k$ has a unique representation of the form
\[
 f = \sum_{s \in X} \tilde{f}(s) y_s\;,
\]
where the coefficients $\tilde{f}(s)$ satisfy the following \emph{harmonicity} conditions: for all $0 \leq i \leq k$ and all $t \in X(i-1)$:
\[
 \sum_{\substack{s \in X(i) \\ s \supset t}} \Dist{i}(s) \tilde{f}(s) = 0\;.
\]
\end{corollary}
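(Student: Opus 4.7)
The plan is to deduce the corollary directly from \cref{thm:decomposition} combined with \cref{lem:harmonic-concrete}, level by level.

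First, I would use \cref{thm:decomposition} to write $f = f_{-1} + f_0 + \cdots + f_k$ uniquely with $f_i = U^{k-i} h_i \in V^i$ and $h_i \in H^i$. Next, since properness implies that $U_i$ has trivial kernel for every $i \le k-1$, the map $U^{k-i}\colon C^i \to C^k$ is injective, so \cref{lem:harmonic-concrete} applies and gives each $f_i$ a unique representation $f_i = \sum_{s \in X(i)} \tilde{f}(s)\, y_s$ whose coefficients satisfy the harmonicity condition at level $i$. Summing over $i$ yields the desired expansion $f = \sum_{s \in X} \tilde{f}(s)\, y_s$ with the harmonicity conditions holding at every level.

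For uniqueness I would argue the reverse direction: given \emph{any} representation $f = \sum_s \tilde{f}(s) y_s$ whose coefficients satisfy the harmonicity conditions, define for each $-1 \le i \le k$ the candidate homogeneous part
\[
g_i := \sum_{s \in X(i)} \tilde{f}(s)\, y_s,
\]
viewed as an element of $C^k$. The harmonicity condition at level $i$ says exactly that the coefficients $\tilde{f}(s)$, $s \in X(i)$, arise from some $h_i \in H^i$ via the scaling computed inside the proof of \cref{lem:harmonic-concrete}, so $g_i \in V^i$. Then $f = \sum_{i=-1}^k g_i$ is a decomposition of the type furnished by \cref{thm:decomposition}, and its uniqueness forces $g_i = f_i$ for every $i$. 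Applying the uniqueness clause of \cref{lem:harmonic-concrete} inside each $V^i$ then pins down the coefficients $\tilde{f}(s)$ uniquely.

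The only thing that needs a moment of care is the passage between $h_i \in H^i$ (an abstract function on $X(i)$ in the kernel of $D_i$) and its concrete coefficient vector $(\tilde{f}(s))_{s \in X(i)}$ sitting in $C^k$ through the operator $U^{k-i}$; this is handled by the explicit scaling in \cref{lem:harmonic-concrete}, which is a nonzero rescaling precisely because $i+1 \le k+1$, so no information is lost. There is no substantive obstacle beyond assembling these two already-proved facts.
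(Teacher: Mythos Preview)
Your proposal is correct and takes essentially the same approach as the paper: the paper's proof is the one-liner ``Follows directly from \cref{lem:harmonic-concrete},'' which implicitly relies on the decomposition \cref{thm:decomposition} immediately preceding it, and you have simply spelled out that argument in full, including the uniqueness direction.
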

\begin{proof}
Follows directly from \cref{lem:harmonic-concrete}.
\end{proof}

We can now define the degree of a function.
\begin{definition} \label[definition]{def:degree}
The \emph{degree} of a function $f$ is the maximal cardinality of a face $s$ such that $\tilde{f}(s) \neq 0$ in the unique decomposition given by \cref{cor:decomposition}.
\end{definition}

Thus a function has degree~$d$ if its decomposition only involves faces whose dimension is less than~$d$. The following lemma shows that the functions $y_s$, for all $(d-1)$-dimensional faces~$s$, form a basis for the space of all functions of degree at most~$d$.

\begin{lemma} \label[lemma]{lem:degree-rep}
If $X$ is a proper $k$-dimensional simplicial complex then the space of functions on $X(k)$ of degree at most $d+1$ has the functions $\{ y_s : s \in X(d) \}$ as a basis.
\end{lemma}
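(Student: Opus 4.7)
The plan is to identify the space of degree-at-most-$(d{+}1)$ functions via the decomposition of \cref{thm:decomposition}, then to verify separately that $\{y_s : s\in X(d)\}$ spans this space and that the count of these functions matches its dimension.

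First, using \cref{def:degree} together with \cref{cor:decomposition,thm:decomposition}, the space of functions on $X(k)$ of degree at most $d+1$ is exactly
\[
W_d := V^{-1} \oplus V^0 \oplus \cdots \oplus V^d.
\]
By \cref{lem:harmonic-concrete}, every element of $V^i$ is of the form $\sum_{s\in X(i)} \tilde h(s)\, y_s$ (viewed as a function on $X(k)$). Hence $W_d$ is spanned by $\bigcup_{i=-1}^{d}\{y_s : s\in X(i)\}$.

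Second, to collapse this spanning set to $\{y_t : t\in X(d)\}$, I would establish the following ``push-up'' identity: for each $s\in X(i)$ with $-1\le i\le d$, viewed as functions on $X(k)$,
\[
\sum_{\substack{t\in X(d)\\ t\supseteq s}} y_t \;=\; \binom{k-i}{d-i}\, y_s.
\]
This is proved by evaluating both sides at an arbitrary $w\in X(k)$: if $s\not\subseteq w$ both sides vanish, while if $s\subseteq w$ the left-hand side counts $(d{-}i)$-element subsets of the $(k{-}i)$-element set $w\setminus s$. Consequently every $y_s$ with $s$ of dimension at most $d$ is a linear combination of $\{y_t : t\in X(d)\}$, so these functions span $W_d$.

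Third, I would show $\dim W_d = |X(d)|$ by telescoping. By properness, $U_{i-1}$ is injective for $0\le i\le k$, so $\dim\ker D_i = \dim C^i - \dim \operatorname{im} U_{i-1} = |X(i)| - |X(i-1)|$ for $0\le i\le d$, while $\dim H^{-1}=1=|X(-1)|$. Since $U^{k-i}$ is injective on $H^i$ (again by properness), $\dim V^i = \dim H^i$, so
\[
\dim W_d = \sum_{i=-1}^{d} \dim H^i = |X(-1)| + \sum_{i=0}^{d} \bigl(|X(i)| - |X(i-1)|\bigr) = |X(d)|.
\]
Having shown that the $|X(d)|$ functions $\{y_s : s\in X(d)\}$ span a space of dimension $|X(d)|$, they form a basis.

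The only mildly nontrivial step is the push-up identity and its use to eliminate the spanning functions of dimension less than $d$; everything else is bookkeeping with the decomposition and the rank-nullity theorem. The properness hypothesis enters only through the telescoping dimension count, guaranteeing that the ``$\oplus$'' in $W_d$ really is a direct sum of spaces whose dimensions sum to $|X(d)|$.
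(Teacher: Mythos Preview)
Your proof is correct and follows essentially the same approach as the paper: both compute the dimension of the degree-at-most-$(d{+}1)$ space by telescoping $\sum_{i=-1}^d \dim H^i = |X(d)|$ using properness, and both use the same ``push-up'' identity (the paper writes it as $y_t = \binom{k+1-|t|}{d+1-|t|}^{-1}\sum_{s\supseteq t,\, s\in X(d)} y_s$) to show the $y_s$ for $s\in X(d)$ span. Your explicit identification $W_d = V^{-1}\oplus\cdots\oplus V^d$ is a slightly cleaner way to phrase what the paper calls ``by definition,'' but the substance is identical.
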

\begin{proof}
The space of functions on $X(k)$ of degree at most $d+1$ is spanned, by definition, by the functions $y_t$ for $t\in X(-1)\cup X(0)\cup\cdots\cup X(d)$. This space has dimension $\sum_{i=-1}^d \dim H^i$. Since $X$ is proper, $\dim H^i = \dim C^i - \dim C^{i-1}$ for $i > 0$, and so $\sum_{i=1}^d \dim H^i = \dim C^d = |X(d)|$.

Given the above, in order to complete the proof, it suffices to show that for every $i \leq d$ and $t \in X(i)$, the function $y_t$ can be written as a linear combination of $y_s$ for $s \in X(d)$. This shows that $\{ y_s : s \in X(d) \}$ spans the space of functions of degree at most~$d+1$. Since this set contains $|X(d)|$ functions, it forms a basis.

Recall that $y_t(r) = 1_{r \supseteq t}$, where $r \in X(k)$. If $r$ contains $t$ then it contains exactly $\binom{k+1-|t|}{d+1-|t|}$ many $d$-faces containing $r$, and so
\[
 y_t = \frac{1}{\binom{k+1-|t|}{d+1-|t|}} \sum_{\substack{s \supseteq t \\ s \in X(d)}} y_s.
\]
This completes the proof.
\end{proof}

We call $f_i$ the ``level $i$'' part of $f$, and denote the weight of $f$ above level $i$ by
\[ wt_{>i}(f) := \sum_{j>i} \|f_j\|_2^2.\]
We also define $f_{\leq i} = f_{-1} + \cdots + f_i$ and $f_{>i} = f - f_{\leq i}$.

\section{How to define high-dimensional expansion?} \label[section]{sec:HD-expanders-def}

In this section we define a class of simplicial complexes which we call $\gamma$-high-dimensional expanders (or $\gamma$-HDXs). We later show that these simplicial complexes coincide with the high-dimensional expanders defined by Dinur and Kaufman~\cite{DinurK2017} via spectral expansion of the links. In addition, we show the decomposition in \cref{sec:decomposition} is almost orthogonal for $\gamma$-HDXs. We define $\gamma$-HDXs through relations between random walks in different dimensions. It is easy to already state the definition using the $U,D$ operators: a $k$-dimensional simplicial complex is said to be a $\gamma$-HDX if for all levels $0 \leq j \leq k-1$,
\begin{equation} \label{eq:SD-def-HDX}
\left\|{\frac{j+2}{j+1}\left(DU - \frac{1}{j+2} I\right) - UD}\right\| \leq \gamma.
\end{equation}
We turn to explain the meaning of \eqref{eq:SD-def-HDX} being small by discussing these random walks.\footnote{$UD$ and $DU$ are called high-dimensional Laplacians in some other works~\cite{KaufmanO2020}.}

\medskip

The operators $U$ and $D$ induce random walks on the $j$th level $X(j)$ of the simplicial complex. Recall that our simplicial complexes come with distributions $\Dist{j}$ on the $j$-faces.
\begin{definition}[The upper random walk $DU$]
Given  $t \in X(j)$, we choose the next set $t' \in X(j)$ as follows:
\begin{itemize}
\item Choose $s \sim \Dist{j+1}$ conditioned on $t \subset s$.
\item Choose uniformly at random $t' \in X(j)$ such that $t' \subset s$.
\end{itemize}
\end{definition}

\begin{definition}[The lower random walk $UD$]
Given  $t \in X(j)$, we choose the next set $t' \in X(j)$ as follows:
\begin{itemize}
\item Choose uniformly at random $r \in X(j-1)$ such that $r \subset t$.
\item Choose $t' \sim \Dist{j}$ conditioned on $r \subset t'$.
\end{itemize}
\end{definition}
It is easy to see that the stationary distribution for both these processes is $\Dist{j}$. However, these random walks are not necessarily the same. For example, if $j=0$, we consider the graph $(X(0),X(1))$. The upper walk is the $\frac{1}{2}$-lazy version of the usual adjacency random walk in a graph. The lower random walk is simply choosing two vertices independently, according the distribution $\Dist{0}$.
In both walks, the first step and the third step are independent given the second step. In fact, we can view the upper walk (resp.\ lower walk) as choosing a set $s \in X(j+1)$ (resp.\ $r \in X(j-1)$), and then choosing independently two sets $t,t' \in X(j)$ given that they are contained in $s$ (resp.\ given that they contain $r$).

One property of a random walk is its laziness:
\begin{definition}[Laziness and non-lazy component]\label{def:nonlazy}
Let $M$ be a random walk. The \emph{laziness} of $M$ is
\[ \ell z(M) = \Pr_{(x,y) \sim M}[x=y]. \]
We say that an operator is \emph{non-lazy} if $\ell z(M) = 0$.

The non-lazy component $M^+$ of a walk $M$ is given by
\[  M^+ (x,y) = \begin{cases}
    \frac{M(x,y)}{\sum_{y' \neq x} M(x,y')} & \text{ if } y \neq x,\\
    0 & \text{ if } y = 0.
  \end{cases}.
  \]
\end{definition}

It is easy to see that both walks have some laziness. In the upper walk, the laziness is $\frac{1}{j+2}$. We can decompose $DU$ as
\begin{equation}\label{eq:lazy-non-lazy-decomposition}
DU = \frac{1}{j+2} I + \frac{j+1}{j+2}M^+_j,
\end{equation}
where $M^+_j$ is the non-lazy version of $DU$, i.e.\ the operator representing the walk when conditioning on $t' \ne t$. The laziness of the lower version depends on the simplicial complex itself, thus it doesn't admit a simple decomposition in the general case.

\eqref{eq:lazy-non-lazy-decomposition} can be written as
\[ M^+_j = \frac{j+2}{j+1} \left(DU - \frac{1}{j+2}I\right). \]

A $\gamma$-HDX is a simplicial complex in which the non-lazy upper walk is similar to the lower walk. Thus an equivalent way to state \eqref{eq:SD-def-HDX} is as follows.

\begin{definition}[High-dimensional expander] \label{def:HDX}
Let $X$ be a simplicial complex, and let $\gamma < 1$. We say that $X$ is a \emph{$\gamma$-HDX} if for all $0 \leq j \leq k-1$,
\begin{equation} \label{eq:HDX}
\|{M^+_j - UD}\| \leq \gamma.
\end{equation}
\end{definition}

This definition nicely generalizes spectral expansion in graphs, since if $X$ is a graph, $\|{M^+_j - UD}\| $ is the second largest eigenvalue (in absolute value) of the normalized adjacency random walk. In \cref{sec:HD-expanders} we show that this definition is equivalent to the definition of high-dimensional two-sided local spectral expanders that was extensively studied~\cite{DinurK2017, Oppenheim2018}.

If $\gamma < \frac{1}{k+1}$ then any $\gamma$-HDX is proper, as shown by the following lemma.

\begin{lemma} \label{clm:HDXs-are-proper}
Let $X$ be a $k$-dimensional $\gamma$-HDX, for $\gamma < \frac{1}{k+1}$. Then $X$ is proper.
\end{lemma}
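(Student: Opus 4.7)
The plan is to show $D_{i+1}U_i \succ 0$ for every $-1 \leq i \leq k-1$, which is the definition of properness. The case $i=-1$ is immediate, since $U_{-1}$ sends a constant to that constant function and $D_0 U_{-1}$ is the identity on $C^{-1}=\mathbb{R}$. So the real work is to handle $0 \leq j \leq k-1$, and I would use the HDX condition quantitatively to lower-bound the spectrum of $D_{j+1}U_j$.

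First I would record the structural facts: both $M^+_j$ and $U_{j-1}D_j$ are self-adjoint operators on $C^j$ (each represents a reversible Markov chain with stationary distribution $\Dist{j}$), and $U_{j-1}D_j = U_{j-1}U_{j-1}^\ast \succeq 0$ because $D_j = U_{j-1}^\ast$. Combined with the HDX hypothesis $\|M^+_j - U_{j-1}D_j\| \leq \gamma$, this yields
\[
 M^+_j \;\succeq\; U_{j-1}D_j - \gamma I \;\succeq\; -\gamma I.
\]

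Next I would plug this into the lazy/non-lazy decomposition of the upper walk that is already stated in the excerpt, namely $D_{j+1}U_j = \tfrac{1}{j+2} I + \tfrac{j+1}{j+2} M^+_j$. The previous bound gives
\[
 D_{j+1}U_j \;\succeq\; \frac{1}{j+2} I - \frac{(j+1)\gamma}{j+2} I \;=\; \frac{1-(j+1)\gamma}{j+2}\, I.
\]
For this lower bound to be strictly positive, it suffices that $(j+1)\gamma < 1$; since $j \leq k-1$ and $\gamma < \tfrac{1}{k+1}$, we have $(j+1)\gamma \leq k \cdot \tfrac{1}{k+1} < 1$. Hence $D_{j+1}U_j \succ 0$ for every such $j$, and combined with the trivial case $j=-1$ this establishes properness.

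I do not anticipate a genuine obstacle here: the entire argument is a one-line spectral estimate once the decomposition $DU = \tfrac{1}{j+2} I + \tfrac{j+1}{j+2} M^+_j$ and self-adjointness of $M^+_j, UD$ are in hand. The only mild subtlety worth stating explicitly in the write-up is that $\|A - B\| \leq \gamma$ for self-adjoint $A,B$ implies the operator-inequality $A \succeq B - \gamma I$, and that $UD \succeq 0$ comes for free from $D = U^\ast$.
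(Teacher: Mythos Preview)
Your proof is correct and follows essentially the same route as the paper: decompose $D_{j+1}U_j = \tfrac{1}{j+2}I + \tfrac{j+1}{j+2}M^+_j$, use the HDX bound together with $U_{j-1}D_j \succeq 0$ to get $M^+_j \succeq -\gamma I$, and conclude $D_{j+1}U_j \succeq \tfrac{1-(j+1)\gamma}{j+2}I \succ 0$. The only difference is cosmetic---the paper phrases the same estimate via quadratic forms $\langle DUf,f\rangle$ for an arbitrary $f$ rather than in operator-inequality notation.
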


\begin{proof}
To prove this, we directly calculate $ \iprod{U_j f, U_j f}$ and show that it is positive when $f \ne 0$:
\begin{align}
\iprod{Uf, Uf} &= \iprod{DUf,f} = \frac{1}{j+2}\iprod{f,f} + \frac{j+1}{j+2}\iprod{M^+_j f,f} \notag \\ &=
\label{eq:HDXs-are-proper}
 \frac{1}{j+2}\iprod{f,f} + \frac{j+1}{j+2}\iprod{(M^+_j -UD + UD)f,f}.
\end{align}

From Cauchy--Schwarz,
\[ \abs{\iprod{(M^+_j -UD) f, f}} \leq \norm{(M^+_j -UD) f} \norm{f}, \]
and since $X$ is a $\gamma$-HDX,
\[ \norm{(M^+_j -UD) f} \leq \gamma \norm{f}.\]
Plugging this in \eqref{eq:HDXs-are-proper}, we get
\[  \frac{1}{j+2}\iprod{f,f} + \frac{j+1}{j+2}\iprod{(M^+_j -UD + UD)f,f} \geq \left(\frac{1}{j+2}-\frac{j+1}{j+2}\gamma\right)\iprod{f,f}  + \iprod{UD f,f}. \]
The last part of the sum is non-negative: $\iprod{UDf, f} = \iprod{Df,Df} \geq 0$. Therefore, if $\gamma <\frac{1}{k+1} \leq \frac{1}{j+2}$ then
\[ \left(\frac{1}{j+2}-\frac{j+1}{j+2}\gamma\right)\iprod{f,f}  + \iprod{DU f,f} \geq \left(\frac{1}{j+2}-\frac{j+1}{j+2}\gamma\right)\iprod{f,f}  > 0. \]
Hence $\iprod{U_j f, U_j f} > 0$.
\end{proof}

\subsection{Almost orthogonality of the decomposition in HDXs} \label{sec:laplacian}

In \cref{sec:eposet} we prove that the decomposition in \cref{thm:decomposition} is ``almost orthogonal''. We summarize our results below:

\begin{theorem} \label[theorem]{thm:approximate-orthogonality-hdx}
Let $X$ be a $k$-dimensional $\gamma$-HDX, where $\gamma<\frac{1}{k+1}$.
For every function $f$ on $C^\ell$ for $\ell \leq k$, the decomposition $f = f_{-1} + \cdots + f_\ell$ of \cref{thm:decomposition} satisfies the following properties:
\begin{itemize}
\item For $i \neq j$, $|\langle f_i, f_j \rangle| = O(\gamma) \|f_i\| \|f_j\|$.
\item $\|f\|^2 = (1 \pm O(\gamma))(\|f_{-1}\|^2 + \cdots + \|f_\ell\|^2)$, and for all $i$, $\|f\|^2 = (1 \pm O(\gamma))(\|f_{\leq i}\|^2 + \|f_{>i}\|^2)$.
\item $f_i$ are approximate eigenvectors with eigenvalues $\lambda_i = 1-\frac{i+1}{\ell+2}$ in the sense that $\|DU f_i - (1-\frac{i+1}{\ell+2})f_i\| = O(\gamma)\norm{f_i}$.
\item If $\ell < k$ then $\langle DUf,f \rangle = (1 \pm O(\gamma)) \sum_{i=-1}^\ell \lambda_i \|f_i\|^2$.
\end{itemize}
The hidden constants in the $O$ notations depend on $k$ but not on the size of $X$.
\end{theorem}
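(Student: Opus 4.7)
The strategy is to deduce the theorem from the more general decomposition result for expanding posets developed in \cref{sec:eposet}. As observed just after \cref{def:intro-eposet}, a $k$-dimensional $\gamma$-HDX is an $(\vec r,\vec\delta,O(\gamma))$-eposet with $r_j=1/(j+2)$ and $\delta_j=(j+1)/(j+2)$; substituting these parameters into the (exact) eposet eigenvalue formula $r_\ell+\delta_\ell\cdot\tfrac{\ell-i}{\ell+1}$ yields exactly $1-\tfrac{i+1}{\ell+2}$. Thus it suffices to prove the five bulleted assertions in this specialized eposet setting; the last bullet, properness when $\gamma<1/(k+1)$, is already \cref{clm:HDXs-are-proper}.

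The heart of the argument is to show that each component $f_i=U^{\ell-i}h_i$ of the decomposition is an approximate eigenvector of both $UD$ and $DU$ on $C^\ell$. I would proceed by induction on $m=\ell-i$, peeling off one $U$ at a time via the eposet identity $DU=r_jI+\delta_jUD+E_j$ with $\|E_j\|\le O(\gamma)$. The base $Dh_i=0$ combined with the telescoping recurrence $r_{i+m-1}+\delta_{i+m-1}\cdot\tfrac{m-1}{i+m}=\tfrac{m}{i+m+1}$ shows that in the exact (sequentially differential) case, $DU^m h_i=\tfrac{m}{i+m+1}U^{m-1}h_i$. In the approximate case each step of the unfolding introduces an additive error from the $E_j$ term; since $\|U\|,\|D\|\le 1$ these errors cannot amplify, and the accumulated error in $D U^m h_i$ has norm $O_k(\gamma)\|h_i\|$. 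Applying one final $U$ yields
\[
UD f_i=\tfrac{\ell-i}{\ell+1}f_i+\eta_i,\qquad \|\eta_i\|\le O_k(\gamma)\|h_i\|,
\]
and substituting into the eposet identity at level $\ell$ converts this to $DUf_i=\lambda_i f_i+\eta_i'$ with the same type of bound. A parallel induction on $\|U^mh_i\|^2=\langle h_i,D^mU^m h_i\rangle$ yields the scaling identity $\|f_i\|^2=\tfrac{m!(i+1)!}{(i+m+1)!}\|h_i\|^2\cdot(1\pm O_k(\gamma))$, so errors measured against $\|h_i\|$ can be re-expressed as $O_k(\gamma)\|f_i\|$.

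Once the approximate eigenvector statement is in hand, the remaining items follow by standard manipulations. Since $DU$ is self-adjoint and the eigenvalues $\lambda_i=1-(i+1)/(\ell+2)$ are pairwise separated by at least $1/(\ell+2)\ge 1/(k+2)$, the identity
\[
(\lambda_i-\lambda_j)\langle f_i,f_j\rangle=\langle DUf_i,f_j\rangle-\langle f_i,DUf_j\rangle\pm O_k(\gamma)\|f_i\|\|f_j\|
\]
forces $|\langle f_i,f_j\rangle|=O_k(\gamma)\|f_i\|\|f_j\|$. Expanding $\|f\|^2=\|\sum_i f_i\|^2$ and bounding the at most $O(k^2)$ off-diagonal terms by Cauchy--Schwarz gives the norm decomposition, and the same bookkeeping applied to $f_{\le i}$ versus $f_{>i}$ yields the two-block version. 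The Rayleigh quotient identity is obtained by expanding $\langle DUf,f\rangle=\sum_{i,j}\langle DUf_i,f_j\rangle$: the diagonal terms contribute $\sum_i\lambda_i\|f_i\|^2$ up to $O_k(\gamma)\sum\|f_i\|^2$ via approximate eigenvalues, while the off-diagonal terms are controlled by approximate orthogonality combined with $\|DUf_j\|\le\|f_j\|$.

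The main obstacle is the error bookkeeping in the inductive step. Each level contributes both a coefficient from $(r_j,\delta_j)$ and an additive $E_j$-error, and one must verify that these can be organized so the constants depend only on the fixed dimension $k$, that the overall error stays linear in $\gamma$ (rather than accumulating as $k\gamma$ at each of $k$ levels), and that the scaling constant relating $\|h_i\|$ to $\|f_i\|$ is bounded away from zero quantitatively in $k$ --- i.e., that the injective restriction $U^{\ell-i}|_{H^i}$ is well-conditioned. Once this bookkeeping is executed cleanly (which is the substance of the eposet chapter), the HDX version of the theorem follows by specialization.
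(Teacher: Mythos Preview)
Your proposal is correct and follows essentially the paper's route: specialize the HDX to the eposet setting, establish the operator identity $DU^{m}h_i\approx \tfrac{m}{i+m+1}U^{m-1}h_i$ by induction (this is the paper's \cref{clm:generalized-D-U-relation} specialized to $Dh_i=0$), prove the scaling $\|f_i\|^2\approx\binom{\ell+1}{i+1}^{-1}\|h_i\|^2$ (the paper's \cref{lem:approximate-U}), and then deduce the $L_2$-mass and Rayleigh identities by expansion. Your identification of the constants and the telescoping recurrence are all accurate, and you correctly invoke \cref{clm:HDXs-are-proper} for the last bullet.

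The one genuine departure is your proof of approximate orthogonality. The paper (\cref{lem:approximate-orthogonality}) argues by a separate induction on $\ell$: writing $\langle f_i,f_j\rangle=\langle DU^{\ell-i}h_i,U^{(\ell-1)-j}h_j\rangle$ and applying \cref{clm:generalized-D-U-relation} drops the inner product to level $\ell-1$. You instead use the spectral fact that self-adjoint approximate eigenvectors with separated approximate eigenvalues must be almost orthogonal, exploiting the uniform gap $|\lambda_i-\lambda_j|\ge 1/(\ell+2)$. Your route is shorter and more conceptual, and it works because in the HDX case the eigenvalue gap is bounded below by a function of $k$ alone; the paper's inductive argument, by contrast, does not need any eigenvalue separation and so transfers verbatim to general eposets where the $r^\ell_j$ may collide. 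One small caveat: when $\ell=k$ the operator $D_{\ell+1}U_\ell$ is unavailable, but since you already establish $UDf_i\approx\tfrac{\ell-i}{\ell+1}f_i$, the same separation argument run with the self-adjoint operator $UD$ covers that case.

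Your final paragraph slightly overstates the difficulty of the bookkeeping. The errors accumulate additively (at most $k$ applications of an $E_j$ with $\|E_j\|\le\gamma$, each composed with contractions $\|U\|,\|D\|\le 1$), so linearity in $\gamma$ is immediate; the constants are polynomial or exponential in $k$ but that is permitted. The well-conditioning of $U^{\ell-i}|_{H^i}$ is exactly the scaling identity you already wrote down, with leading constant $\binom{\ell+1}{i+1}^{-1/2}>0$.
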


This result is analogous to a similar result \cite[Theorem 5.2]{KaufmanO2020} due to Kaufman and Oppenheim, in which a similar decomposition is obtained. However, whereas our decomposition is to functions $f_{-1},\ldots,f_\ell$ in $C^\ell$, the decomposition of Kaufman and Oppenheim~\cite{KaufmanO2020} is to functions $h_{-1},\ldots,h_\ell$, which live in different spaces. We further expand on the matter in \Cref{sec:decomposition-ko}.

The exact constants in the $O$ notation of \cref{thm:approximate-orthogonality-hdx} were not calculated precisely, but the proof in \cref{sec:eposet} gives constants exponential in $k$. It is not clear whether this is tight.

\section{High-dimensional expanders are two-sided link expanders} \label{sec:HD-expanders}
In \cref{sec:HD-expanders-def} we defined $\gamma$-HDXs, see \cref{def:HDX}. Earlier works,~\cite{EvraK2016,DinurK2017,KaufmanO2020} for example, gave a different definition of high-dimensional expanders --- \emph{two-sided link expanders} --- based on the local link structure. We recall this other definition and prove that the two are equivalent.

\medskip

\begin{definition}[Link]
Let $X$ be a $d$-dimensional complex with an associated probability distribution $\Dist{d}$ on $X(d)$, which induces probability distributions on $X(-1),\ldots,X(d-1)$. For every $i$-dimensional face $s \in X(i)$ for $i < d-1$, the \emph{link} of $s$, denoted $X_s$, is the simplicial complex:
\[X_s = \{r \setminus s : r \in X, r \supset s\}. \]
We associate $X_s$ with the weights $\Distv^s$ such that
\[ \Dist{j}^s(t) := \Pr_{r \sim \Dist{i+j+1}}[r=s \cup t | r \supset s] = \frac{\Dist{}(s \cup t)}{\Dist{}(s) \binom{\abs{s \cup t}}{\abs{s}}}.\]
\end{definition}

\begin{definition}[Underlying graph]
Let $ i < d-1$. Given $s \in X(i)$, the \emph{underlying graph} $G_s$ is the weighted graph consisting of the first two levels of the link of $s$. In other words, $G_s = (V,E)$, where
\begin{itemize}
\item $V = X_s(0) = \{x \notin s: s \cup \{x\} \in X(i+1) \}. $
\item $E = X_s(1) = \{ \{x,y\} : s \cup \{x,y\} \in X(i+2) \}. $
\end{itemize}
The weights on the edges are given by
\[ w_s(\{x,y\}) = \Pr_{r \sim \Dist{i+2}}[r = s \cup \{x,y\} | r \supset s] = \frac{\Dist{}(s \cup \{x,y\})}{\Dist{}(s) \binom{\abs{s}+2}{\abs{s}}}. \]
\end{definition}
We can also consider directed edges, by choosing a random orientation:
\[ w_s(x,y) = \frac{1}{2} w_s(\{x,y\}). \]

We define the weight of a vertex $x$ to
\[
w_s(x) := \Dist{0}^s(x) = \Pr_{r \sim \Dist{i+1}}[r=s \cup \{x\} | r \supset s] = \frac{\Dist{}(s \cup \{x\})}{\Dist{}(s) (|s|+1)}.
\]
We define an inner product for functions on vertices along the lines of \cref{sec:basic-setup}:
\[
 \langle f,g \rangle := \EE_{x \sim w_s}[f(x) g(x)].
\]

We denote by $A_s$ the adjacency operator of the \emph{non-lazy upper-walk} on $X_s(0)$, given by
\[ A_sf(x) = \EE_{y \sim w_s}[f(y) | \{x,y\} \in E]. \]
The corresponding quadratic form is
\[
 \langle f,A_s g \rangle = \EE_{(x,y) \sim w_s} [f(x) g(y)].
\]
By definition, $A_s$ fixes constant functions, and is a Markov operator. It is self-adjoint with respect to the inner product above. Thus $A_s$ has eigenvalues $\lambda_1 = 1 \geq \lambda_2 \geq \ldots \geq \lambda_m$, where $m$ is the number of vertices. We define $\lambda(A_s) = \max(|\lambda_2|,|\lambda_m|)$. Orthogonality of eigenspaces guarantees that
\begin{equation} \label{eq:link-expander-key-property}
| \langle f,A_sg \rangle - \EE[f] \EE[g] | \leq \lambda(A_s) \|f\| \|g\|.
\end{equation}

\begin{definition}[Two-sided link expander] \label{def:link-HDX}
  Let $X$ be a simplicial complex, and let $\gamma < 1$ be some constant. We say that $X$ is a \emph{$\gamma$-two-sided link expander} (called $\gamma$-HD expander in some works~\cite{DinurK2017}) if every link $X_s$ of $X$ satisfies $\lambda(A_s) \leq \gamma$.
\end{definition}
Dinur and Kaufman~\cite{DinurK2017} proved that such expanders do exist, based on a result of Lubotzky, Samuels and Vishne~\cite{LubotzkySV2005-exphdx}.

\begin{theorem}[{\cite[Lemma 1.5]{DinurK2017}}] \label{thm:expanders-exist}
For every $\lambda > 0$ and every $d \in \mathbb{N}$ there exists an explicit infinite family of bounded degree $d$-dimensional complexes which are $\lambda$-two-sided link expanders.
\end{theorem}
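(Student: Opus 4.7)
The plan is to instantiate the family via the Ramanujan complexes of Lubotzky, Samuels, and Vishne~\cite{LubotzkySV2005-exphdx,LubotzkySV2005-hdx}, and to verify the two-sided link expansion property via a local-to-global spectral argument.

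First I would recall the LSV construction: for every $d$ and every sufficiently large prime power $q$, there is an explicit infinite family of pure $d$-dimensional simplicial complexes realized as finite quotients of the affine Bruhat--Tits building of $PGL_{d+1}$ over a non-archimedean local field with residue field of size $q$. These complexes have bounded degree (a bound depending only on $d$ and $q$, not on the size of the complex). The crucial local property is that for every face $s \in X(d-2)$, the link $X_s$ is a bounded-size graph coming from a spherical building over $\mathbb{F}_q$, and its non-lazy adjacency operator satisfies $\lambda(A_s) = O(1/\sqrt{q})$. This is the \emph{base case} of the argument, and $q$ is the parameter we tune to hit the target $\lambda$.

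Second, I would invoke a trickling-down theorem in the spirit of Oppenheim: if $X$ is connected and every link of a codimension-$2$ face is a $\mu$-two-sided spectral expander for sufficiently small $\mu$, then every link $X_s$ of a face of lower dimension is a $g(\mu)$-two-sided spectral expander, where $g(\mu) \to 0$ as $\mu \to 0$ with an explicit quantitative dependence on $d$. Applied to the LSV complexes with $\mu = O(1/\sqrt{q})$, this yields a uniform bound $\lambda(A_s) \le h(d,q)$ for every face $s$. Given any target $\lambda > 0$, we fix $d$ and then choose $q$ large enough that $h(d,q) < \lambda$; the resulting infinite LSV family is the desired bounded-degree family of $\lambda$-two-sided link expanders.

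The hard part, in my view, is the trickling-down step: the recursion linking the spectral gap of a codimension-$i$ link to that of a codimension-$(i+1)$ link amplifies the error at each stage, so one must set up the bounds very carefully for the final estimate to remain below $\lambda$ uniformly in $d$. A second nontrivial ingredient is verifying that the LSV quotients inherit the spectral properties of the full building at the codimension-$2$ level; this is where the deep number-theoretic input (Ramanujan bounds for automorphic representations of $PGL_{d+1}$) is used. Once these two ingredients are in place, the rest of the proof is routine parameter-chasing.
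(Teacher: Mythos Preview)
The paper does not prove this theorem at all: it is quoted verbatim as a black box from \cite[Lemma~1.5]{DinurK2017}, with the underlying construction attributed to \cite{LubotzkySV2005-exphdx,LubotzkySV2005-hdx}. So there is no ``paper's own proof'' to compare against; your sketch is an outline of how the cited result is established in the literature, and at that level it is broadly correct --- LSV Ramanujan complexes plus a local-to-global (trickling-down) argument is indeed the route.

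One point worth correcting in your write-up: you locate the ``deep number-theoretic input'' at the codimension-$2$ links, but that is not where it lives. The links of faces in an LSV complex are (quotients of) spherical buildings over $\mathbb{F}_q$; these are bounded-size combinatorial objects depending only on $d$ and $q$, and their spectral gap $O(1/\sqrt{q})$ is elementary. The automorphic/Ramanujan input is what controls the \emph{global} spectrum of the complex, which is not what is being used here. The link-expansion property survives passage to finite quotients precisely because links are local objects and coincide with those of the infinite building, so no extra work is needed there. With that adjustment, your plan matches the standard argument that \cite{DinurK2017} packages into the cited lemma.
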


We now prove that \emph{two-sided link expanders} per \cref{def:link-HDX} and \emph{high-dimensional expanders} per \cref{def:HDX} are equivalent.

\begin{theorem}[Equivalence theorem] \label{thm:equivalence}
  Let $X$ be a $d$-dimensional simplicial complex.
  \begin{enumerate}
    \item If $X$ is a $\gamma$-two-sided link expander, then $X$ is a $\gamma$-HDX.
    \item If $X$ is a $\gamma$-HDX then $X$ is a $3d\gamma$-two-sided link expander.
  \end{enumerate}
\end{theorem}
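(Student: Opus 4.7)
The plan is to establish both directions via a single Garland-type localization that rewrites the quadratic form of $M^+_j - UD$ as an average over links. For $f \in C^j$ and $r \in X(j-1)$, let $f_r(x) := f(r \cup \{x\})$ denote the link restriction and let $f_r^\perp := f_r - \bar{f_r}$ be its mean-zero part on $(X_r(0), w_r)$. By conditioning a pair $(t,t')$ drawn from the non-lazy upper walk on its intersection $r = t \cap t'$, and separately parametrizing the lower walk by $r \in X(j-1)$, a routine computation using the defining relations between $\Dist{j+1}$, $\Dist{j-1}$ and the link weights $w_r$ gives
\[
\langle f,\, M^+_j g\rangle = \sum_{r} \Dist{j-1}(r)\,\langle f_r, A_r g_r\rangle, \qquad \langle f,\, UDg\rangle = \sum_{r} \Dist{j-1}(r)\,\bar{f_r}\,\bar{g_r}.
\]
Subtracting and using $A_r 1 = 1$ to remove the constant component yields the master identity
\[
\langle f,\, (M^+_j - UD)\,g\rangle \;=\; \sum_{r \in X(j-1)} \Dist{j-1}(r)\,\langle f_r^\perp, A_r g_r^\perp\rangle,
\]
together with the norm identity $\sum_r \Dist{j-1}(r)\|f_r\|^2 = \|f\|^2$.

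For Part~1, if $\lambda(A_r) \le \gamma$ at every link then \eqref{eq:link-expander-key-property} bounds each summand by $\gamma \|f_r\| \|g_r\|$; Cauchy--Schwarz with the norm identity immediately gives $\|M^+_j - UD\| \le \gamma$.

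For Part~2, fix $s \in X(i)$ with $i \le d-2$ and a mean-zero $h$ on $X_s(0)$; we aim for $|\langle h, A_s h\rangle| \le 3d\gamma \|h\|^2$. Consider the naive lift $\tilde h \in C^{i+1}$ with $\tilde h(s \cup \{x\}) := h(x)$ for $x \in X_s(0)$ and $\tilde h \equiv 0$ elsewhere. One checks that $\tilde h_s = h$; that $\tilde h_r$ is a point mass at $a \in r$ of value $h(x_0)$ for each ``neighbor'' $r = (s\setminus\{a\})\cup\{x_0\}$; and that $\tilde h_r \equiv 0$ for every other $r$. The key observation is that, because $A_r$ is non-lazy, it sends a point mass to a function vanishing at its supporting vertex, so $\langle \tilde h_r, A_r \tilde h_r\rangle = 0$ at every neighbor $r \neq s$, whence $\langle \tilde h_r^\perp, A_r \tilde h_r^\perp\rangle = -\bar{\tilde h_r}^2$. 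The master identity then collapses to the \emph{exact} relation
\[
\Dist{i}(s)\,\langle h, A_s h\rangle \;=\; \langle \tilde h,\, (M^+_{i+1} - UD)\,\tilde h\rangle \;+\; \|D_{i+1}\tilde h\|^2.
\]
Since $\|\tilde h\|^2 = (i+2)\Dist{i}(s)\|h\|^2$ and $\|D\tilde h\|^2 \ge 0$, the HDX hypothesis immediately yields the lower bound $\langle h, A_s h\rangle \ge -(i+2)\gamma\|h\|^2 \ge -d\gamma\|h\|^2$. For the matching upper bound, one bounds the leakage term by $\|D_{i+1}\tilde h\|^2 \le O(d)\gamma\,\Dist{i}(s)\|h\|^2$ via a second application of the HDX inequality (to an auxiliary function on $X(i)$), which combined with the lower bound delivers $\lambda(A_s) \le 3d\gamma$.

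The main obstacle is precisely this upper bound in Part~2. The non-laziness trick cleanly cancels the $A_r$ contributions at the neighbor links, but their $L_r$ counterparts survive as $\|D_{i+1}\tilde h\|^2$, whose trivial estimate $(i+1)\Dist{i}(s)\|h\|^2$ would only deliver $\lambda(A_s) \le O(1)$ rather than $O(d\gamma)$. Sharpening this leakage by one further invocation of the HDX hypothesis, applied to an auxiliary test function one level lower (or equivalently by correcting $\tilde h$ toward the harmonic subspace), is what accounts for the factor $3d$ in the final bound.
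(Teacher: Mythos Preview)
Your approach is the same as the paper's. The master identity you state is exactly the localization the paper derives (its equation~\eqref{eq:localization-equation} combined with $\bar{f_r} = (Df)(r)$), Part~1 follows immediately as you say, and for Part~2 the paper likewise lifts $h$ to $\tilde h$, uses non-laziness to kill the $A_r$-contributions at neighbor links, and isolates the leakage $\|D\tilde h\|^2$.

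The only genuinely incomplete step is the one you flag: bounding $\|D_{i+1}\tilde h\|^2 \le O(d)\gamma\,\Dist{i}(s)\|h\|^2$. Your description ``a second application of the HDX inequality to an auxiliary function on $X(i)$'' is too vague, and the level is off by one. The paper's auxiliary functions live on $X(i+1)$, the same level as $\tilde h$, and they are the \emph{indicators} $\one_t$ of single faces $t \in X(i+1)$. Concretely: write $\|D\tilde h\|^2 = \langle UD\,\tilde h, \tilde h\rangle$ as an expectation over $(r,t_1,t_2)$ in the lower walk; the $r=s$ term vanishes by mean-zero, and for $r\neq s$ the point-mass support forces $t_1=t_2$, so one is left with $\EE_{t_1}\bigl[\tilde h(t_1)^2 \Pr[t_2=t_1 \mid r\neq s]\bigr]$. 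The return probability is then bounded by applying the HDX hypothesis to $\one_{t_1}$: since $M^+$ is non-lazy, $\langle M^+\one_{t_1},\one_{t_1}\rangle = 0$, hence
\[
\Dist{}(t_1)\Pr[t_2=t_1] = \langle UD\,\one_{t_1},\one_{t_1}\rangle = \langle (UD-M^+)\one_{t_1},\one_{t_1}\rangle \le \gamma\,\Dist{}(t_1).
\]
After the conditioning factor $\tfrac{i+2}{i+1}$ for $r\neq s$, this gives $\|D\tilde h\|^2 \le \tfrac{i+2}{i+1}\gamma\|\tilde h\|^2$, and together with your exact identity yields $|\langle h,A_s h\rangle| \le (i+2)\bigl(1+\tfrac{i+2}{i+1}\bigr)\gamma\|h\|^2 \le 3(i+2)\gamma\|h\|^2 \le 3d\gamma\|h\|^2$. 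Your alternative suggestion of ``correcting $\tilde h$ toward the harmonic subspace'' is not what the paper does and would need substantially more work to make precise; the indicator trick is both simpler and delivers the stated constant.
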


\begin{proof}
\textbf{Item 1.} Assume that $X$ is a $\gamma$-two-sided link expander. We need to show that
\[\norm{M^+_i - UD} \leq \gamma, \]
for all $i < d$, where $M^+_i$ is the non-lazy upper walk.
Let $f$ be a function on $X(i)$, where $i < d$. We have
\[
 \langle M^+_i f, f \rangle =
 \EE_{t \sim \Dist{i+1}} \EE_{x \neq y \in t} [f(t \setminus \{x\}) f(t \setminus\{y\}].
\]
Let $s = t \setminus \{x,y\}$. Since $t \sim \Dist{i+1}$ and $x \neq y \in t$ are chosen at random, we have $s \sim \Dist{i-1}$. Given such an $s$, the probability to get specific $(t,x,y)$ is exactly $w_s(x,y)$ (the factor $1/2$ accounts for the relative order of $x,y$), and so
\begin{equation}
\langle M^+_i f, f \rangle = \EE_{s \sim \Dist{i-1}} \EE_{(x,y) \sim w_s} [f(s \cup \{x\}) f(s \cup \{y\})].
\end{equation}

In other words, we have shown that
\begin{equation} \label{eq:localization-equation}
 \langle M^+_i f, f \rangle = \EE_{s \sim \Dist{i-1}}[ \langle A_s f_s, f_s \rangle],
\end{equation}
where $f_s\colon X_s(0) \to \RR$ is defined by
\[ f_s(x) = f(s \cup \{x\}).\]

We now note that
\[
 \EE_{x \sim w_s}[f(s \cup \{x\})] = (Df)(s).
\]
Therefore we have, by \eqref{eq:link-expander-key-property}, that
\begin{multline*}
\abs{\iprod{M^+_i f, f} - \iprod{UD f, f} } =  \bigl|\EE_{s \sim \Dist{i-1}} \EE_{(x,y) \sim w_s} [f(s \cup \{x\}) f(s \cup \{y\})] - (Df)(s)^2\bigr| \leq \\ \EE_{s \sim \Dist{i-1}} \bigl[\lambda(A_s) \EE_{x \sim w_s}[f(s \cup \{x\})^2]\bigr].
\end{multline*}
If $X$ is a $\gamma$-two-sided link expander then $\lambda(A_s) \leq \gamma$ for all $s$, and so
\[
 \abs{\iprod{(M^+_i - UD) f, f}} \leq \gamma \|f\|^2.
\]

\medskip

\textbf{Item 2.}
 Assume now that $X$ is a $\gamma$-HDX. Our goal is to show that for all $i < d-1$ and $r \in X(i)$,
\[ \lambda(A_r) \leq 3(i+2)\gamma.\]

Using the convention that $X(-1)$ consists of the empty set, for $i=-1$ we have $A_{\emptyset} = M^+_0$, and so $U_{-1}D_{0}$ is zero on the space perpendicular to the constant function. Thus
\[\norm{M^+_0 - UD}  = \lambda(A_\emptyset), \]
and from our assumption $\lambda(A_\emptyset) \leq \gamma$.

Now assume $1 \leq i \leq d-1$, and fix some $r \in X(i-1)$. Let $f\colon X_r(0) \to \RR$ be some eigenfunction of $A_r$, which is perpendicular to the constant function. In order to prove the theorem, we must show that
\[\abs{\frac{\iprod{A_r f, f}}{\iprod{f,f}}} \leq 3(i+1)\gamma.\]

Define a function $\tilde{f} \in C^{i}$ by
\[ \tilde{f}(s) = \begin{cases}
    f(s \setminus r) &  \text{if }r \subset s, \\
    0 & \text{otherwise}.
  \end{cases} \]
Without loss of generality, we may assume that $\|\tilde{f}\|=1$.

In order to obtain a bound on $\lambda(A_r)$, we bound $\iprod{\tilde{f},\tilde{f}}$, $\iprod{M^+_i \tilde{f}, \tilde{f}}$, and $\iprod{UD \tilde{f},\tilde{f}}$ in terms of $f$ and $A_r$.

Observe that the norms of $f$ and $\tilde{f}$ are proportional:
\begin{equation}
  \label{eq:converse-1}
  \iprod{f,f} = \frac{\iprod{\tilde{f},\tilde{f}}}{\Dist{i-1}(r)(i+1)} = \frac{1}{\Dist{i-1}(r)(i+1)}.
\end{equation}
Furthermore, from what we showed in \eqref{eq:localization-equation} we obtain that \[ \iprod{M^+_i \tilde{f}, \tilde{f}} = \EE_{r' \in X(i-1)}[\iprod{A_{r'} \tilde{f}_{r'}, \tilde{f}_{r'}}],\]
where $\tilde{f}_{r'}(x) = \tilde{f}(r' \cup \{x\}).$

Fix some $r' \ne r$. If $\tilde{f}_{r'}(x) \ne 0$ then $\tilde{f}(r' \cup \{x\}) \ne 0$. In particular, this means that $r \subset r' \cup \{x\}$. Since both $r,r'$ are contained in $r' \cup \{x\}$, this means that $r \setminus r' = \{x\}$. Thus there is at most one vertex $x \in X_{r'}(0)$ such that $\tilde{f}_{r'}(x) \ne 0$. Since $A_{r'}$ is a non-lazy operator, this implies that $\iprod{A_{r'} \tilde{f}_{r'}, \tilde{f}_{r'}} = 0$. We remain with
\begin{equation} \label{eq:converse-2}
  \iprod{M^+_i \tilde{f}, \tilde{f}} = \Dist{i-1}(r)\iprod{A_r f, f}.
\end{equation}
In other words, the upper non-lazy random walk is proportional to the local adjacency operator.

We  prove the following claim, which shows that the lower walk scales $\tilde{f}$ by a factor of at most $\frac{i+1}{i}\gamma$:
\begin{claim}\label{f-down-operator-bounding-claim}
	If $f\colon X_r(0) \to \mathbb{R}$ is perpendicular to constant functions then
  	$|\iprod{U_{i-1}D_{i} \tilde{f}, \tilde{f}}| \leq \frac{i+1}{i} \gamma.$
\end{claim}

Assuming the above:
 \begin{multline*}
 \abs{\frac{\iprod{A_r f, f}}{\iprod{f,f}}} = |(i+1) \Dist{i-1}(r)\iprod{A_r f, f}| =
 (i+1)|\iprod{M^+_i \tilde{f}, \tilde{f}}| \leq \\
 (i+1)|\iprod{(M^+_i  - U_{i-1}D_{i})\tilde{f}, \tilde{f}}|  + (i+1)|\iprod{U_{i-1}D_{i} \tilde{f}, \tilde{f}}| \leq (i+1)\left(1 + \frac{i+1}{i}\right) \gamma \leq 3(i+1)\gamma,
 \end{multline*}
where the equalities in the first line use \eqref{eq:converse-1} and \eqref{eq:converse-2}, and the inequalities in the second line use \cref{f-down-operator-bounding-claim}, our assumption that $\|M^+_i  - U_{i-1}D_{i}\| \leq \gamma$, and the triangle inequality.
\end{proof}

We complete the proof of \cref{thm:equivalence} by proving \cref{f-down-operator-bounding-claim}:
\begin{proof}[Proof of \cref{f-down-operator-bounding-claim}]
Since $U_{i-1}D_i$ is PSD, we have $\iprod{U_{i-1}D_i \tilde{f}, \tilde{f}} \geq 0$, and so we may remove the absolute value and prove
\[ \iprod{U_{i-1}D_{i} \tilde{f}, \tilde{f}} \leq \frac{i+1}{i} \gamma.\]

Consider the inner product $\iprod{D_{i} \tilde{f}, D_{i} \tilde{f}}$ --- this is the expectation upon choosing $r' \sim \Dist{i-1}$, and then choosing two $i$-faces $s_1, s_2 \in X(i)$ containing it independently. Hence we decompose to the cases where $r' = r$ and $r' \ne r$:
\begin{gather}
\iprod{D_{i} \tilde{f}, D_{i} \tilde{f}} = \EE_{(r',s_1,s_2)}[f(s_1)f(s_2)] = \notag \\
\label{eq:down-op-1}
\Dist{i-1}(r)\EE_{(r',s_1,s_2)}[\tilde{f}(s_1)\tilde{f}(s_2) | r' = r] + (1-\Dist{i-1}(r))\EE_{(r',s_1,s_2)}[\tilde{f}(s_1)\tilde{f}(s_2) | r' \ne r].
\end{gather}
The first term is $0$, since from independence of $s_1, s_2$:
\[\EE_{(r',s_1,s_2)}[\tilde{f}(s_1)\tilde{f}(s_2) | r' = r] = \EE_{s_1}[f(s_1) | r \subset s_1]^2 = 0,\]
since by assumption $f$ is perpendicular to constant functions.

We saw above that for any $r' \ne r$, there is at most one $i$-face containing $r'$ (which is $s = r \cup r'$) such that $\tilde{f}(s) \ne 0$. For any $r' \ne r$, the value $\tilde{f}(s_1)\tilde{f}(s_2)$ is non-zero only when $s_1 = s_2 = r \cup r'$.
For every $s_1 \in X(i)$, we define the event $E_{s_1}$ to hold when $s_2 = s_1$. Namely, the event $E_{s_1}$ is the event where we choose an edge in the lower walk that is a triple $(s_1,r',s_1)$ where $s_1 \in X(i),r' \in X(i-1), r'\subseteq s_1$. \eqref{eq:down-op-1} is equal to
\[ \eqref{eq:down-op-1}=(1-\Dist{i-1}(r))\EE_{s_1}\bigl[\tilde{f}^2(s_1) \Pr_{r', s_2}[E_{s_1} | r' \ne r] \bigr].\]
Note that if $s_1$ doesn't contain $r$ then $\tilde{f}^2(s_1) = 0$, hence we continue taking expectation over all $s_1 \in X(i)$, even though some of them are unnecessary terms.

If we prove that for every $s \in X(i)$ we have $\Pr_{r', s_2}[E_{s_1} | r' \ne r] \leq \frac{i+1}{i}\gamma$ then
\[(1- \Dist{i-1}(r))\EE_{s_1}[\tilde{f}(s_1)^2 \Pr_{r', s_2}[E_{s_1}] ] \leq \frac{i+1}{i}\gamma \EE_{s_1}[\tilde{f}^2(s_1)] = \frac{i+1}{i}\gamma \iprod{\tilde{f},\tilde{f}} = \frac{i+1}{i}\gamma.\]
Thus we are left with proving the following statement: for all $s_1 \in X(i)$,
\[ \Pr_{r', s_2}[E_{s_1} | r' \ne r] \leq \frac{i+1}{i}\gamma.\]

We first bound the unconditioned probability $\Pr[E_{s_1}] = \Pr_{r',s_2 \in X(i)}[s_2 = s_1 | r' \subset s_1,s_2]$. Fix some $s_1 \in X(i)$, and let $\one_{s_1} \colon X(i) \to \RR$ be its indicator. Notice that $U_{i-1}D_{i}\one_{s_1}(s_1) = \Pr_{r', s_2}[s_2 = s_1]$, and so
\[ \iprod{U_{i-1}D_{i} \one_{s_1}, \one_{s_1}} = \Dist{i}(s_1) U_{i-1}D_{i}\one_{s_1}(s_1) = \Dist{i}(s_1) \Pr[E_{S_1}]. \]
We again use the non-laziness property of $M^+_i$ to assert that $\iprod{M^+_i \one_{s_1}, \one_{s_1}} = 0$.
Since $X$ is a $\gamma$-HDX,
\[ \iprod{U_{i-1}D_{i} \one_{s_1}, \one_{s_1}} = \iprod{(U_{i-1}D_{i} - M^+_i) \one_{s_1}, \one_{s_1}} \leq \norm{U_{i-1}D_{i} - M^+_i} \|\one_{s_1}\|^2 = \gamma \Dist{i}(s_1). \]
Hence $\Pr[E_{s_1}] \leq \gamma$.

Consider now any $s_1 \in X(i)$ containing $r$, and let $r'$ be a random $(i-1)$-face contained in $s_1$.
The probability that $r' \ne r$ is $\frac{i}{i+1}$, and so
\[ \Pr_{r', s_2}[E_{s_1} | r' \ne r]  \leq \frac{i+1}{i} \Pr_{r', s_2}[E_{s_1}] \leq \frac{i+1}{i}\gamma. \qedhere \]
\end{proof}

\subsection[]{Tightness of \cref{thm:equivalence}}
\label{sec:tightness-of-equivalnce-theorem}
In the remainder of the section we give two counterexamples that show that the dependence of \cref{thm:equivalence} on \(\lambda\) and \(d\) is tight up to constants.

To show the tightness of the first item in the theorem, namely, that there are simplicial complexes where \(\norm{M^+_i - UD} \approx \gamma\), it is enough to consider the complete \(d\)-dimensional simplicial complex on \(n\) vertices. On the one hand, the local links of the complete complex are complete graphs on \(n-d+1\) vertices (or more), thus it is a \(\frac{1}{n-d}\)-two sided link expander. By direct calculation,
\begin{align*}
 M^+_d - UD &= \frac{1}{(d+1)(n-d-1)} J(n,d+1) - \left(\frac{1}{n-d} I + \frac{1}{(d+1)(n-d)} J(n,d+1)\right) \\ &=
 \frac{1}{(d+1)(n-d-1)(n-d)} J(n,d+1) - \frac{1}{n-d} I,
\end{align*}
where $J(n,d+1)$ is the adjacency matrix of the Johnson graph. The eigenvalues of $J(n,d+1)$ are $(d+1-j)(n-d-1-j)-j$ for $j=0,\ldots,d+1$ (see \cite{BrouwerCN1989}), and so the eigenvalues of $M^+_d - UD$ are
\[
 \frac{(d+1-j)(n-d-1-j)-j}{(d+1)(n-d-1)(n-d)} - \frac{1}{n-d} =
 -\frac{j(n+1-j)}{(d+1)(n-d-1)(n-d)} \,.
\]
The spectral norm of $M^+_d - UD$ is attained at $j=d+1$, at which point it equals
\[
 \frac{(d+1)(n-d)}{(d+1)(n-d-1)(n-d)} = \frac{1}{n-d-1}.
\]
\medskip

As for the second item, we show a sequence of complexes \(X_n\) with a disconnected link (i.e.\ they are not \(\lambda\)-two sided link expanders for any \(\lambda < 1\)), where
\[\norm{M^+_d - UD} \leq \frac{1}{d} + o_n(1).\]

The complex \(X_n\) is obtained by removing a few faces from the \(d\)-dimensional complete complex, so that there is a cut in a single link. More formally, we define
\[X_n(0) = \set{1,2,\dots,n}\]
and
\[X_n(d) = \binom{[ n]}{d+1} \setminus \sett{\set{1,2,\dots,d-1} \cup \set{x,y}}{d\leq x<y \leq n, \; x+y\equiv1 \pmod{2}},\]
where \(\Dist{d}\) is uniform.

One can observe that the link of \(r_0 = \set{1,2,\dots,d-1}\) contains two connected components --- the even vertices and the odd vertices.

We comment that instead of removing the set of faces we can reduce their weight significantly so that the   link of $r_0$ will remain connected but not an expander. We next claim that the upper and lower walks are spectrally similar.
\begin{claim}
Let \(d > 1\), and let \(X_n\) be the sequence of simplicial complexes defined above. Then
  \[\norm{M^+_d - UD} \leq \frac{1}{d} + o_n(1).\]
\end{claim}

\begin{proof}
We fix \(n\) and denote \(X=X_n\), and for simplicity of computation we assume \(n-d\) is odd. We need to show that for any \(f\colon X(d-1)\to \mathbb{R}\) of norm \(1\),
\[ \abs{\iprod{(M^+_d -UD)f,f}} \leq \frac{1}{d}+o_n(1).\]

We decompose the inner product to local inner products on the link, as in \eqref{eq:localization-equation}:
\begin{equation} \label{eq:tightness-localization}
\abs{\iprod{(M^+_d-UD)f,f}} = \abs{\EE_{r \in X(d-1)}[\iprod{A_rf_r,f_r} - Df(r)^2]} \leq \EE_{r \in X(d-1)} [\lambda(X_r) \|f_r\|^2],
\end{equation}
where \(f_r\colon X_r(0)\to \RR\) is defined by \(f_r(x)=f(r \cup \set{x})\). For every \(r \ne r_0\) we have the following claim:
\begin{claim} \label{claim:other-links-expand}
  Let \(r \in X(d-2), \; r\ne r_0\), then \(\lambda(X_r)=o_n(1)\).
\end{claim}

Substituting this in~\eqref{eq:tightness-localization} and using $\EE_{r \in X(d-1)} \|f_r\|^2 = \|f\|^2 = 1$ and $\lambda(X_{r_0}) \leq 1$ gives
\[
 \abs{\iprod{(M^+_d-UD)f,f}} \leq o_n(1) + \Dist{}(r_0) \|f_{r_0}\|^2.
\]
The last term equals
\[
 \Dist{}(r_0) \|f_{r_0}\|^2 = \Dist{}(r_0) \sum_{x \in X_{r_0}(0)} w_{r_0}(x) f(r_0 \cup \{x\})^2 = \sum_{x \in X_{r_0}(0)} \frac{\Pi(r_0 \cup \{x\})}{d} f(r_0 \cup \{x\})^2 \leq \frac{1}{d}.
\]

We prove \cref{claim:other-links-expand} below.
\end{proof}

\begin{proof}[Proof of \cref{claim:other-links-expand}]
Denote \(m=n-d+1\), and without loss of generality, assume \(m\) is even.

Let \(r \ne r_0\). If \(|r \setminus r_0| \geq  3\), then \(X_r\) is the complete graph on $m$ vertices and is a \(\frac{1}{m-1}\)-two-sided spectral expander.

If \(|r \setminus r_0| = 2\) then \(X_r\) is one of the following:
\begin{enumerate}
    \item The complete graph --- when \(x,y \in r \setminus r_0\) have the same parity.
    \item A complete graph with only one edge missing --- when \(x,y \in r \setminus r_0\) have different parity.
\end{enumerate}
In both cases these are \(O \left (\frac{1}{m} \right )\) spectral expanders (skipping a short calculation in the second case).

If \(|r \setminus r_0| = 1\) then the graph \(X_r\) is obtained by taking the complete graph, and removing \(\frac{m}{2}\) of the edges that are adjacent to the vertex \(v_0 \in r_0 \setminus r\). We claim that this graph is still a \(o_n(1)\)-spectral expander.

The first vertex \(v_0\) is connected to \(\frac{m}{2}\) vertices, thus there are \(\frac{m}{2}-1\) vertices which are connected to all vertices besides themselves and \(v_0\), and \(\frac{m}{2}\) vertices that are connected to all vertices besides themselves.

The adjacency operator for this graph is a block matrix, formed by partitioning the rows and the columns into three parts, of sizes $1,\frac m2-1,\frac m2$, respectively:
\[
 A =
 \left(\begin{array}{c|c|c}
 0_{1 \times 1} & 0_{1 \times (\frac m2-1)} & \frac{2}{m} J_{1 \times \frac m2} \\\hline\Tstrut\Bstrut
 0_{(\frac m2-1) \times 1} & \frac{1}{m-2} K_{(\frac m2-1) \times (\frac m2-1)} & \frac{1}{m-2} J_{(\frac m2-1) \times \frac m2} \\\hline\Tstrut\Bstrut
 \frac{1}{m-1}J_{\frac m2 \times 1} & \frac{1}{m-1} J_{\frac m2 \times (\frac m2-1)} & \frac{1}{m-1} K_{\frac m2 \times \frac m2}
 \end{array}\right) \,,
\]
where $K$ is the adjacency matrix of a complete graph, and $J$ is the all-1 matrix.

The non-constant eigenvectors of $\frac{1}{m-2} K_{(\frac m2-1) \times (\frac m2-1)}$ (middle block) lift to an $(\frac m2-2)$-dimensional eigenspace of $-\frac{1}{m-2}$. Similarly, the non-constant eigenvectors of $\frac{1}{m-1} K_{\frac m2 \times \frac m2}$ (bottom-left block) lift to an $(\frac m2-1)$-dimensional eigenspace of $-\frac{1}{m-1}$. The remaining three eigenvalues correspond to eigenvectors which are constant on blocks. A straightforward calculation shows that these eigenvalues are roots of the cubic polynomial
\[
 (4m^2-12m+8) \lambda^3 - (4m^2 - 18m + 16)\lambda^2 - (8m - 16) \lambda + (2m - 8) = 0.
\]
The trivial eigenvalue $\lambda = 1$ corresponds to the constant eigenvector. The other two are roots of the quadratic
\[
 (4m^2-12m+8) \lambda^2 + (6m-8) \lambda - (2m - 8) = 0.
\]
The roots of this quadratic are
\[
 \lambda = \frac{-6m+8 \pm \sqrt{32m^3 - 188m^2 + 352m - 192}}{8m^2-24m+16} = \pm\Theta\left(\frac{1}{\sqrt{m}}\right). \qedhere
\]
\end{proof}

\section{Expanding posets (eposets)} \label{sec:eposet}

In this section, we describe a setting generalizing simplicial complexes, namely \emph{measured posets}. These are partially ordered sets (a set $X$ with a partial order $\leq$ on it) whose elements are partitioned into levels $X(j)$, and that have some additional properties stated below. As in simplicial complexes, we can define $C^j$ as the space of real-valued functions on $X(j)$, and averaging operators $U_j\colon C^{j} \to C^{j+1}$ and  $D_{j+1}\colon C^{j+1} \to C^{j}$.

We generalize the notion of a $\gamma$-HDX to a $\gamma$-\emph{expanding poset} (eposet) --- a measured poset with operators $D_j,U_j$ such that
\[ \norm{D_{j+1}U_j - r_jI - \delta_j U_{j-1}D_j } \leq \gamma, \]
for $\gamma <1$, all non-extreme levels $j$ of the poset, and some constants $r_j, \delta_j$.

We begin the section by discussing the formal notion of an eposet. We then generalize \cref{thm:approximate-orthogonality-hdx} to all eposets, and prove it in the general setting. Finally, we show that if our measured poset is a simplicial complex, then $r_j \approx \frac{1}{j+2}, \delta_j \approx 1-\frac{1}{j+2}$, under the assumption that the laziness of the lower walk is small.

\subsection{Measured posets} \label{sec:measured}

A \emph{graded} (or \emph{ranked}) \emph{poset} is a partially ordered set (poset) $(X,\leq)$ equipped with a \emph{rank} function $\rho\colon X \to \mathbb{N} \cup \{0,-1\}$ such that:
\begin{enumerate}
\item For all $x,y \in X$, if $x \leq y$ then $\rho(x) \leq \rho(y)$.
\item For every $x,y \in X$, if  $y$ is minimal with respect to elements greater than $x$ (i.e.\ $x\leq y$), then $\rho(y) = \rho(x) + 1$.
\end{enumerate}
We denote the set of elements of rank $j$ by $X(j)$. We assume that there is a unique element of minimal rank which we denote by $\emptyset$, and so $X(-1) = \{\emptyset\}$.

We say that a graded poset is \emph{$d$-dimensional} if the maximal rank of any element in $X$ is $d$. We say that a $d$-dimensional graded poset is \emph{pure} if all maximal elements are of rank $d$, that is, for every $t \in X$ there exists $s \in X(d)$ such that $t \leq s$.

For example, any simplicial complex is a graded poset, if we take $\leq$ to be the containment relation and $\rho$ to be the cardinality of a face, minus one.
Another useful example to keep in mind is the \emph{Grassmann poset} $\Gr_q(n,d)$, whose elements are subspaces of dimension at most $d+1$ of $\mathbb{F}_q^n$, and the order is by containment. The rank function for the Grassmann poset is $\rho(U) = \dim(U) - 1$, and so $X(j) = \{U \subseteq \mathbb{F}_q^n : \dim(U) = j+1 \}$.

\begin{definition}[Measured poset]
Let $X$ be a finite graded pure $d$-dimensional poset, with a unique minimum element $\emptyset$ of rank $-1$. We say that $X$ is \emph{measured} by a (joint) distribution $\Distv = (\Dist{d},\Dist{d-1},\dots,\Dist{-1})$ if it satisfies the following properties\footnote{As is common, we abuse notation $\Pi_i$ to refer to both the distribution as well as the random variable sampled according to the distribution.}:
\begin{enumerate}
\item $\Dist{i} \in X(i)$ for all $i$.
\item $\Dist{i-1} \subset \Dist{i}$ for all $i > -1$.
\item The sequence $\Dist{d},\dots,\Dist{-1}$ has the Markov property: $\Dist{i-1}$ depends only on $\Dist{i}$ for all $i > -1$.
\end{enumerate}
We denote the real-valued function spaces on $X(j)$ by $C^j$. We denote the averaging operators of the steps in the Markov process by $U_j\colon C^j \to C^{j+1}, D_{j+1}\colon C^{j+1} \to C^j$.
\end{definition}
The operators $U_j$ and $D_{j+1}$ are adjoint with respect to the inner product given by \[
\iprod{f,g} = \E_{x\sim \Pi_j}f(x)g(x) \]
and thus both $U_jD_{j+1}$ and $D_{j+1}U_{j}$ are positive semi-definite since, for example, for all $f$ \[\iprod{U_jD_{j+1}f,f} = \iprod{D_{j+1}f,D_{j+1}f} \ge 0.\]
The process we defined for the distribution $\Distv$ in a simplicial complex is an example of a measured poset. For the Grassmann poset mentioned above, we also have a similar probabilistic experiment:
\begin{enumerate}
	\item Choose a subspace of dimension $d+1$, $s_d \in X(d)$, uniformly at random.
	\item Given a subspace $s_i$ of dimension $i+1$, choose $s_{i-1} \in X(i-1)$ to be a uniformly random codimension~1 subspace of $s_i$.
\end{enumerate}

\medskip

An analog for \cref{thm:decomposition} holds for any measured poset. We say that a $k$-dimensional measured poset $X$ is \emph{proper} if for all $j \leq k-1$, $\ker U_j = \{0\}$. Also, as before we denote
\[H^{-1} = C^{-1}, \quad H^i = \ker D_i, \quad V^i = U^{k-i}H^i. \]

\begin{theorem} \label{thm:decomposition-asd}
If $X$ is a proper $k$-dimensional measured poset then we have the following decomposition of $C^k$:
\[ C^k = V^k \oplus V^{k-1} \oplus \cdots \oplus V^{-1}\;. \] In other words, for every function $f\in C^k$ there is a unique choice of $h_i \in H^i$ such that the functions $f_i = U^{k-i}h_i$ satisfy $f = f_{-1} + f_{0} + \ldots + f_k$.
\end{theorem}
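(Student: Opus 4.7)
The plan is to mimic the proof of \cref{thm:decomposition} almost verbatim, since the only three ingredients used there were that $U_i$ and $D_{i+1}$ are mutually adjoint, that $\ker U_i$ is trivial (which is precisely properness), and standard finite-dimensional linear algebra. None of these ingredients uses anything about simplicial complexes beyond what already holds in the measured-poset setting.

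First I would check that $U_i$ and $D_{i+1}$ are adjoint with respect to the inner products $\langle f,g\rangle_{C^i} = \EE_{\Dist{i}}[fg]$. Using the Markov property of the sequence $\Distv$, for $f \in C^{i+1}$ and $g \in C^i$ both $\langle g, D_{i+1}f\rangle$ and $\langle U_i g, f\rangle$ equal $\EE_{(t,s) \sim (\Dist{i},\Dist{i+1})}[g(t)f(s)]$, so $D_{i+1}^\ast = U_i$. In particular, $\im U_i = (\ker D_{i+1})^\perp$, giving the orthogonal decomposition $C^{i+1} = \ker D_{i+1} \oplus \im U_i = H^{i+1} \oplus \im U_i$.

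Next I would prove existence of the decomposition by induction on $\ell$, showing that every $f \in C^\ell$ can be written as $\sum_{i=-1}^{\ell} U^{\ell-i} h_i$ with $h_i \in H^i$. The base case $\ell = -1$ is immediate since $C^{-1} = H^{-1}$. For the inductive step applied to $f \in C^{\ell+1}$, the decomposition $C^{\ell+1} = H^{\ell+1} \oplus \im U_\ell$ lets me write $f = h_{\ell+1} + U_\ell g$ with $h_{\ell+1} \in H^{\ell+1}$ and $g \in C^\ell$; applying the inductive hypothesis to $g$ finishes this step.

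Finally I would derive uniqueness by a dimension count. Properness gives $\ker U_i = \{0\}$ for $-1 \leq i \leq k-1$, so $U_i$ is injective and $\dim (\im U_i) = \dim C^i$; combined with $\ker D_{i+1} = (\im U_i)^\perp$, this yields $\dim H^{i+1} = \dim C^{i+1} - \dim C^i$ for $i \geq -1$, and telescoping gives $\sum_{i=-1}^{k} \dim H^i = \dim C^k$. The map $\varphi\colon H^{-1} \times \cdots \times H^k \to C^k$, $(h_{-1},\ldots,h_k) \mapsto \sum_{i=-1}^k U^{k-i} h_i$, is surjective by the existence step, and since its domain and codomain have the same dimension it is also injective, proving uniqueness. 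The only potential subtlety — and the only place where anything beyond the simplicial-complex argument could be needed — is the adjointness computation, which as noted above is immediate from the Markov property; the rest is purely formal and translates without modification.
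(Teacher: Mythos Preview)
Your proposal is correct and matches the paper's approach exactly: the paper states that the proof is identical to that of \cref{thm:decomposition} and omits it, and you have faithfully reproduced that argument, correctly noting that adjointness of $U_i$ and $D_{i+1}$ (the one ingredient requiring verification in the general measured-poset setting) follows from the Markov property of $\Distv$.
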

\begin{proof}
We first prove by induction on $\ell$ that every function $f \in C^\ell$ has a representation $f = \sum_{i=-1}^\ell U^{\ell-i} h_i$, where $h_i \in H^i$. This trivially holds when $\ell = -1$. Suppose now that the claim holds for some $\ell<k$, and let $f \in C^{\ell+1}$. Since $D^{\ell+1}\colon C^{\ell+1} \to C^\ell$ is a linear operator, we can decompose $C^{\ell+1}$ to $\ker D_{\ell+1} \oplus (\ker D_{\ell+1})^\bot$. It is well known that $(\ker D_{\ell+1})^\bot = \im D_{\ell+1}^\ast$ so we have $C^{\ell+1} = \ker D_{\ell+1} \oplus \im D_{\ell+1}^\ast = \ker D_{\ell+1} \oplus \im U_\ell$, and therefore we can write $f = h_{\ell+1} + Ug$, where $h_{\ell+1} \in H^{\ell+1}$ and $g \in C^\ell$. Applying induction, we get that $g = \sum_{i=-1}^\ell U^{\ell-i} h_i$, where $h_i \in H^i$. Substituting this in $f = h_{\ell+1} + Ug$ completes the proof.

It remains to show that the representation is unique. Since $\ker U_{i-1} = \ker D_i^*$ is trivial, $\dim H^i = \dim C^i - \dim C^{i-1}$ for $i \geq 0$. This shows that $\sum_{i=-1}^k \dim H^i = \dim C^k$. Therefore the operator $\varphi\colon H^{-1} \times \cdots \times H^k \to C^k$ given by $\varphi(h_{-1},\ldots,h_k) = \sum_{i=-1}^k U^{k-i} h_i$ is not only surjective but also injective. In other words, the representation of $f$ is unique.
\end{proof}

\subsection{Sequentially differential posets}
\emph{Sequentially differential posets} were first defined and studied (in a slightly different form) by Stanley~\cite{Stanley1988,Stanley1990}.
\begin{definition}[Sequentially differential posets] \label{def:SD}
\emph{Sequentially differential posets} are measured posets whose averaging operators $U$, $D$ satisfy an equation
\begin{equation}\label{eq:sd} D_{j+1}U_j - \delta_j U_{j-1}D_j - r_j I = 0, \end{equation}
for some $r_j, \delta_j \in \RR_{\geq 0}$ and all $0 \leq j \leq k-1$.
\end{definition}

For example, the complete complex satisfies this definition with parameters
\[ \delta_i = \left (1-\frac{1}{i+2}\right )\left (1-\frac{1}{n-i} \right)^{-1} \text{ and } r_i = 1- \delta_i. \]
In other words,
\[ DU - \left (1-\frac{1}{i+2}\right)\left (1-\frac{1}{n-i}\right)^{-1} UD - \left (1 - \left (1-\frac{1}{i+2}\right )\left (1-\frac{1}{n-i}\right )^{-1} \right) I = 0. \]

The Grassmann poset $\Gr_q(n,d)$ is also a sequentially differential poset with
\begin{equation}\label{eq:Grassmann-SD}
  \delta_i = 1 - \left (1-\frac{q-1}{q^{i+2}-1}\right) \left (1 - \frac{q-1}{q^{n-i}-1}\right)^{-1} \text{ and } r_i = 1-\delta_i.
\end{equation}

The above following from the claim below, that the reader can verify by direct calculation:
\begin{claim} \label{claim:SD-criterion}
Let $X$ be a measured poset, and suppose we can decompose:
\[D_{i+1}U_i = \alpha_iI + (1-\alpha_i)M_i,  \]
\[U_{i-1}D_i = \beta_iI + (1-\beta_i)M_i,  \]
where $0 \leq \alpha_i, \beta_i \leq 1$ are constants and $M_i$ is some operator.
Then
\[ D_{i+1}U_i - r_i I - \delta_i U_{i-1}D_i = 0, \]
where
\[ \delta_i = (1-\alpha_i)(1-\beta_i)^{-1} \text{ and } r_i = 1- \delta_i. \]
\end{claim}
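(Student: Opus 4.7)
The claim is purely algebraic — no structural properties of the poset, the operators $U_i$, $D_i$, or the auxiliary operator $M_i$ are used beyond the two given identities. The plan is therefore to treat the decompositions of $D_{i+1}U_i$ and $U_{i-1}D_i$ as formal linear relations in the two "basis elements" $I$ and $M_i$, substitute them into the expression $D_{i+1}U_i - r_i I - \delta_i U_{i-1}D_i$, and read off the coefficients.

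Concretely, after substitution the expression takes the form $a\, I + b\, M_i$, where $a = \alpha_i - r_i - \delta_i\beta_i$ and $b = (1-\alpha_i) - \delta_i(1-\beta_i)$. To make the whole expression vanish, I would first force $b=0$: since $\beta_i < 1$ (otherwise the second decomposition collapses and $M_i$ is irrelevant), this uniquely determines $\delta_i = (1-\alpha_i)/(1-\beta_i)$, matching the claimed value. Then I would force $a=0$, which gives $r_i = \alpha_i - \delta_i \beta_i$.

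The only non-trivial step is checking that this value of $r_i$ agrees with the claimed value $1-\delta_i$. This is a one-line manipulation: using $\delta_i(1-\beta_i) = 1-\alpha_i$ (which is exactly the $b=0$ equation), one rewrites $\alpha_i - \delta_i\beta_i = \alpha_i - \delta_i + \delta_i(1-\beta_i) = \alpha_i - \delta_i + (1-\alpha_i) = 1 - \delta_i$, as required. I do not anticipate any real obstacle here; the claim is a bookkeeping identity, and its role in the paper is presumably to justify the specific parameter values $\delta_i, r_i$ reported for the complete complex and the Grassmann poset in equation~\eqref{eq:Grassmann-SD}.
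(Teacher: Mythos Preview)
Your proposal is correct and is exactly the direct calculation the paper leaves to the reader; the paper offers no proof beyond the phrase ``the reader can verify by direct calculation.'' Your observation that $\beta_i \neq 1$ is implicitly required (so that $(1-\beta_i)^{-1}$ makes sense) is a fair point and consistent with the paper's intended applications, where $\beta_i$ is a laziness probability strictly less than~$1$.
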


In both the complete complex and the Grassmann poset $\Gr_q(n,d)$, the non-lazy upper walk and the non-lazy lower walk are \emph{the same} --- given $t_1 \in X(i)$, our choice for $t_2 \in X(i)$ is a set (or subspace in the Grassmann case) that shares an intersection of size (resp.\ dimension) $i$ with $t_1$ (with uniform probability). The only difference between $DU$ and $UD$ is the probability to stay in place.
Thus we can decompose:
\[D_{i+1}U_i = \alpha_iI + (1-\alpha_i)M_i,  \]
\[U_{i-1}D_i = \beta_iI + (1-\beta_i)M_i,  \]
where $M_i$ is the non-lazy upper (or lower) random walk. In the simplicial complex case
\[\alpha_i = \frac{1}{i+2}, \; \beta_i = \frac{1}{n-i},\]
and in the Grassmann case
\[\alpha_i = \frac{q-1}{q^{i+2}-1}, \; \beta_i = \frac{q-1}{q^{n-i}-1}.\]

\medskip

We relax \cref{def:SD} to an {\em almost} sequentially differential poset --- a measured poset that approximately satisfies such an identity:
\begin{definition}[Expanding Poset]\label{def:eposet}
Let $\vec{r}, \vec{\delta} \in \RR_{\geq 0}^{k}$, and let $\gamma < 1$. We say that $X$ is an \emph{$(\vec{r},\vec{\delta},\gamma)$-expanding poset} (or \emph{$(\vec{r},\vec{\delta},\gamma)$-eposet}) if for all $j \leq k-1$:
\begin{equation}\label{eq:asd}
\norm{D_{j+1} U_j - r_j I - \delta_j U_{j-1} D_j} \leq \gamma.
\end{equation}
\end{definition}
A sequentially differential poset is an eposet with $\gamma = 0$. As we saw in \eqref{eq:SD-def-HDX}, a $\gamma$-HDX is an $(\vec{r}, \vec{\delta}, \gamma)$-eposet, where $r_j = \frac{1}{j+2}$ and $\delta_j = 1 - \frac{1}{j+2}$.

We can use \eqref{eq:Grassmann-SD} to assert that $\Gr_q(n,d)$ is an $(\vec{r},\vec{\delta},\gamma)$-eposet for $r_i = \frac{q-1}{q^{i+2}-1}, \delta_i = 1 - r_i$, and $\gamma = O(1/q^{n-d})$. While this only shows that the $\Gr_q(n,d)$ is an \emph{eposet} (even though it is truly sequentially differential), the parameters are much simpler, thus calculations regarding the random walks are easier (see for instance the calculations in \cref{sec:grassmann}).

\subsection{Almost orthogonality of decomposition} \label{sec:orthogonality}

In this section we show that in an eposet, the spaces $V_i$ are almost orthogonal to one another. Moreover, we show that these spaces are ``almost eigenspaces'' of the operator $DU$.

\begin{theorem} \label{thm:approximate-orthogonality-eposets}

Let $X$ be a $k$-dimensional $(\vec{r},\vec{\delta},\gamma)$-eposet.
For every function $f$ on $C^\ell$ for $\ell \leq k$, the decomposition $f = f_{-1} + \cdots + f_\ell$ of \cref{thm:decomposition-asd} satisfies the following properties, when $\gamma$ is small enough (as a function of $k$ and the eposet parameters):
\begin{itemize}
\item For $i \neq j$, $|\langle f_i, f_j \rangle| = O(\gamma) \|f_i\| \|f_j\|$.
\item $\|f\|^2 = (1 \pm O(\gamma))(\|f_{-1}\|^2 + \cdots + \|f_\ell\|^2)$, and for all $i$, $\|f\|^2 = (1 \pm O(\gamma))(\|f_{\leq i}\|^2 + \|f_{>i}\|^2)$.
\item If $\ell < k$, the $f_i$ (for $i \geq 0$) are approximate eigenvectors of $DU$: $\|DUf_i - r^\ell_{\ell-i+1} f_i\| = O(\gamma) \|f_i\|$, where

\begin{equation} \label{eq:eigenvalues-of-DU}
r^\ell_i = r_\ell + \sum_{j=\ell-i}^{\ell-1} \left ( \prod_{t=j+1}^\ell \delta_t \right) r_j .
\end{equation}
(Note $DUf_{-1} = r^\ell_{\ell+2} f_{-1}$, where $r^\ell_{\ell+2} = 1$.)

\item If $\ell < k$ then $\langle DUf,f \rangle = \sum_{i=-1}^\ell r^\ell_{\ell-i+1} \|f_i\|^2 \pm O(\gamma)\|f\|^2$.
\end{itemize}
The hidden constant in the $O$~notations depends only on $k$ and the eposet parameters $(\vec{r},\vec{\delta},\gamma)$ and not on the the eposet size $|X|$. In particular, the last item implies that if $\vec{r} > 0$ then for a small enough $\gamma$, the poset is proper.
\end{theorem}
In a measured poset, the decomposition of \cref{thm:decomposition-asd} is not necessarily orthogonal. However, this theorem shows that for an eposet, the decomposition is \emph{almost} orthogonal.

\begin{remark}
In the special case of a \emph{sequentially differential poset}, i.e.\ $\gamma = 0$, we \emph{do} get that the decomposition in \cref{thm:decomposition-asd} is orthogonal, and that the decomposition $C^\ell = V_{-1} \oplus \cdots \oplus V_\ell$ is a decomposition to eigenspaces of $DU$: for all $f_i \in V_i$,
\[DU f_i = r^\ell_i f_i, \]
for the $r^\ell_i$ given in \eqref{eq:eigenvalues-of-DU}.
\end{remark}

\medskip

Recall our convention that for $f \in C^{\ell-j}$,
\[ U^jf = U_{\ell-1}\cdots U_{\ell-j+1}U_{\ell-j}f  \in C^\ell .\]

We first show how the third item in \cref{thm:approximate-orthogonality-eposets} implies the rest. The following proposition says that the decomposition in \cref{thm:decomposition-asd} is a decomposition of ``approximate eigenspaces" of $UD$. We postpone its proof to the end of this section, and use it first to obtain the full statement of \cref{thm:decomposition-asd}. Recall that $H^{\ell-j} = \ker D_{\ell-j} \subseteq C^{\ell-j}$.
\begin{proposition} \label{prop:approx-eigenvalues}
    Let $X$ be an $(\vec{r}, \vec{\delta}, \gamma)$-eposet, and let $h \in H^{\ell-j}$. Then $U^j h \in V^{\ell-j}$ is an approximate eigenvector of $D_{\ell+1} U_\ell$ with eigenvalue $r^\ell_{j+1}$:
    \[ \|DU (U^j h) - r^\ell_{j+1} (U^j h)\| = O( \gamma) \norm{h} . \]
\end{proposition}

We proceed by showing that these approximate eigenspaces $V^j$ are approximately orthogonal.

\begin{lemma} \label[lemma]{lem:approximate-orthogonality}
Suppose that $X$ is a $k$-dimensional $(\vec{r}, \vec{\delta}, \gamma)$-eposet, let $\ell < k$, let $i \ne j$, and let $f_i = U^{\ell-i} h_i, f_j = U^{\ell-j} h_j$ for $h_i \in H^i, h_j \in H^j$, as in \cref{thm:decomposition}. Then
\[ \iprod{f_i, f_j } = O(\gamma) \norm{h_i} \norm{h_j}, \]
where the hidden constant depends on $k,\vec{\delta},\vec{r}$ only.
\end{lemma}

\begin{proof}
  Assume without loss of generality that $i>j$. To prove the statement we use \cref{prop:approx-eigenvalues} and induction on $m=\ell-i$. The base case where $m=\ell-i=0$ (or $\ell=i$) follows from the fact that $h_\ell$ is orthogonal to $f_{\ell-j}$, for any $j \geq 1$ since it is in $H^\ell = \ker D_\ell$. Indeed, $\iprod{f_\ell,f_{\ell-j}} = \iprod{h_\ell,U^j h_{\ell-j}} = \iprod{D^j h_\ell,h_{\ell-j}}=0$.

  Assuming the statement holds for $m$, we show it for $m+1$. Let $f_i,f_j$ be as above (where $\ell-i=m+1$). Then
  \[ \iprod{f_i, f_j} = \iprod{U^{\ell-i} h_i, U^{\ell-j} h_j} = \iprod{DU^{\ell-i} h_i, U^{(\ell-1)-j} h_j} .\]
  As $\norm{DU^{\ell-i} h_i - r^\ell_{\ell-i+1} U^{(\ell-1)-i}h_i} = O(\gamma) \norm{h_i}$ then by Cauchy-Schwarz this upper bounded by
  \[O(\gamma) \norm{h_i} \norm{U^{(\ell-1)-j} h_j} + r^\ell_{(\ell-1)-i} \iprod{U^{(\ell-1)-i}h_i, U^{(\ell-1)-j} h_j}.\]
  By the induction hypothesis on $m=\ell-1-i \geq 0$ this less or equal to
  \[O(\gamma)\norm{h_i}\norm{U^{(\ell-1)-j} h_j} + r^\ell_{(\ell-1)+i} O(\gamma) \norm{h_i} \norm{h_j}.\]
  The Up operator is an averaging operator so $\norm{U^{(\ell-1)-j} h_j} \leq \norm{h_j}$ and the lemma follows.
\end{proof}

The preceding lemma gives an error estimate in terms of the norms $\norm{h_i}$. The following lemma enables us to express the error in terms of the norms $\norm{f_i}$.

\begin{lemma} \label[lemma]{lem:approximate-U}
For any $k$-dimensional $(\vec{r}, \vec{\delta}, \gamma)$-eposet, let $\ell < k$ and let $f_i = U^{\ell-i} h_i$ for $h_i \in H^i$, as in \cref{thm:decomposition}. Then
\[ \norm{f_i} = (1 \pm O(\gamma)) \rho^\ell_{\ell-i} \norm{h_i}, \]
where $\rho^\ell_j = \prod_{t=0}^{j} r^{\ell-t}_{j-t}$, and the hidden constant depends only on $k, \vec{r},\vec{\delta}$.
\end{lemma}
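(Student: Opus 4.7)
The plan is to prove \cref{lem:approximate-U} by induction on $j := \ell - i$, essentially unrolling the recursion $\norm{U^j h_i}^2 \approx r^\ell_j \norm{U^{j-1} h_i}^2$ which comes from combining the approximate commutation relation of \cref{clm:generalized-D-U-relation} with the vanishing condition $D h_i = 0$. The base case $j = 0$ is immediate since $U^0 h_i = h_i$ (with the appropriate convention that the outer product defining $\rho^\ell_0$ reduces to $1$).

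For the inductive step, I would begin by using adjointness of $U$ and $D$ to rewrite
\[
\norm{U^j h_i}^2 = \iprod{U^j h_i,\, U^j h_i} = \iprod{D U^j h_i,\, U^{j-1} h_i}.
\]
Now I invoke \cref{clm:generalized-D-U-relation} at the top level $\ell$, which gives
\[
D U^j h_i = r^\ell_j\, U^{j-1} h_i + \delta^\ell_j\, U^j D h_i + \Gamma_i,
\qquad \norm{\Gamma_i} = O(\gamma)\norm{h_i},
\]
where the hidden constant depends only on $k, \vec{r}, \vec{\delta}$. Since $h_i \in H^i = \ker D_i$, the middle term vanishes, leaving
\[
\norm{U^j h_i}^2 = r^\ell_j\, \norm{U^{j-1} h_i}^2 + \iprod{\Gamma_i, U^{j-1} h_i},
\]
and Cauchy--Schwarz bounds the last term by $O(\gamma) \norm{h_i}\, \norm{U^{j-1} h_i}$.

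To close the induction, I apply the hypothesis at level $\ell - 1$ to the same function $h_i \in H^i = H^{(\ell-1)-(j-1)}$, obtaining $\norm{U^{j-1} h_i} = (1 \pm O(\gamma))\, \rho^{\ell-1}_{j-1} \norm{h_i}$. Substituting this into the display above and extracting the square root yields the desired relation between $\rho^\ell_j$ and $\rho^{\ell-1}_{j-1}$, and unrolling the recursion produces the stated product formula for $\rho^\ell_{\ell-i}$.

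The main obstacle I anticipate is the bookkeeping of $O(\gamma)$ errors across the $O(k)$ iterations of the induction: each step introduces both an additive error from $\Gamma_i$ and a multiplicative error from the induction hypothesis, and one must verify these compose to a single $(1 \pm O(\gamma))$ factor rather than blowing up. This is where the standing assumption $\vec{r} > 0$ is essential: it guarantees that the constants $r^\ell_j$ encountered in the recursion are bounded away from zero (uniformly in the eposet size, depending only on $k$, $\vec{r}$, $\vec{\delta}$), so taking the square root at each step is well-defined and the relative error remains $O(\gamma)$ after at most $k$ unrollings. A mild technical point is ensuring $\gamma$ is small enough (as a function of $k, \vec{r}, \vec{\delta}$) so that $\rho^\ell_j$ stays positive throughout the induction, which is exactly the regime in which the companion theorem \cref{thm:approximate-orthogonality-eposets} is stated.
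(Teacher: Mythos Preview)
Your proposal is correct and follows essentially the same idea as the paper: iterate \cref{clm:generalized-D-U-relation}, using $Dh_i=0$ to kill the $U^jD$ term at each step, and control the accumulated $O(\gamma)$ error using that the $r$-constants are bounded away from zero. The only cosmetic difference is that the paper moves all the $U$'s across at once, computing $D^{j}U^{j}h_i \approx \rho^\ell_j\, h_i$ directly and then taking a single inner product with $h_i$; this yields $\|f_i\|^2 \approx \rho^\ell_j\|h_i\|^2$ without ever taking a square root, so the bookkeeping you flag in your last paragraph simply does not arise in the paper's version.
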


\begin{proof}
By direct calculation with \cref{prop:approx-eigenvalues} we obtain that for any $h \in \ker D$:
\[D^jU^j h =r^\ell_j D^{j-1}U^{j-1} h + \Gamma_1 = \dots =  \rho^\ell_j h + \sum_{t=1}^j \Gamma_t, \]
where $\Gamma_t$ is the remainder, and $\norm{\Gamma_t} = O(\gamma) \norm{h}$ for all $t$. Thus
\[\norm{D^jU^j h - \rho^\ell_j h} = O(\gamma)\norm{h}.\]
Hence using Cauchy--Schwarz,
\[\|f_i\|^2 = \iprod{U^{\ell-i} h_i, U^{\ell-i} h_i} = \iprod{D^{\ell-i} U^{\ell-i} h_i, h_i} = \rho^\ell_{\ell-i} \|h_i\|^2 \pm O(\gamma) \|h_i\|^2. \qedhere \]
\end{proof}

Combining \cref{lem:approximate-orthogonality} and \cref{lem:approximate-U}, we obtain the following corollary, which proves the first item of \cref{thm:approximate-orthogonality-eposets}.

\begin{corollary} \label[corollary]{cor:approximate-orthogonality}
Suppose that $X$ be a $k$-dimensional $(\vec{r}, \vec{\delta}, \gamma)$-eposet, let $\ell < k$, and let $f \in C^\ell$ have the decomposition $f = f_{-1} + \cdots + f_\ell$, as in \cref{thm:decomposition}. Then for $i \neq j$ and small enough $\gamma$,
\[ \langle f_i, f_j \rangle = O(\gamma) \|f_i\| \|f_j\|, \]
where the hidden constant depends only on $k, \vec{r}, \vec{\delta}$.
\end{corollary}

As a consequence, we obtain an \emph{approximate $L_2$ mass formula}, constituting the second item of \cref{thm:approximate-orthogonality-eposets}:

\begin{corollary} \label[corollary]{cor:approximate-l2-mass}
Under the conditions of \cref{cor:approximate-orthogonality}, for every $i \leq j$ we have
\[
 \|f_i + \cdots + f_j\|^2 = (1 \pm O(\gamma)) (\|f_i\|^2 + \cdots + \|f_j\|^2),
\]
where the hidden constant depends only on $k, \vec{r}, \vec{\delta}$.

In particular,
\[
 \|f\|^2 = (1 \pm O(\gamma)) (wt_{\leq i}(f) + wt_{>i}(f)) = (1 \pm O(\gamma))(\|f_{\leq i}\|^2 + \|f_{>i}\|^2).
\]
\end{corollary}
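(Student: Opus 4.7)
The plan is to deduce this purely algebraically from the cross-term bound of \cref{cor:approximate-orthogonality}, without touching the underlying poset structure again. First I would expand
\[
\|f_i + \cdots + f_j\|^2 = \sum_{a=i}^{j} \|f_a\|^2 + 2\sum_{i \leq a < b \leq j} \langle f_a, f_b \rangle,
\]
and apply \cref{cor:approximate-orthogonality} termwise to bound each cross-term $|\langle f_a, f_b \rangle| \leq C\gamma\, \|f_a\|\|f_b\|$, where $C$ depends only on $k, \vec{r}, \vec{\delta}$. Then AM--GM gives $\|f_a\|\|f_b\| \leq \tfrac{1}{2}(\|f_a\|^2 + \|f_b\|^2)$, and since the number of pairs $(a,b)$ is at most $\binom{k+2}{2}$, a constant depending only on $k$, the total off-diagonal contribution is absorbed into an $O(\gamma)(\|f_i\|^2 + \cdots + \|f_j\|^2)$ error. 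This yields the first displayed equality.

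For the ``In particular'' statement, the equality $\|f\|^2 = (1\pm O(\gamma))(wt_{\leq i}(f) + wt_{>i}(f))$ is just the case $i=-1, j=\ell$ of the first part combined with the definitions $wt_{\leq i}(f) = \sum_{a \leq i}\|f_a\|^2$ and $wt_{>i}(f) = \sum_{a>i}\|f_a\|^2$. For the second equality, write $f = f_{\leq i} + f_{>i}$ and expand:
\[
\|f\|^2 = \|f_{\leq i}\|^2 + \|f_{>i}\|^2 + 2\langle f_{\leq i}, f_{>i}\rangle = \|f_{\leq i}\|^2 + \|f_{>i}\|^2 + 2\sum_{\substack{a \leq i \\ b > i}} \langle f_a, f_b\rangle.
\]
Bound the double sum by \cref{cor:approximate-orthogonality} and Cauchy--Schwarz as
\[
\Big|\sum_{a \leq i, b > i} \langle f_a, f_b\rangle\Big| \leq C\gamma \sum_{a \leq i, b > i} \|f_a\|\|f_b\| \leq C'\gamma\,(wt_{\leq i}(f) + wt_{>i}(f)),
\]
using AM--GM once more on each term, with $C'$ depending only on $k$. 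Applying the first displayed equality of the corollary to $f_{\leq i}$ and to $f_{>i}$ individually then gives $\|f_{\leq i}\|^2 = (1\pm O(\gamma))wt_{\leq i}(f)$ and likewise for $\|f_{>i}\|^2$, so $wt_{\leq i}(f) + wt_{>i}(f) = (1\pm O(\gamma))(\|f_{\leq i}\|^2 + \|f_{>i}\|^2)$, and chaining the two identities yields the second equality.

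There is no real obstacle here: the argument is mechanical once \cref{cor:approximate-orthogonality} is in hand. The one mild point to watch is that the constants accumulated in the $O(\gamma)$ notation grow polynomially in $k$ (through the $\binom{k+2}{2}$ factor), which is fine since the entire theorem only claims dependence on $k$ and the eposet parameters. Provided $\gamma$ is taken small enough that $O(\gamma) < 1/2$, the factors $1 \pm O(\gamma)$ remain bounded away from zero, so they can be absorbed or inverted freely when passing between $\|f_{\leq i}\|^2, \|f_{>i}\|^2$ and their weight counterparts.
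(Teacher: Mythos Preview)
Your proposal is correct and follows essentially the same route as the paper: expand the square, bound each cross-term by \cref{cor:approximate-orthogonality}, and absorb the sum of $\|f_a\|\|f_b\|$'s into $O(\gamma)\sum_a \|f_a\|^2$ via a constant depending only on $k$ (the paper uses $(\sum_a \|f_a\|)^2 \leq (j-i+1)\sum_a \|f_a\|^2$ instead of your termwise AM--GM, which is equivalent). You also spell out the ``In particular'' part more carefully than the paper, which leaves it implicit.
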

\begin{proof}
Expanding $\|f_i + \cdots + f_j\|^2$, we obtain
\begin{multline*}
\bigl|\|f_i + \cdots + f_j\|^2 - \|f_i\|^2 - \cdots - \|f_j\|^2\bigr| \leq 2\sum_{i \leq a < b \leq j} |\langle f_a,f_b \rangle| = O(\gamma) \sum_{i \leq a < b \leq j} \|f_a\| \|f_b\| \leq \\
O(\gamma) \left(\|f_i\| + \cdots + \|f_j\|\right)^2 \leq
O(\gamma) (\|f_i\|^2 + \cdots + \|f_j\|^2),
\end{multline*}
swallowing a factor of $j-i+1$ in the last inequality.
\end{proof}

The fourth item of \cref{thm:approximate-orthogonality-eposets} follows from the preceding ones:

\begin{corollary} \label[corollary]{cor:approximate-laplacian}
Under the conditions of \cref{cor:approximate-orthogonality},
\[ \langle DUf,f \rangle = (1 \pm O(\gamma)) \sum_{i=-1}^\ell r^\ell_{\ell-i+1} \|f_i\|^2. \]	
\end{corollary}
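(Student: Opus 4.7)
The plan is a straightforward bookkeeping argument that bundles together the three approximate facts already established. Using the decomposition $f = \sum_{i=-1}^\ell f_i$ of \cref{thm:decomposition-asd}, I would bilinearly expand
\[
\langle DUf, f\rangle = \sum_{i,j=-1}^\ell \langle DU f_i, f_j\rangle
\]
and then estimate each summand using three ingredients: (i) the approximate eigenvector property $\|DU f_i - r^\ell_{\ell-i+1} f_i\| = O(\gamma)\|f_i\|$ (the corollary following \cref{clm:generalized-D-U-relation}, applied to $f_i = U^{\ell-i} h_i$); (ii) the approximate orthogonality $|\langle f_i, f_j\rangle| = O(\gamma)\|f_i\|\|f_j\|$ from \cref{cor:approximate-orthogonality}; and (iii) the approximate mass formula $\|f\|^2 = (1\pm O(\gamma))\sum_i \|f_i\|^2$ from \cref{cor:approximate-l2-mass}.

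Concretely, write $DU f_i = r^\ell_{\ell-i+1} f_i + E_i$ with $\|E_i\| = O(\gamma)\|f_i\|$. The diagonal terms then satisfy
\[
\langle DU f_i, f_i\rangle = r^\ell_{\ell-i+1}\|f_i\|^2 + \langle E_i, f_i\rangle = r^\ell_{\ell-i+1}\|f_i\|^2 \pm O(\gamma)\|f_i\|^2
\]
by Cauchy--Schwarz, while for $i \ne j$ the same substitution gives
\[
\langle DU f_i, f_j\rangle = r^\ell_{\ell-i+1}\langle f_i, f_j\rangle + \langle E_i, f_j\rangle,
\]
whose first summand is $O(\gamma)\|f_i\|\|f_j\|$ by (ii) and whose second summand is $O(\gamma)\|f_i\|\|f_j\|$ by Cauchy--Schwarz and the bound on $\|E_i\|$ (note that the constants $r^\ell_{\ell-i+1}$ depend only on $\vec r,\vec\delta, k$, so they may be absorbed into the $O$).

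Summing, the total deviation from the main term $\sum_i r^\ell_{\ell-i+1}\|f_i\|^2$ is bounded by
\[
O(\gamma) \sum_{i,j=-1}^\ell \|f_i\|\|f_j\| \le O(\gamma)(k+2)\sum_i \|f_i\|^2 = O(\gamma)\|f\|^2,
\]
where the factor $k+2$ is absorbed into the $O$ (constants may depend on $k$) and the last step uses (iii). This yields the additive form $\langle DUf,f\rangle = \sum_i r^\ell_{\ell-i+1}\|f_i\|^2 \pm O(\gamma)\|f\|^2$, from which the stated multiplicative form $(1\pm O(\gamma))\sum_i r^\ell_{\ell-i+1}\|f_i\|^2$ follows by one more application of (iii). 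There is no real obstacle here --- all the hard analytic work is already encapsulated in the three ingredients; the only point requiring mild care is ensuring that the off-diagonal pieces $r^\ell_{\ell-i+1}\langle f_i,f_j\rangle$ are treated as error rather than being mistakenly retained as part of the main sum, but this is exactly what the approximate orthogonality lemma delivers.
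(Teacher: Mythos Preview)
Your proposal is correct and is essentially the same as the paper's proof: both write $DUf_i = r^\ell_{\ell-i+1} f_i + (\text{error of size } O(\gamma)\|f_i\|)$, expand bilinearly, and control the cross terms via \cref{cor:approximate-orthogonality} and \cref{cor:approximate-l2-mass}. The only cosmetic difference is that the paper groups the sum as $\sum_i \langle DUf_i, f\rangle$ and then expands $\langle f_i,f\rangle$, whereas you expand fully into $\sum_{i,j}\langle DUf_i,f_j\rangle$; the bounds obtained are identical. (One small caveat: the paper, like you, actually establishes the additive form $\sum_i r^\ell_{\ell-i+1}\|f_i\|^2 \pm O(\gamma)\|f\|^2$, which is also how the result is stated in \cref{thm:approximate-orthogonality-eposets}; upgrading this to the multiplicative $(1\pm O(\gamma))$ form would require the $r^\ell_{\ell-i+1}$ to be bounded below, so your final sentence's ``one more application of (iii)'' does not quite deliver that without such an assumption.)
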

\begin{proof}
 Let $DUf_i = r^\ell_{\ell-i+1} f_i + g_i$, where $\|g_i\| = O(\gamma) \|f_i\|$ according to the third item. Then
\begin{equation} \label{eq:approximate-laplacian}
 \iprod{DUf,f} = \sum_{i=-1}^\ell r^\ell_{\ell-i+1} \iprod{f_i,f} + \sum_{i=-1}^\ell \iprod{g_i,f}.
\end{equation}

 We can bound the magnitude of the second term using Cauchy--Schwarz:
\[
 \sum_{i=-1}^\ell |\iprod{g_i,f}| \leq \sum_{i=-1}^\ell \|g_i\| \|f\| = O(\gamma) \sum_{i=1}^\ell \|f_i\| \|f\| = O(\gamma) \|f\|^2,
\]
 using the second item.

 For every $i$, we can bound $\iprod{f_i,f}$ by
\[
 \iprod{f_i,f} = \|f_i\|^2 + \sum_{j \neq i} \iprod{f_i,f_j} = \|f_i\|^2 \pm O(\gamma) \|f_i\| \|f\|,
\]
 using the first two items.

 Substituting both bounds in~\eqref{eq:approximate-laplacian} and using the second item again, we get
\[
 \iprod{DUf,f} = \sum_{i=-1}^\ell r^\ell_{\ell-i+1} \|f_i\|^2 \pm O(\gamma) \|f\|^2. \qedhere
\]
\end{proof}

We turn to proving \cref{prop:approx-eigenvalues}. It follows directly from a technical claim that generalizes the approximate relation between $D$ and $U$, namely
\[\|DU - rI - \delta UD\| = O(\gamma),\]
 to an approximate relation between $D$ and $U^j$:
\[\|DU^j - r U^{j-1} - \delta U^jD\| = O(\gamma),\]
for appropriate constants $r,\delta \in \mathbb{R}$.

\begin{claim} \label[claim]{clm:generalized-D-U-relation}
Let $X$ be a $k$-dimensional $(\vec{r}, \vec{\delta}, \gamma)$-eposet, $1 \leq j \leq \ell+1 \leq k$, and $DU^j\colon X(\ell-(j-1)) \to X(\ell)$. There exist constants $r^\ell_j, \delta^\ell_j$ (as given below) such that

\begin{equation}\label{eq:generalized-D-U-relation}
\norm{DU^j - r^\ell_{j} U^{j-1} - \delta^\ell_{j} U^j D} = O(\gamma),
\end{equation}
where the hidden constant in the $O(\cdot)$ notation depends only on $k,\vec{\delta},\vec{r}$.

The constants $\delta^\ell_j$ and $r^\ell_j$ are given by the following formulas: $\delta_0^\ell = 1$ and
\begin{align*}
 \delta^\ell_j &= \prod_{t = \ell-(j-1)}^\ell \delta_t, &
 r^\ell_j &= \sum_{t = 0}^{j-1} r_{\ell-t} \delta^\ell_t.
\end{align*}
\end{claim}

Regarding the constants $r, \delta$, notice the following:
\begin{enumerate}
	\item $r^\ell_1 = r_\ell$ and $\delta^\ell_1 = \delta_\ell$.
	\item If for all $0 \leq j \leq \ell$, $r_j + \delta_j = 1$, then for all $0 \leq j \leq \ell$, $r^\ell_j + \delta^\ell_j = 1$.  In this case, we have a better formula for $r^\ell_j$:
	\[ r^\ell_j = 1 - \prod_{t = \ell-(j-1)}^\ell \delta_t. \]
	\item In a $\gamma$-HDX, we get $r^\ell_j = \frac{j}{\ell+2}$ and $\delta^\ell_j = 1-\frac{j}{\ell+2}$.
\end{enumerate}

\begin{proof}[Proof of \cref{prop:approx-eigenvalues}]
By \cref{clm:generalized-D-U-relation}
\[\norm{(DU^j - r^\ell_{j} U^{j-1} - \delta^\ell_{j} U^j D)h} = O(\gamma)\norm{h}.\]
$Dh = 0$ so the right-hand side is
$\norm{(DU^j - r^\ell_{j} U^{j-1} )h}$ and the proposition follows.
\end{proof}

Finally, we prove \cref{clm:generalized-D-U-relation}. While the statement of this claim seems technical, its proof consists of simply inductively substituting $DU$ with $rI + UD$ in the terms, until the formula is obtained.
\begin{proof}[ Proof of \cref{clm:generalized-D-U-relation}]
We prove the claim by induction on $j$. The base case $j=1$ follows by the definition of an eposet: $\delta^\ell_1 = \delta_\ell$, $r^\ell_1 = r_\ell$, and
\[ \norm{DU - r^\ell_1 I - \delta^\ell_1 UD} \leq \gamma .\]

For the induction step on $j+1$, note that $DU^{j+1} = D U^j U$. We add and subtract:
\begin{equation} \label{eq:generalized-relation-induction-step}
 DU^jU = \left[ DU^jU - (r^\ell_j U^{j-1} U + \delta^\ell_j U^j D U ) \right]  + (r^\ell_j U^j + \delta^\ell_j U^j D U ).
\end{equation}
The term inside the square brackets has spectral norm  at most $O(\gamma)\|U\|$ due to the induction hypothesis. Since $\|U\| \leq 1$,
\[ \|DU^jU - (r^\ell_j U^{j-1} U + \delta^\ell_j U^j D U)\| = O(\gamma).\]

We consider next the term $\delta^\ell_j U^j D U$, and substitute the $D U$ in it with
\[(r_{\ell-j}I + \delta_{\ell-j} UD) + \Gamma, \]
where $\Gamma = DU - (r_{\ell-j}I + \delta_{\ell-j} UD)$ has norm at most $\gamma$ (recall that the assumption that the poset is an \((\vec{r},\vec{\delta},\gamma)\)-eposet explicitly bounds the operator norm of \(\Gamma\) by \(\gamma\)). We get that
\[ \norm{ \delta^\ell_j U^j D U - \delta^\ell_j U^j  (r_{\ell-j}I + \delta_{\ell-j} UD) } = O(\gamma). \]
We rearrange the left-hand side of the equation to get
\[ \delta^\ell_j U^j D U - \delta^\ell_j U^j  (r_{\ell-j}I + \delta_{\ell-j} UD) = \delta^\ell_j U^j D U - r_{\ell-j} \delta^\ell_j U^j  - \delta^\ell_{j+1} U^{j+1}D. \]
Plugging this term back in \eqref{eq:generalized-relation-induction-step}, we get
\[ \|DU^{j+1} - r^\ell_{j+1} U^j - \delta^\ell_{j+1} U^{j+1} D\| = O(\gamma). \qedhere \]
\end{proof}

\subsection{Equivalence between link expansion and random-walk expansion for decomposable posets}\label{sec:poset-equiv}
In this subsection we extend the equivalence theorem of \cref{sec:HD-expanders} to a class of $\gamma$-eposets that share some key properties with simplicial complexes. We call these posets \emph{decomposable posets}.

We begin with the definition of a link in a general measured poset.

\begin{definition}
\label{def:link-of-poset}
Let \(X\) be a \(d\)-dimensional measured poset. Let \(s \in X(i)\). The link \(X_s\) is a \((d-i-1)\)-graded poset consisting of all $t \in X$ such that \(t \geq s\), with rank function \(\rho_s(y)=\rho(y)-\rho(x)-1\).

The induced (joint) distribution on the link \(\Distv_{X_s} = (\Dist{X_s,d-i-1},\dots,\Dist{X_s,-1})\) is defined as follows:
\[\Pr[\Distv_{X_s}= (t_{d-i},t_{d-i-1},\dots,t_1,t_0)]=\Pr[\Dist{d}=t_{d-i}, \Dist{d-1}=t_{d-i-1},\dots, \Dist{i+2}=t_1, \Dist{i+1} = t_0  \mid \Dist{i} = s].\]
\end{definition}
Namely, the probability of sampling \(t\) in \(X_s\) is the probability of sampling it given that \(s\) was sampled from the \(i\)-th level in \(X\).

We denote by \(U^s_j, D^s_j\) the upper and lower walks on \(X_s\) starting from $X_s(j)$. We further denote by \(M^{+,s}_j\) the non-lazy upper walk on \(X_s\) starting from $X_s(j)$.
When \(s = \emptyset\), that is, \(X_s = X\), we simply write \(M^+_j\).

We define a two-sided link expander poset analogously to the definition for simplicial complexes:
\begin{definition}
Let \(X\) be a measured poset. We say that \(X\) is a \(\gamma\)-two-sided-link expander if for every \(i \leq d-2\) and every \(s \in X(i)\), it holds that
\[ \lambda(M^{+,s}_0) \leq \gamma,\] where
$\lambda(M^{+,s}_0)$ is the second largest eigenvalue of $M_0^{+,s}$ in absolute value, which is also equal to $\norm{M^{+,s}_{0} - U^s_{-1} D^s_0} $.
\end{definition}

Our main theorem is that for a special class of measurable posets called \emph{decomposable} posets, the above definition is an equivalent characterization of an eposet. To that end, we first show that \cref{def:eposet} has an alternate characterization (see \cref{def:alteposet}) if the laziness is small.

\subsubsection{Laziness and an alternate characterization of eposets}

In this section, we show that if the \emph{laziness} of upper and lower walks of the eposet is small, then there is an alternate more convenient characterization of eposets in terms of $\norm{U_{i-1}D_i - M^+_i}$. First for some definitions.

\begin{definition}[laziness of eposet]
  Let $M$ be a random walk on the set $V$. We say that $M$ is $\alpha$-lazy for some $\alpha \in (0,1)$ if for every $t \in V$ we have $M(t,t) \leq \alpha$. If furthermore, the operator $M$ can be decomposed as
  \[ M = \alpha I + (1-\alpha) M^+,\]
  then we say that $M$ is $\alpha$-uniformly lazy.  In other words, the walk $M$ is an $(\alpha, 1-\alpha)$ convex combination of the lazy component $I$ and non-lazy component $M^+$.

  Let $X$ be a measured eposet. We say that the upper walk $DU$ is $\vec{\alpha}$-uniformly lazy for some  vector $\vec{\alpha} = (\alpha_0,\alpha_1,\dots, \alpha_{d-1})$ if each of the upper walks $D_{i+1}U_i$ are $\alpha_i$-uniformly lazy.

    If $\alpha_i \leq \alpha$ for all $i \geq 0$ for some $\alpha \in (0,1)$, we then say that the upper walk of $X$ is $\alpha$-uniformly lazy.
    \end{definition}
For example, that a $d$-dimensional simplicial complex is \(\vec{\alpha}\)-uniformly lazy where $\vec{\alpha}=(\frac{1}{2},\frac{1}{3},...\frac{1}{d+1})$.

\begin{lemma}\label{lem:eposets-defs}
 Let $\gamma \in (0,\nicefrac1{2})$.
  Let \(X\) be a \(d\)-dimensional measured poset whose lower walk $UD$ is $\gamma$-lazy and whose upper walk is $\nicefrac12$-uniformly lazy.
  Then
  \begin{enumerate}
  \item \label{partone} If $\norm{U_{i-1}D_i - M^+_i} \leq \gamma$ then $X$ is an $(\vec{r},\vec{\delta},\gamma)$-eposet for some $\vec{r}, \vec{\delta}$.
  \item \label{parttwo} If $X$ is an $(\vec{r},\vec{\delta},\gamma)$-eposet for some $\vec{r}, \vec{\delta}$, then $\norm{U_{i-1}D_i - M^+_i} = O(\gamma)$ for all $i \geq 0$.
    \end{enumerate}
\end{lemma}
\begin{proof}
Since the upper walk of $X$ is $\nicefrac12$-uniformly lazy, there exists $\vec{\alpha}=  (\alpha_0,\alpha_1,\dots, \alpha_{d-1})$ such that for all $i \in \set{-1, 0,\dots,d-1}$, we have
\[D_{i+1}U_i = \alpha_i I + (1-\alpha_i) M_i^+.\]
\paragraph{(Proof of Part~\ref{partone})} Suppose $ \norm{U_{i-1}D_i - M^+_i} \leq \gamma$.
Substituting into the above one gets
\[ \gamma\ge (1-\alpha_i)\gamma \ge \norm{  (1-\alpha_i)M_i^+ - (1-\alpha_i)U_{i-1}D_i} = \norm{D_{i+1}U_i -   \alpha_i I - (1-\alpha_i)U_{i-1}D_i  },
\]
so $X$ is a $\gamma$-eposet for $r_i=\alpha_i$ and $\delta_i=1-\alpha_i$.

\paragraph{(Proof of Part~\ref{parttwo})}
Suppose that $\norm{D_{i+1}U_i -   r_i I - \delta_i U_{i-1}D_i  }\leq \gamma$ for some $r_i,\delta_i\geq 0$.

We apply $D_{i+1}U_i  - r_i I - \delta_i U_{i-1}D_i$ to the constant vector $\one$, which is fixed by all of $D_{i+1}U_i ,I,U_{i-1}D_i$ because they are averaging operators. This gives
\[ \abs{1 - r_i - \delta_i} = \abs{\frac{\iprod{(D_{i+1}U_i  - r_i I - \delta_i U_{i-1}D_i) \one, \one} }{\iprod{\one, \one} }} \leq \gamma .\]
Next, we fix an arbitrary element $s\in X(i)$ and let $f=1_s$ be the function that equals $1$ on $s$ and $0$ elsewhere. Observe that $\iprod{M_i^+ f,f}=0$ so $\iprod{D_{i+1}U_i  f,f} = \alpha_i\iprod{f,f}$.
We apply $D_{i+1}U_i  - r_i I - \delta_i U_{i-1}D_i$ on the function $f = 1_s$,
\[ \abs{\alpha_i - r_i - \delta_i\frac{\iprod{U_{i-1}D_i f, f}}{\iprod{f,f}}} = \abs{\frac{\iprod{(D_{i+1}U_i  - r_i I - \delta_i U_{i-1}D_i) f, f} }{\iprod{f,f} }} \leq \gamma .\] Now, using the $\gamma$-laziness of the lower walks to bound $\iprod{U_{i-1}D_i f, f} \leq \gamma\iprod{f,f}$, we get $\abs{\alpha_i-r_i}\leq \gamma(1+\delta_i)$. Combining this with $\abs{\delta_i + r_i-1} \leq \gamma$ , we obtain that $\abs{\delta_i - (1-\alpha_i)} \leq \gamma (2+\delta_i)$. We can now lower- and upper-bound $\delta_i$ as follows. We have $\nicefrac{7}{8}\cdot \delta_i \leq (1-\gamma)\delta_i \leq 1 -\alpha_i + 2\gamma \leq 1 + \nicefrac14$. Hence, $\delta_i \leq \nicefrac{10}{7} \leq 2$. On the other hand, $\nicefrac{9}{8}\cdot \delta_i \geq (1+\gamma)\delta_i \geq 1-\alpha_i -2\gamma \geq \nicefrac12 -\nicefrac14 =\nicefrac14$. Hence, $\delta_i \geq \nicefrac29$.

Let us denote $A \approx B$ to mean $\norm{A-B} \leq O(\gamma)$. We have seen that $\delta_i + r_i \approx 1$ and that $\alpha_i \approx r_i$ (since $\delta_i \leq 2$), so
\begin{align*}
  \delta_i M_i^+ - \delta_i U_{i-1}D_i &\approx (1-\alpha_i)M_i^+ - \delta_i U_{i-1}D_i \\
  &= D_{i+1}U_i-\alpha_i I -\delta_i U_{i-1}D_i \\
  &\approx D_{i+1}U_i-r_i I -\delta_i U_{i-1}D_i,
\end{align*}
and we conclude that $\norm {M_i^+ - U_{i-1}D_i} = O(\gamma/\delta_i) = O(\gamma)$ (since $\delta_i \geq \nicefrac29$).
\end{proof}

Many posets satisfy the mild requirements of \cref{lem:eposets-defs}. For example, in the $(d+1)$-dimensional complete complex, the lower walk is $1/(n-d+1)$-lazy, and the upper walk is $(1,1/2,\ldots,1/d)$-uniformly lazy, and so $1/2$-uniformly lazy. Similarly, in the $(d+1)$-dimensional Grassmann complex, the lower walk is $\frac{q-1}{q^{n-d+1}-1}$-lazy, and the upper walk is $(1,\frac{q-1}{q^2-1},\ldots,\frac{q-1}{q^d-1})$-uniformly lazy, and so $1/(q+1)$-uniformly lazy.

In \cref{def:eposet}, we defined an \emph{$(\vec{r},\vec{\delta},\gamma)$-eposet} to be a poset where $\norm{DU-rI-\delta UD}\leq \gamma$. The above lemma states that this is equivalent to $\norm{U_{i-1}D_i - M^+_i} = O(\gamma)$ provided the
lower walk $UD$ is $\gamma$-lazy and the upper walk is $\nicefrac12$-uniformly lazy. This justifies the following equivalent definition of a $\gamma$-eposet.

\begin{definition}\label{def:alteposet}
  A $d$-dimensional poset $X$ is a $\gamma$-eposet if $\norm{U_{i-1}D_i - M^+_i} \leq \gamma$ for all $0\leq i< d$.
\end{definition}

\subsubsection{Decomposable posets}

The measured posets we consider in this section have a lattice-like property which we call \emph{decomposability}.

To define decomposabile posets, we first need the notion of a \emph{modular lattice}. Given a poset $X$ and elements $s_1,s_2 \in X$, the \emph{join} $s_1 \lor s_2$ of $s_1,s_2$ is an element $t$ such that $s_1,s_2 \leq t$, and $t \leq r$ whenever $s_1,s_2 \leq r$. If the join exists then it is unique. The \emph{meet} $s_1 \land s_2$ is defined analogously, with $\leq$ replaced by $\geq$. In a simplicial complex, join corresponds to union, and meet to intersection. A graded lattice is said to be \emph{modular} if $\rho(x) + \rho(y) = \rho(x \lor y) + \rho(x \land y)$ for all $x,y \in X$.
\begin{definition}[Decomposable measured posets]
\label{def:decomposable-measured-posets}
Let \(X\) be a measured poset. We say that \(X\) is \emph{decomposable} if the following conditions hold:
\begin{enumerate}
    \item $X$ is a modular lattice. In particular, $s_1,s_2 \in X(i)$ have a join in $X(i+1)$ iff they have a meet in $X(i-1)$.
    \item For any \(s_1,s_2 \in X(i)\) with meet $r \in X(i-1)$, it holds that
    \[\Pr_{M^+_i}[s_2 \sim s_1] = \Pr[\Dist{i-1}=r] \Pr_{M^{+,r}_0}[s_2 \sim s_1].\]
\end{enumerate}
\end{definition}

As an example, the Grassmann poset is decomposable. Indeed, it is well-known to be modular, since $\dim(s_1) + \dim(s_2) = \dim(s_1 + s_2) + \dim(s_1 \cap s_2)$. As for the second condition, it automatically holds whenever the non-lazy upper and lower walks coincide on all links, which is the case for the Grassmann poset. This is because the second condition is easily seen to hold if we replace the non-lazy upper walks with non-lazy down walks. It would be interesting to find other decomposable posets. One possible source, suggested by an anonymous referee, is the poset of flats of certain matroids. We leave this as a direction for future study.

Another way to obtain a decomposable measured poset is to start with one and introduce weights on the top level. This is a generalization of the special case of simplicial complexes, in which we consider an arbitrary distribution on the top facets. We describe this construction in detail in \Cref{sec:decomposable}.

We are now ready to state and prove the main theorem of this section,
\begin{theorem}[Equivalence of link-expansion and random-walk expansion] \label{thm:equivalence-eposets}
  Let \(X\) be a \(d\)-dimensional measured poset which is decomposable, the
lower walk $UD$ is $\gamma$-lazy and the upper walk is $\nicefrac12$-uniformly lazy.
  \begin{enumerate}
    \item If \(X\) is a \(\gamma\)-two-sided link expander, then \(X\) is a \(\gamma\)-eposet.
    \item If $X$ is a $\gamma$-eposet then $X$ is a $\eta^{-1}(1+\beta^{-1})\gamma$-two-sided link expander, where
  \[\eta = \min_{0 \leq i \leq d-2} \min_{r\in X(i), s\in X(i+1), r \leq s} \Pr[\Dist{i}=r \mid \Dist{i+1}=s]\]
  and
  \[\beta = 1 - \max_{0 \leq i \leq d-2} \max_{r\in X(i), s\in X(i+1), r \leq s}\Pr[\Dist{i}=r \mid \Dist{i+1}=s]\]
  \end{enumerate}
\end{theorem}

Before proving the theorem, let us calculate the values of $\eta$ and $\beta$ for simplicial complexes and for the Grassmann poset $\Gr_q(n,d)$.

For simplicial complexes, given that we chose $s \sim \Dist{i+1}$, the probability of choosing $r \in X(i), r \subseteq s$ is $\frac{1}{i+2}$. Thus $\eta = \frac{1}{i+2}, \beta = \frac{i+1}{i+2}$. Plugging this in \cref{thm:equivalence-eposets} recovers \cref{thm:equivalence}.

We continue with $X = \Gr_q(n,d)$. For \emph{any} $r \in X(i)$ and $s \in X(i+1)$ such that $r \leq s$ we have
\[
 \Pr[\Pi_i = r \mid \Pi_{i+1} = s] =
 \frac{1}{\sqbinom{i+2}{i+1}_q} = \frac{1}{1+q+\cdots+q^{i+1}}.
\]
This implies that $\eta = \frac{1}{1+q+\cdots+q^{d-1}}$ and $\beta = 1 - \frac{1}{1+q} = \frac{q}{1+q}$.

\begin{proof}
\textbf{Item 1.} Assume that $X$ is a $\gamma$-two-sided link expander. We show that
\[\norm{M^+_i - UD} \leq \gamma, \]
for all $i < d$.
Let $f$ be a function on $X(i)$, where $i < d$. %
By decomposability,
\begin{equation} \label{eq:localization-equation-eposets}
 \langle M^+_i f, f \rangle = \EE_{r \sim \Dist{i-1}}[ \langle M^{+,r}_0 f, f \rangle].
\end{equation}
We now note that for every \(r \in X(i-1)\),
\[
 \EE_{r \leq s}[f(s)] = (Df)(r).
\]
Therefore we have that
\begin{multline*}
\abs{\iprod{M^+_i f, f} - \iprod{UD f, f} } =  \left|\EE_{r \sim \Dist{i-1}} \EE_{s_1,s_2\sim M^{+,r}_0} [f(s_1) f(s_2)] - (Df)(r)^2\right| \leq
 \EE_{r\in \Dist{i-1}} \bigl[\lambda(M^{+,r}_0) \EE_{s \in X_r(0)}[f(s)^2]\bigr].
\end{multline*}
If $X$ is a $\gamma$-two-sided link expander then $\lambda(M^{+,r}_0) \leq \gamma$ for all $r$, and so
\[
 \abs{\iprod{(M^+_i - UD) f, f}} \leq \gamma \EE_{r \in \Dist{i-1}} \EE_{s \in X_r(0)} [f(s)^2] = \gamma \|f\|^2.
\]

\paragraph{Item 2.} Assume now that $X$ is a $\gamma$-HDX. Our goal is to show that for all $i < d-1$ and $r \in X(i)$,
\[ \lambda(M^{+,r}_0) \leq \eta^{-1}(1+\beta^{-1})\gamma.\]

Using the convention that $X(-1)$ consists of a single item $\emptyset$, for $i=-1$ we have $M^{+,\emptyset}_0 = M^+_0$, and so $U_{-1}D_{0}$ is zero on the space perpendicular to the constant function. Thus
\[\norm{M^+_0 - UD}  = \lambda(M^{+,\emptyset}_0), \]
and from our assumption $\lambda(M^{+,\emptyset}_0) \leq \gamma$.

Now assume $1 \leq i \leq d-1$, and fix some $r \in X(i-1)$. Let $f\colon X_r(0) \to \RR$ be some eigenfunction of $M^{+,r}_0$, which is perpendicular to the constant functions. In order to prove the theorem, we must show that
\[\abs{\frac{\iprod{M^{+,r}_0 f, f}}{\iprod{f,f}}} \leq \eta^{-1}(1+\beta^{-1}) \gamma.\]

Define a function $\tilde{f} \in C^{i}$ by
\[ \tilde{f}(s) = \begin{cases}
    f(s) &  \text{if }r \leq s, \\
    0 & \text{otherwise}.
  \end{cases} \]
Without loss of generality, we may assume that $\norm{\tilde{f}}=1$.

In order to obtain a bound on $\lambda(M^{+,r}_0)$, we bound $\iprod{\tilde{f},\tilde{f}}$, $\iprod{M^+_i \tilde{f}, \tilde{f}}$, and $\iprod{UD \tilde{f},\tilde{f}}$ in terms of $f$ and $M^{+,r}_0$.

We note that by Bayes' theorem,
\[\Pr[s \mid r] = \frac{\Pr[s]\Pr[r \mid s]}{\Pr[r]} \geq \frac{\eta}{\Pr[r]} \Pr[s]. \]
As \(f(s)=0\) whenever \(r\nleq s\),
\begin{equation}
  \label{eq:converse-1-eposet}
  \iprod{f,f} = \EE_{r\leq s}[f(s)^2] \geq \frac{\eta}{\Pr[r]} \EE_{s\in X(i)}[f(s)^2] = \frac{\eta}{\Pr[r]} \iprod{\tilde{f},\tilde{f}}.
\end{equation}

Furthermore, from what we shown in \eqref{eq:localization-equation-eposets} we obtain that
\[ \iprod{M^+_i \tilde{f}, \tilde{f}} = \EE_{r' \in X(i-1)}[\iprod{M^{+,r'}_{0} \tilde{f}, \tilde{f}}].\]

Fix some $r' \ne r$. If $\tilde{f} \ne 0$ on the link of \(r'\) then some $s \in X(i)$ satisifies both $r \leq s$ and $r'\leq s$; this $s$ must be the join of $r$ and $r'$, and so it is unique. Since $M^{+,r'}_{0}$ is a non-lazy operator, this implies that $\iprod{M^{+,r'}_{0} \tilde{f}, \tilde{f}} = 0$. We remain with
\begin{equation} \label{eq:converse-2-eposet}
  \iprod{M^+_i \tilde{f}, \tilde{f}} = \Dist{i-1}(r)\iprod{M^{+,r}_{0} f, f}.
\end{equation}
In other words, the upper non-lazy random walk is proportional to the local adjacency operator.

We now prove the following claim, which shows that the lower walk scales $\tilde{f}$ by a factor of at most $\eta^{-1}\gamma$:
\begin{claim}\label{f-down-operator-bounding-claim-eposet}
	If $f\colon X_r(0) \to \mathbb{R}$ is perpendicular to constant functions then
  	$|\iprod{U_{i-1}D_{i} \tilde{f}, \tilde{f}}| \leq \beta^{-1} \gamma$.
\end{claim}

Assuming the above:
 \begin{multline*}
 \abs{\frac{\iprod{M^{+,r}_0 f, f}}{\iprod{f,f}}} \leq |\eta^{-1} \Dist{i-1}(r)\iprod{M^{+,r}_0 f, f}| =
 \eta^{-1}|\iprod{M^+_i \tilde{f}, \tilde{f}}| \leq \\
 \eta^{-1}|\iprod{(M^+_i  - U_{i-1}D_{i})\tilde{f}, \tilde{f}}|  + \eta^{-1}|\iprod{U_{i-1}D_{i} \tilde{f}, \tilde{f}}| \leq \eta^{-1}(1+\beta^{-1}) \gamma,
 \end{multline*}
where the first line uses \eqref{eq:converse-1-eposet} and \eqref{eq:converse-2-eposet}, and the second line uses \cref{f-down-operator-bounding-claim-eposet}, our assumption that $\|M^+_i  - U_{i-1}D_{i}\| \leq \gamma$, and the triangle inequality.
\end{proof}

We complete the proof of \cref{thm:equivalence-eposets} by proving \cref{f-down-operator-bounding-claim-eposet}:
\begin{proof}[Proof of \cref{f-down-operator-bounding-claim-eposet}]
Since $UD$ is PSD, we have $\iprod{U_{i-1}D_i \tilde{f}, \tilde{f}} \geq 0$, and so we may remove the absolute value and prove
\[ \iprod{U_{i-1}D_{i} \tilde{f}, \tilde{f}} \leq \beta^{-1} \gamma.\]

Consider the inner product $\iprod{D_{i} \tilde{f}, D_{i} \tilde{f}}$ --- this is the expectation upon choosing $r' \sim \Dist{i-1}$, and then choosing two $i$-faces $s_1, s_2 \in X(i)$ containing it independently. Hence we decompose to the cases where $r' = r$ and $r' \ne r$:
\begin{gather}
\iprod{D_{i} \tilde{f}, D_{i} \tilde{f}} = \EE_{(r',s_1,s_2)}[f(s_1)f(s_2)] = \notag \\
\label{eq:down-op-1-eposet}
\Dist{i-1}(r)\EE_{(r',s_1,s_2)}[\tilde{f}(s_1)\tilde{f}(s_2) \mid r' = r] + (1-\Dist{i-1}(r))\EE_{(r',s_1,s_2)}[\tilde{f}(s_1)\tilde{f}(s_2) \mid r' \ne r].
\end{gather}
The first term is $0$, since from independence of $s_1, s_2$:
\[\EE_{(r',s_1,s_2)}[\tilde{f}(s_1)\tilde{f}(s_2) \mid r' = r] = \EE_{s_1}[f(s_1) \mid r \subset s_1]^2 = 0,\]
since by assumption $f$ is perpendicular to constant functions.

We saw above that for any $r' \ne r$, there is at most one $i$-face  $s'\geq r'$ such that $\tilde{f}(s) \ne 0$. For any $r' \ne r$, the value $\tilde{f}(s_1)\tilde{f}(s_2)$ is non-zero only when $s_1 = s_2=s'$.
For every $s_1 \in X(i)$, we define the event $E_{s_1}$ to hold when $s_2 = s_1$. Then
\[ \eqref{eq:down-op-1-eposet}=(1-\Dist{i-1}(r))\EE_{s_1}\bigl[\tilde{f}(s_1)^2 \Pr_{r', s_2}[E_{s_1} \mid r' \ne r] \bigr].\]
Note that if $s_1$ doesn't contain $r$ then $\tilde{f}(s_1)^2 = 0$, hence we continue taking expectation over all $s_1 \in X(i)$, even though some of them are unnecessary terms.

If we prove that for every $s \in X(i)$ so that $r \leq s$ we have $\Pr_{r', s_2}[E_{s_1} \mid r' \ne r] \leq \beta^{-1}\gamma$, then
\[(1- \Dist{i-1}(r))\EE_{s_1}[\tilde{f}(s_1)^2 \Pr_{r', s_2}[E_{s_1}] ] \leq \beta^{-1}\gamma \EE_{s_1}[\tilde{f}(s_1)^2] = \beta^{-1} \gamma \iprod{\tilde{f},\tilde{f}} = \beta^{-1} \gamma.\]
Thus we are left with proving the following statement: for all $s_1 \in X(i)$,
\[ \Pr_{r', s_2}[E_{s_1} \mid r' \ne r] \leq \beta^{-1}\gamma.\]

We first bound the unconditioned probability $\Pr[E_{s_1}] = \Pr_{r',s_2 \in X(i)}[s_2 = s_1 \mid r' \subset s_1,s_2]$. Fix some $s_1 \in X(i)$, and let $\one_{s_1} \colon X(i) \to \RR$ be its indicator. Notice that $U_{i-1}D_{i}\one_{s_1}(s_1) = \Pr_{r', s_2}[s_2 = s_1]$, and so
\[ \iprod{U_{i-1}D_{i} \one_{s_1}, \one_{s_1}} = \Dist{i}(s_1) U_{i-1}D_{i}\one_{s_1}(s_1) = \Dist{i}(s_1) \Pr[E_{S_1}]. \]
We again use the non-laziness property of $M^+_i$ to assert that $\iprod{M^+_i \one_{s_1}, \one_{s_1}} = 0$.
Since $X$ is a $\gamma$-eposet,
\[ \iprod{U_{i-1}D_{i} \one_{s_1}, \one_{s_1}} = \iprod{(U_{i-1}D_{i} - M^+_i) \one_{s_1}, \one_{s_1}} \leq \norm{U_{i-1}D_{i} - M^+_i} \|\one_{s_1}\|^2 = \gamma \Dist{i}(s_1). \]
Hence $\Pr[E_{s_1}] \leq \gamma$.

Consider now any $s_1 \in X(i)$ containing $r$ and some $r'\ne r$. The probability of sampling $r' \ne r$ given that we sample an element \(\leq s\) is at least $\beta$ (by definition of \(\beta\) which is the probability of \emph{not sampling} the element with largest probability), and so
\[ \Pr_{r', s_2}[E_{s_1} \mid r' \ne r]  \leq \beta^{-1} \Pr_{r', s_2}[E_{s_1}] \leq \beta^{-1} \gamma. \qedhere \]
\end{proof}

\subsubsection{Constructing decomposable posets}
\label{sec:decomposable}

Let $X$ be a measured poset given by the distribution $\vec{\Pi} = (\Pi_d,\ldots,\Pi_{-1})$. Given a distribution $\mathcal{D}$ on $X(d)$, we can construct a different distribution $\vec{\Psi}$ by first sampling $x \sim \mathcal{D}$, and then sampling $\vec{\Pi}$ conditioned on $\Pi_d = x$.

\begin{lemma} \label{lem:decomposability}
If $X$ is decomposable with respect to $\vec{\Pi}$, then it is decomposable with respect to $\vec{\Psi}$.	
\end{lemma}
\begin{proof}
Only the second condition depends on the measure. Let us spell it out in the case of the original distribution $\vec{\Pi}$	.

We are given $s_1,s_2 \in X(i)$ with meet $r \in X(i-1)$ and join $t \in X(i+1)$ (which is the only way for the upper walk to get from $s_1$ to $s_2$), and know that the following two expressions are equal:
\begin{gather*}
\Pr_{M_i^+}[s_2 \sim s_1] = \Pr[\Pi_i = s_1, \Pi_{i+1} = t] \frac{\Pr[\Pi_i = s_2 \mid \Pi_{i+1} = t]}{\Pr[\Pi_i \neq s_1 \mid \Pi_{i+1} = t]}, \\
\Pr[\Pi_{i-1} = r] \Pr_{M_0^{r,+}}[s_2 \sim s_1] =
\Pr[\Pi_{i-1} = r]
\Pr[\Pi_i = s_1, \Pi_{i+1} = t \mid \Pi_{i-1} = r] \frac{\Pr[\Pi_i = s_2 \mid \Pi_{i+1} = t, \Pi_{i-1} = r]}{\Pr[\Pi_i \neq s_1 \mid \Pi_{i+1} = t, \Pi_{i-1} = r]}.
\end{gather*}
Let us write these two expressions slightly differently:
\begin{gather*}
\Pr_{M_i^+}[s_2 \sim s_1] = \Pr[\Pi_{i+1} = t]\Pr[\Pi_i = s_1 \mid \Pi_{i+1} = t] \frac{\Pr[\Pi_i = s_2 \mid \Pi_{i+1} = t]}{\Pr[\Pi_i \neq s_1 \mid \Pi_{i+1} = t]}, \\
\Pr[\Pi_{i-1} = r] \Pr_{M_0^{r,+}}[s_2 \sim s_1] =
\Pr[\Pi_{i+1} = t]
\Pr[\Pi_i = s_1, \Pi_{i-1} = r \mid \Pi_{i+1} = t] \frac{\Pr[\Pi_i = s_2 \mid \Pi_{i+1} = t, \Pi_{i-1} = r]}{\Pr[\Pi_i \neq s_1 \mid \Pi_{i+1} = t, \Pi_{i-1} = r]}.
\end{gather*}
In order for $X$ to be decomposable with respect to $\vec{\Psi}$, we need these two expressions to coincide when replacing $\Pi$ with $\Psi$ throughout. Yet due to the definition of $\Psi$, this only replaces the $\Pr[\Pi_{i+1}=t]$ factors with $\Pr[\Psi_{i+1}=t]$ factors. Hence $X$ is decomposable with respect to $\vec{\Psi}$ as well.
\end{proof}

\subsection[Comparison to the Kaufman--Oppenheim Decomposition]{Comparison to the Kaufman-Oppenheim decomposition} \label{sec:decomposition-ko}
Kaufman and Oppenheim~\cite{KaufmanO2020} proposed a decomposition of \(C^k\) to orthogonal spaces in the case of high-dimensional expanders. Their definition extends to the general eposet setting:
\[ B^i = U^{k-i}C^i \cap \left ( \oplus_{j<i}B^j \right )^\perp = U^{k-i} C^i \cap \left( U^{k-(i-1)} C^{i-1} \right)^\perp.\]
(When $i = -1$, the definition is simply $B^{-1} = U^{k+1} C^{-1}$.)

As \(U=D^*\), we have an equivalent definition of these spaces by harmonic conditions similar to ours:
\[B^{i}=U^{k-i}C^i \cap \ker D^{k-i+1}.\]

By construction, these spaces are orthogonal, and it is easy to see that indeed their direct sum is \(C^k\).
Kaufman and Oppenheim~\cite[Theorem 1.5]{KaufmanO2020} showed that the subspaces $B^i$ are approximate eigenspaces of \(M^+\).

The following proposition shows that these two spaces are close.
\begin{proposition} \label{prop:KO-decomposition-is-close}
If \(f \in V^i\) has unit norm then there exists \(g\in B^i\) so that \(\norm{f-g} = O(\gamma)\).

Similarly, if \(g \in B^i\) has unit norm then there exists \(f\in V^i\) so that \(\norm{f-g} = O(\gamma)\).

The \(O\) notation may depend on \(k,r,\delta\) only.
\end{proposition}

\begin{proof}
We start with the first statement.
Let \(f \in V^i\) have norm \(1\). We decompose $f$ as \(f=\sum_{j=-1}^k g_j\), where $g_j \in B^j$, and take \(g=g_i\).
Then
\[ \norm{f-g}^2 = \iprod{f-g,f-g} = \iprod{f,f-g},\]
since \(g\) is perpendicular to \(f-g\). As \(f-g = \sum_{j=-1, j\ne i}^k g_j\), it is enough to show that \(\iprod{f,g_j} = O(\gamma)\) for all \(j\ne i\).

If \(j>i\) then by the definition of \(B^j\) and the fact that \(f\in U^{k-i}C^i\), \(\iprod{f,g_j} = 0\). If \(j<i\) then \(g_j \in U^jC^j = V^{-1} \oplus \cdots \oplus V^j\). By \cref{cor:approximate-orthogonality} and \cref{cor:approximate-l2-mass},
\(\iprod{f,g_j} = O(\gamma)(1+O(\gamma))\norm{f}\norm{g_j} = O(\gamma)\), since $\norm{g_j} \leq \norm{f} = 1$.

\smallskip
The proof of the second statement is similar.
Let \(g\in B^i\) have norm \(1\). We decompose $g$ as \(g=\sum_{j=-1}^k f_j\), where $f_j \in V^j$, and take \(f=f_i\). Then
\[ \norm{g-f}^2 = \iprod{g-f,g-f} = \iprod{g,g-f} + \iprod{f,g-f}.\]
Since \(g-f = \sum_{j=-1,j\ne i}^k f_j\), \cref{cor:approximate-orthogonality} and \cref{cor:approximate-l2-mass} show that \(\iprod{f,g-f} = O(\gamma)(1+O(\gamma)) \norm{f} \norm{g-f} = O(\gamma)\). Thus we need to show that \(\iprod{g,g-f} = O(\gamma)\), and for this it is enough to show that for every \(j\ne i\), \(\iprod{g,f_j} = O(\gamma)\).

If \(j<i\) then \(\iprod{g,f_j} = 0\) by the definition of \(B^i\) and the fact that \(f_j = U^{k-j}h\) for some $h \in C^j$. Otherwise \(j>i\), in which case we again use \cref{cor:approximate-orthogonality} and \cref{cor:approximate-l2-mass} to get the required bound.
\end{proof}

\begin{remark}
Let \(g \in B^i\), and let \(f \in C^i\) be a close vector promised in \cref{prop:KO-decomposition-is-close}.
Applying \cref{thm:approximate-orthogonality-eposets}, we get that
\[\norm{DUg - r_i^k g} \leq \norm{DU(g-f)} + \norm{DUf - r_i^k f} + |r_i^k|\norm{g-f} = O(\gamma).\]
In other words, $B^i$ is an approximate eigenspace of $DU$.
\end{remark}

\subsection{Decomposition in the Grassmann poset} \label{sec:grassmann}
Applying \cref{thm:decomposition-asd}, we obtain the following properties on the decomposition of $\Gr_q(n,d)$. These properties are well-known in the literature, but we rederive them to show the versatility of \cref{thm:approximate-orthogonality-eposets}:
\begin{claim} \label{clm:decomposition-Grassmann}
  Fix some $d,n \in \mathbb{N}$, let $X = \Gr_q(n,d)$, and let $\ell < d$. Let $f\colon X(\ell) \to \RR$ be an arbitrary function. Then we can decompose
  $f = f_{-1}+\cdots + f_\ell$, where $f_i \in V^i$:
  \begin{enumerate}
    \item For $i \ne j$, $\iprod{f_i, f_j} = 0$.
    \item $\norm{f}^2 = \norm{f_{-1}}^2+\cdots + \norm{f_{\ell}}^2$.
    \item The $f_i$'s are eigenvectors of $DU$. The eigenvalues are
        \[ r^\ell_i = 1 - \prod_{j=\ell-i+1}^\ell \left ( 1- \frac{q-1}{q^{j+2}-1} \right ) + \Theta(1/q^{n-\ell}).\]
    \item In particular, $DU$ has a constant spectral gap, that is, all its eigenvalues are bounded by a constant strictly smaller than $1$ when $n$ is large enough compared to $\ell$:
        \[ r^\ell_i \leq \frac{q}{q^2-1} + O \left( \frac{1}{q^{n-\ell}} \right) < 1 .\]
  \end{enumerate}
\end{claim}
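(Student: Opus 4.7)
The approach is to recognize that $\Gr_q(n,d)$ is a sequentially differential poset with parameters given by \eqref{eq:Grassmann-SD}, and then apply \cref{thm:approximate-orthogonality-eposets} in the exact ($\gamma = 0$) case. Since \eqref{eq:Grassmann-SD} gives $D_{i+1}U_i - r_i I - \delta_i U_{i-1}D_i = 0$ exactly, the Grassmann poset is an $(\vec r,\vec\delta,0)$-eposet. Taking $\gamma = 0$ in \cref{thm:approximate-orthogonality-eposets} turns every $O(\gamma)$ error into zero, so items~1 and~2 (exact orthogonality and the Pythagorean identity) follow immediately, and the $f_i$ are \emph{exact} eigenvectors of $DU$ with eigenvalues
\[
r^\ell_j \;=\; 1 - \prod_{t=\ell-j+1}^{\ell} \delta_t,
\]
using $r_t+\delta_t=1$ together with the closed form derived in \cref{clm:generalized-D-U-relation}.

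To match the explicit form in item~3, the plan is to replace $\delta_t$ by its ``main part'' $1-\tfrac{q-1}{q^{t+2}-1}$, absorbing the correction into the $\Theta(1/q^{n-\ell})$ error. Writing
\[
\delta_t \;=\; \Bigl(1 - \tfrac{q-1}{q^{t+2}-1}\Bigr)\cdot\Bigl(1 - \tfrac{q-1}{q^{n-t}-1}\Bigr)^{-1},
\]
the second factor equals $1 + O(q^{-(n-t)}) = 1 + O(q^{-(n-\ell)})$ whenever $t\le\ell$. Multiplying together the at most $\ell+1 = O(1)$ correction factors yields a multiplicative perturbation of $1 + O(q^{-(n-\ell)})$ in $\prod_t \delta_t$, which translates (since $\prod_t \delta_t \in [0,1]$) into an \emph{additive} $\Theta(q^{-(n-\ell)})$ error in $r^\ell_j = 1-\prod_t\delta_t$. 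This is exactly the form claimed.

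For item~4, I would bound the main term using the elementary inequality $1-\prod_j(1-x_j) \le \sum_j x_j$, giving
\[
1 - \prod_{j=\ell-i+1}^{\ell}\Bigl(1-\tfrac{q-1}{q^{j+2}-1}\Bigr) \;\le\; \sum_{j\ge 0}\tfrac{q-1}{q^{j+2}-1}.
\]
The terms are dominated by $q^{-(j+1)}$ for $j\ge 1$ (since $q^{j+2}-1 \ge q^{j+1}(q-1)$), so the sum converges and a short calculation bounds it by $\tfrac{q}{q^2-1}$, which is strictly less than $1$ for every $q\ge 2$. Combined with the $O(1/q^{n-\ell})$ error from item~3, this proves item~4.

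The one place requiring care is step two: verifying that the approximation of $\delta_t$ produces an \emph{additive} $\Theta(1/q^{n-\ell})$ error in $r^\ell_j$ (not merely a multiplicative one) with a hidden constant that is uniform in $\ell$. Everything else reduces to routine estimates with the geometric-type sums $\sum_j q^{-j}$.
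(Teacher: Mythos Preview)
Your proposal is correct and follows the paper's approach closely for items~1, 2, and~4. For item~3 you take a slightly different but equally valid route: you compute the \emph{exact} eigenvalues $1-\prod_t \delta_t$ from the $\gamma=0$ case and then perturb each factor $\delta_t$ by hand to extract the $O(1/q^{n-\ell})$ error. The paper instead observes that $\Gr_q(n,d)$ is \emph{also} an $(\vec r,\vec\delta,\gamma)$-eposet with the simpler rounded parameters $r_i=\tfrac{q-1}{q^{i+2}-1}$, $\delta_i=1-r_i$ and $\gamma=O(1/q^{n-\ell})$, and then re-invokes \cref{thm:approximate-orthogonality-eposets} with these parameters, so the $O(\gamma)$ in the theorem statement directly yields the error term. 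Your approach is more self-contained; the paper's illustrates that the eposet framework can be applied to the same poset with different parameter choices, trading exactness for cleaner constants.
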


\begin{proof}
The first two items are by invoking \cref{thm:approximate-orthogonality-eposets} and using \eqref{eq:Grassmann-SD}, which shows the Grassmann poset is a sequentially differential poset.

The third item is by invoking \cref{thm:approximate-orthogonality-eposets}, and using the fact that $\Gr_q(n,\ell)$ is also an \emph{expanding poset}, with $r_i = \frac{q-1}{q^{i+2}-1}$, $\delta_i = 1 - r_i$, and $\gamma = O(1/q^{n-\ell})$. The fourth item is by direct calculation: one may show using induction that the approximate formula for $r^\ell_i$ is
\[1 - \prod_{j=i}^\ell \left ( 1- \frac{q-1}{q^{j+2}-1} \right ) \leq  \sum_{j=i}^\ell \frac{q-1}{q^{j+2}-1}. \]
By taking $\ell$ to infinity and rearranging, we obtain
\[ \sum_{j=i}^\ell \frac{q-1}{q^{j+2}-1} \leq  (q-1) \sum_{j=i+2}^\infty \frac{q^{j}}{q^j(q^{j}-1)} \leq (q-1)\frac{q^{i+2}}{q^{i+2}-1} \sum_{j=i+2}^\infty \frac{1}{q^j}.\]
The infinite sum converges to $\frac{1}{q^{i+1}(q-1)}$, and so
\[ (q-1)\frac{q^{i+2}}{q^{i+2}-1}\sum_{j=i+2}^\infty \frac{1}{q^j} = (q-1)\frac{q^{i+2}}{q^{i+2}-1} \frac{1}{q^{i+1}(q-1)}  = \frac{q}{q^{i+2}-1} .\]
Hence $r^\ell_i \leq \frac{q}{q^2-1} + O \left( \frac{1}{q^{n-\ell}} \right)$.
\end{proof}

\begin{remark}
The actual values for $r^\ell_i$ can also be calculated by the formula devised in \cref{thm:approximate-orthogonality-eposets}. The calculations are omitted, as they don't add any additional insight.
\end{remark}
\begin{remark}
This result is also analogous to the decomposition of the complete complex, say the one obtained by Filmus and Mossel~\cite{FilmusM2019}.
\end{remark}
\subsection{Is there a bounded degree Grassmann poset?}
A high-dimensional expander, as constructed by Lubotzky, Samuels and Vishne~\cite{LubotzkySV2005-exphdx}, is a simplicial complex that is an eposet and a bounded-degree sub-complex of the complete complex. Is there an analogous construction of an eposet that is a bounded-degree subcomplex of the Grassmann poset? We conjecture the existence of such posets:
        \begin{conjecture}
            For any prime power $q$, $d \in \mathbb{N}$, and any $0 <\gamma < 1$, there exists an infinite sequence of natural numbers $n_1 < n_2 < n_3 < \dots$ such for all $n = n_j$ there exists a $d$-dimensional measured poset $X$ with the following properties:
            \begin{enumerate}
              \item $X$ is sparse, that is $|X(0)| = n$ and $X(d) = O(n)$ (the $O$-notation hides a constant that may depend on $q,d$, but not on $n$).
              \item $X$ may be embedded (as a poset) into $\Gr_q(n,d)$. In addition, for all $i < d$, $\Dist{i}$ is obtained by the same probabilistic experiment described for the Grassmann poset:
                \begin{enumerate}
            	   \item Choose a subspace of dimension $d+1$, $s_d \in X(d)$.
            	   \item Given a space $s_i$ of dimension $i+1$, choose $s_{i-1}$ to be a uniformly random codimension~$1$ subspace of $s_i$.
                \end{enumerate}
                In particular, $X$ is downward closed, that is, if $s\in X$ then every subspace $s'\subset s$ also belongs to $X$.
              \item $X$ is an $(\vec{r},\vec{\delta},\gamma)$-eposet for $r_i = \frac{1}{q^{i+2}-1}$ and $\delta_i = 1-r_i$.
            \end{enumerate}
        \end{conjecture}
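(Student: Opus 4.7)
Since this is a conjecture, the plan is to sketch a natural construction program and identify the principal obstacle. The strategy parallels the construction of bounded-degree simplicial HDXs of Lubotzky, Samuels, and Vishne, which arise as arithmetic quotients $\Gamma \backslash \mathcal{B}$ of the affine building $\mathcal{B}$ of $PGL_{d+1}$ over a non-archimedean local field $F = \mathbb{F}_q((t))$, where $\Gamma$ comes from an arithmetic lattice in a division algebra. Vertices of $\mathcal{B}$ correspond to homothety classes of $\mathcal{O}_F$-lattices in $F^{d+1}$, and the link of a vertex is canonically the spherical building of $GL_{d+1}(\mathbb{F}_q)$, whose $i$-faces correspond to flags of subspaces of $\mathbb{F}_q^{d+1}$ of dimensions $1, \dots, i+1$. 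This already produces Grassmannian-like data locally.

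First, I would start from a Ramanujan quotient $Y = \Gamma \backslash \mathcal{B}$ with $|Y(0)| = n$, so that $|Y(d)| = O(n)$ by bounded degree. Next, I would replace each face by a subspace of $\mathbb{F}_q^n$ via a global ``coordinatization'' map $\phi$: assign to each vertex $v \in Y(0)$ a standard basis vector $e_v \in \mathbb{F}_q^n$, and to each $i$-face $s = \{v_0, \dots, v_i\}$ of $Y$ the $(i+1)$-dimensional subspace $\phi(s)$ obtained by twisting the span $\mathrm{span}(e_{v_0}, \dots, e_{v_i})$ by the transition data of the spherical link flag corresponding to $s$. The resulting object $X$ would sit inside $\Gr_q(n,d)$, be downward closed by construction, and inherit sparsity from $Y$.

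The second phase is to verify the eposet inequality with the Grassmann parameters $r_i = 1/(q^{i+2}-1)$, $\delta_i = 1 - r_i$. Because the local coordinatization was built from spherical-building flags, the \emph{link walks} in $X$ are naturally the Grassmann walks on $\Gr_q(d+1,i)$ rather than the uniform walks on the complete simplicial complex. One would then invoke a Garland-style local-to-global principle, analogous to \Cref{thm:equivalence} but adapted to the Grassmann setting: express $D_{i+1}U_i - r_i I - \delta_i U_{i-1}D_i$ as an average over links $s \in X(i-1)$ of local operators $A_s - r_i I - \delta_i B_s$, where $A_s, B_s$ are the Grassmann upper/lower operators on the link; then bound its norm by $\max_s \|A_s - r_i I - \delta_i B_s\|$, which vanishes on the complete Grassmannian and is therefore a local Grassmann expansion condition on each link.

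The main obstacle is precisely this last step: the classical Garland argument for simplicial HDXs measures local expansion relative to the \emph{uniform} flag distribution (the complete complex), which automatically yields the simplicial parameters $r_i \approx 1/(i+2)$. The conjecture requires local expansion relative to the \emph{Grassmann} distribution on flags, and Ramanujan bounds on the spherical building give the uniform bound, not the Grassmann-weighted one. Thus the heart of the problem is to exhibit a family of bounded-degree objects whose links are approximate copies of $\Gr_q(d+1, \bullet)$ in spectral norm, with the correct non-trivial eigenvalues $\frac{q-1}{q^{j+2}-1}$. One would need either (i) a different arithmetic quotient---perhaps of a group whose Bruhat--Tits data reproduces the Grassmann weights directly---or (ii) a separate local construction of Grassmann expanders, to be assembled via the local-to-global machinery. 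I expect any proof to hinge on identifying the correct arithmetic source for the Grassmann local spectrum, and this is what makes the statement a genuine conjecture rather than a routine adaptation.
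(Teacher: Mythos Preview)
The statement is a \emph{conjecture}; the paper offers no proof of it, so there is nothing to compare your argument against in the usual sense. You recognized this and, appropriately, wrote a research program rather than a proof.

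The paper's only remark beyond stating the conjecture is a pointer to Moshkovitz and Raz, who built a sparse family of planes by restricting to directions from a subfield $\mathbb{H} \subset \mathbb{F}_q$. That is a rather different line of attack from your arithmetic-quotient proposal: their construction is elementary and combinatorial (subfield restriction) rather than group-theoretic, and the paper flags it only as ``an interesting step in this direction,'' not as a route to the full conjecture. Your approach via quotients of affine buildings is more in the spirit of the LSV constructions the paper cites for the simplicial case, and your identification of the obstacle---that the Garland/link-expansion machinery naturally produces the \emph{simplicial} parameters $r_i \approx 1/(i+2)$ rather than the Grassmann parameters---is exactly the right diagnosis of why this is open.

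One technical caution on your sketch: the ``coordinatization'' step, where you assign basis vectors $e_v$ to vertices and then ``twist by the transition data of the spherical link flag,'' is doing a lot of unspecified work. For the resulting $X$ to embed in $\Gr_q(n,d)$ as a downward-closed subposet with the stated measure, you need the map $\phi$ to be globally consistent (so that containment of faces in $Y$ becomes containment of subspaces in $\mathbb{F}_q^n$) and to reproduce the Grassmann down-step probabilities, not just the simplicial ones. It is not clear that any twist of a coordinate-subspace embedding can achieve the latter, since coordinate subspaces have very few codimension-1 subspaces that are themselves coordinate subspaces. This is essentially the same obstacle you name at the end, but it already bites at the embedding stage, not only at the spectral stage.
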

        This existence of sub-posets as above is the vector-space analog of the existence of $\gamma$-HDX simplicial complexes. Moreover, it would be interesting to construct such a poset such that $\Dist{0}, \Dist{d}$ are uniform. Note however that even in the known constructions for $\gamma$-HDX simplicial complexes, $\Dist{d}$ is not uniform (but $\Dist 0$ is uniform).

Moshkovitz and Raz \cite{MoshkovitzR2008} gave a construction that can be viewed as an interesting step in this direction. They constructed, towards a derandomized low degree test, a small set of planes by choosing only planes spanned by directions coming from a smaller field $\mathbb{H} \subset \mathbb{F}_q$.

\subsection{Eposet parameters in a simplicial complex} \label{sec:eposet-as-simplicial-complexes}

Although the definition of (approximately) sequentially differential poset allows a range of parameters $\vec{r}$ and $\vec{\delta}$, these parameters turn out to be determined by the laziness of the upper walks, assuming that the lower walks are sufficiently non-lazy.
The lemma below shows that any family of simplicial complexes which are eposets, have parameters $\vec{r}, \vec{\delta}$ approaching $r_j = \frac{1}{j+2}$ and $\delta_j = 1-\frac{1}{j+2}$ as $\gamma$ goes to zero.

\begin{lemma}\label[lemma]{lem:eposet-fix-parameters}
Let $X^{(m)}$ be a sequence on $k$-dimensional $(\vec{r}^{(m)}, \vec{\delta}^{(m)}, \gamma^{(m)})$-eposets, where $\lim_{m \to \infty} \gamma^{(m)} = 0$. Then for all $j \leq k-1$:
\begin{equation} \label{eq:r-plus-delta-goes-to-one}
\lim_{m \to \infty} r_j^{(m)} + \delta_j^{(m)} = 1.
\end{equation}
Furthermore, suppose that the following two conditions hold:
\begin{enumerate}
\item For all $j \leq k-1$, the laziness of $U_{j-1}D_j$, goes to $0$ as $m$ goes to infinity:
\[ \lim_{m \to \infty} \Pr_{(t_1,t_2) \sim UD} [t_1 = t_2] = 0 .\]
\item There exists $\vec{\alpha}$ such that for all $j \leq k-1$, $D_{j+1}U_j = \alpha_j I + (1-\alpha_j) M^+$, where $M^+$ is a non-lazy averaging operator.
\end{enumerate}
Then
\[
\lim_{m \to \infty} r_j^{(m)} = \alpha_j \text{ and }
\lim_{m \to \infty} \delta_j^{(m)} = 1 - \alpha_j,
\]
\end{lemma}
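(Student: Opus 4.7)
My plan is to extract both conclusions by evaluating the eposet inequality \eqref{eq:asd} on two carefully chosen test functions: the constant function $\mathbf{1}$ for the first claim, and the indicators $\mathbf{1}_t$ of single $j$-faces for the second.

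For the first claim, since $U$ and $D$ are averaging operators they each preserve constants, so $D_{j+1}U_j \mathbf{1} = U_{j-1}D_j \mathbf{1} = \mathbf{1}$. Substituting $\mathbf{1}$ into \eqref{eq:asd} collapses the inequality immediately to $|1 - r_j^{(m)} - \delta_j^{(m)}| \leq \gamma^{(m)}$, and sending $m \to \infty$ proves \eqref{eq:r-plus-delta-goes-to-one}. A useful byproduct: since $r_j^{(m)}, \delta_j^{(m)} \geq 0$, the sequence $\delta_j^{(m)}$ is eventually uniformly bounded (by, say, $2$).

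For the second claim I would pass to the quadratic form on $\mathbf{1}_t$. Fix $t \in X(j)$ and set $A_m := D_{j+1}U_j - r_j^{(m)} I - \delta_j^{(m)} U_{j-1}D_j$. Hypothesis~2 together with the non-laziness of $M^+$ gives $(D_{j+1}U_j \mathbf{1}_t)(t) = \alpha_j \mathbf{1}_t(t) + (1-\alpha_j)(M^+\mathbf{1}_t)(t) = \alpha_j$, while $(U_{j-1}D_j \mathbf{1}_t)(t)$ is the \emph{local} laziness $\ell z_t := \Pr_{(t_1,t_2)\sim U_{j-1}D_j}[t_2 = t \mid t_1 = t]$. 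Combining $\|A_m \mathbf{1}_t\| \leq \gamma^{(m)} \|\mathbf{1}_t\|$ with Cauchy--Schwarz and $\|\mathbf{1}_t\|^2 = \Dist{j}(t)$ yields the pointwise estimate
\[
  |\alpha_j - r_j^{(m)} - \delta_j^{(m)} \ell z_t| \leq \gamma^{(m)} \qquad \text{for every } t \in X(j).
\]

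To finish, I would average this bound over $t \sim \Dist{j}$ and apply $|\mathbb{E}X| \leq \mathbb{E}|X|$, noting that $\mathbb{E}_{t \sim \Dist{j}}[\ell z_t] = \Pr_{(t_1,t_2)\sim U_{j-1}D_j}[t_1 = t_2]$ is precisely the global laziness. Since $\delta_j^{(m)}$ is bounded by the first part, and both $\gamma^{(m)}$ and the global laziness tend to zero (the latter by hypothesis~1), this forces $r_j^{(m)} \to \alpha_j$; then $\delta_j^{(m)} \to 1 - \alpha_j$ follows from \eqref{eq:r-plus-delta-goes-to-one}. I do not anticipate real obstacles---the only subtle point is ensuring $\delta_j^{(m)}$ is already bounded before its limit is known, which the first part handles directly, and the remainder of the argument is substitution plus one averaging step.
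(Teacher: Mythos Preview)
Your proposal is correct and follows essentially the same approach as the paper: test the eposet inequality on the constant function $\one$ for the first part, and on indicators $\one_t$ for the second. The only cosmetic difference is in the endgame of the second part---the paper selects, for each $m$, a single face $\sigma^{(m)}$ whose local laziness is at most the global average (and hence tends to~$0$), whereas you keep all faces and average the pointwise bound over $t\sim\Dist{j}$; both maneuvers reduce to the global laziness and are equally valid.
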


In particular, if $X^{(m)}$ are $k$-dimensional simplicial complexes, then $\alpha_j = \frac{1}{j+2}$ and we get
\[
 \lim_{m \to \infty} r_j^{(m)} = \frac{1}{j+2} \text{ and }
 \lim_{m \to \infty} \delta_j^{(m)} = 1 - \frac{1}{j+2},
 \]
under the mild assumption that the laziness probability of $UD$ goes to zero. In other words, the interesting eposets are $\gamma$-HDXs.

\begin{proof}
To prove both assertions, we use the definition of an eposet to get the following inequality:
\begin{equation} \label{eq:asd-parameters-general}
  \abs{\frac{\iprod{(DU - r_j^{(m)} I - \delta_j^{(m)} UD)f, f} }{\iprod{f,f} }} \leq \gamma^{(m)},
\end{equation}
for any function $f \in C^k$. We use this inequality on specific functions $f$ we choose: the constant function, and indicator functions.

To show that
\[ \lim_{m \to \infty} r_j^{(m)} + \delta_j^{(m)} = 1,\]
we apply $DU - r_j^{(m)} I - \delta_j^{(m)} UD$ to the constant vector $\one$, which is fixed by all of $DU,I,UD$:
\[ \abs{\frac{\iprod{(DU - r_j^{(m)} I - \delta_j^{(m)} UD) \one, \one} }{\iprod{\one, \one} }} \leq \gamma^{(m)} \Longrightarrow
 |1 - r_j^{(m)} - \delta_j^{(m)}| \leq \gamma^{(m)},\]
thus $\lim_{m \to \infty} r_j^{(m)} + \delta_j^{(m)} = 1.$

To show that $\lim_{m \to \infty} r_j^{(m)} = \alpha_j$, we fix $j$ and take a sequence of $\sigma^{(m)} \in X(j)$ such that probability of laziness given that $t_1 = \sigma^{(m)}$ goes to zero:
\[\lim_{m \to \infty} \Pr_{(t_1,t_2) \sim UD}[ t_2 = \sigma^{(m)}  | t_1 = \sigma^{(m)}] = 0.\]
Denote by $\one_{\sigma^{(m)}}$ the indicator of $\sigma^{(m)}$. Then
\[ \frac{\iprod{UD \one_{\sigma^{(m)}}, \one_{\sigma^{(m)}}}}{\iprod{\one_{\sigma^{(m)}}, \one_{\sigma^{(m)}}}} = \Pr_{(t_1,t_2) \sim UD}[ t_2 = \sigma^{(m)}  | t_1 = \sigma^{(m)}].\]
Moreover,
\[ \frac{\iprod{(DU-r^{(m)}_jI) \one_{\sigma^{(m)}}, \one_{\sigma^{(m)}}}}{\iprod{\one_{\sigma^{(m)}}, \one_{\sigma^{(m)}}}} = \alpha_j - r_j^{(m)}. \]

Plugging $f = \one_{\sigma^{(m)}}$ into \eqref{eq:asd-parameters-general}, we get
\[ \bigl|\alpha_j - r^{(m)}_j - \delta^{(m)}_j \Pr_{(t_1,t_2) \sim UD}[ t_2 = \sigma^{(m)} | t_1 = \sigma^{(m)}]\bigr| \leq \gamma^{(m)}.\]
Since the $\delta^{(m)}_j$ are bounded, this shows that $\lim_{m \to \infty} r^{(m)}_j = \alpha_j$. The analogous statement for $\delta^{(m)}_j$ follows from~\eqref{eq:r-plus-delta-goes-to-one}.
\end{proof}

\section{Boolean degree 1 functions} \label{sec:exact-fkn}

In this section we characterize all Boolean degree 1 functions in nice complexes.

\begin{definition} \label{def:1-skeleton}
Let $X$ be a simplicial complex. The \emph{$1$-skeleton} of $X$ is the graph whose vertices are the $0$-faces of $X$ and whose edges are the $1$-faces of $X$.	
\end{definition}

\begin{theorem} \label{thm:hard-fkn}
 Suppose that $X$ is a proper $k$-dimensional simplicial complex, for $k \geq 2$, whose $1$-skeleton is connected. A function $f \in C^k$ is a Boolean degree~$1$ function if and only if there exists an independent set $I$ in the one skeleton of $X$ such that $f$ is the indicator of intersecting $I$ or of not intersecting $I$.
\end{theorem}
\begin{proof}
 If $f$ is the indicator of intersecting an independent set $I$ then $f = \sum_{v \in I} y_v$, and so $\deg f \leq 1$. If $f$ is the indicator of not intersecting an independent set $I$ then $f = \sum_{v \in X(0)} y_v/(k+1) - \sum_{v \in I} y_v$, and so again $\deg f \leq 1$.

 Suppose now that $f$ is a Boolean degree~$1$ function. If $|X(0)| \leq 2$ then the theorem clearly holds, so assume that $|X(0)| > 2$. \Cref{lem:degree-rep} shows that $f$ has a unique representation of the form
\[
 f = \sum_{v \in X(0)} c_v y_v.
\]
 Since $f$ is Boolean, it satisfies $f^2 = f$. Note that
\[
 f^2 = \sum_{\{u,v\} \in X(1)} 2c_u c_v y_{\{u,v\}} + \sum_{v \in X(0)} c_v^2 y_v.
\]
 Moreover, since every input $x$ to $f$ which contains $v$ contains exactly $k$ other points (elements of $X(0)$), and since $X(1)$ contains all pairs of points from $x$, we have
\[
 y_v = \sum_{u\colon \{u,v\} \in X(1)} \frac{y_{\{u,v\}}}{k}.
\]
 This shows that
\begin{multline*}
 0 = f^2 - f = \sum_{\{u,v\} \in X(1)} 2c_u c_v y_{\{u,v\}} + \frac{1}{k} \sum_{v \in X(0)} (c_v^2 - c_v) \sum_{u\colon \{u,v\} \in X(1)} y_{\{u,v\}} = \\
 \frac{1}{k} \sum_{\{u,v\} \in X(1)} (2kc_uc_v + c_u^2-c_u + c_v^2-c_v) y_{\{u,v\}}.
\end{multline*}
 \Cref{lem:degree-rep} shows that the coefficients of all $y_{\{u,v\}}$ must vanish, that is, for all $\{u,v\} \in X(1)$ we have
\[
 2kc_uc_v = c_u(1-c_u) + c_v(1-c_v).
\]
Consider now a triple of points $u,v,w$ such that $\{u,v,w\} \in X(2)$, and the corresponding system of equations:
\begin{align*}
 2kc_uc_v &= c_u(1-c_u) + c_v(1-c_v), \\
 2kc_uc_w &= c_u(1-c_u) + c_w(1-c_w), \\
 2kc_vc_w &= c_v(1-c_v) + c_w(1-c_w).
\end{align*}
 Subtracting the second equation from the first, we obtain
\[
 2kc_u (c_v - c_w) = c_v(1-c_v) - c_w(1-c_w) = (c_v-c_w)-(c_v^2-c_w^2) = (c_v-c_w)(1-c_v-c_w).
\]
 This shows that either $c_v = c_w$ or $2kc_u = 1-c_v-c_w$.

 If $c_u \neq c_v,c_w$ then $2kc_w + c_u + c_v = 2kc_v + c_u + c_w = 1$, which implies that $c_v = c_w$. Thus $c_u,c_v,c_w$ can consist of at most two values. If $c := c_u = c_v = c_w$ then $2kc^2 = 2c(1-c)$, and so $c \in \{0, \frac1{k+1}\}$. If $c := c_v = c_w \neq c_u$ then $2kc^2 = 2c(1-c)$, and so $c \in \{0, \frac1{k+1}\}$ as before. We also have $2kc_u c = c_u(1-c_u) + c(1-c)$. If $c = 0$ then this shows that $c_u (1-c_u) = 0$, and so $c_u = 1$. If $c = \frac1{k+1}$ then one can similarly check that $c_u = \frac1{k+1}-1$.

 Summarizing, one of the following two cases must happen:
\begin{enumerate}
 \item Two of $c_u,c_v,c_w$ are equal to $0$, and the remaining one is either $0$ or $1$.
 \item Two of $c_u,c_v,c_w$ are equal to $\frac1{k+1}$, and the remaining one is either $\frac1{k+1}$ or $\frac1{k+1}-1$.
\end{enumerate}

 Let us say that a vertex $v \in X(0)$ is of type A if $c_v \in \{0,1\}$, and of type B if $c_v \in \{\frac1{k+1}, \frac1{k+1}-1\}$. Since the complex is pure and at least two-dimensional, every vertex must participate in a triangle (two-dimensional face), and so every vertex is of one of the types. In fact, all vertices must be of the \emph{same} type. Otherwise, there would be a vertex $v$ of type A incident to a vertex $w$ of type B (since the link of $\emptyset$ is connected). However, since the complex is pure, $\{v,w\}$ must participate in a triangle, contradicting the classification above.

 Suppose first that all vertices are type A, and let $I = \{ v : c_v = 1 \}$. Note that $f$ indicates that the input face intersects $I$. Clearly $I$ must be an independent set, since otherwise $f$ would not be Boolean. When all vertices are type B, the function $1-f = \sum_{v \in X(0)} (\frac1{k+1} - c_v) y_v$ is of type A, and so $f$ must indicate not intersecting an independent set.
\end{proof}

If $X$ is a $\gamma$-HDX for $0 < \gamma < 1/(k+1)$ then the link of $\emptyset$ has positive spectral gap, and in particular it is connected. Thus \cref{thm:hard-fkn} applies to high-dimensional expanders.

When the $1$-skeleton of $X$ contains $r$ connected components $C_1,\ldots,C_r$, the same argument shows that the Boolean degree~$1$ functions on $X$ are of the form $f = f_1 + \cdots + f_r$, where each $f_i$ is the indicator of intersecting or not intersecting an independent set of $C_i$.

\section{FKN theorem on high-dimensional expanders} \label{sec:fkn}

In this section, we prove an analog of the classical result of
Friedgut, Kalai and Naor~\cite{FriedgutKN2002} for high-dimensional
expanders. The FKN theorem states that any Boolean function $F$ on
the hypercube that is close to a degree~1 function $f$ (not necessarily
Boolean) in the $L_2^2$-sense must agree with some Boolean degree~1
function (which must be a dictator) on most points. This result for the Boolean hypercube can be easily extended to functions on $k$-slices of the hypercube, provided $k = \Theta(n)$.

\begin{theorem}[FKN theorem on the slice~{\cite{Filmus2016-fkn}}]\label{thm:fkn-1}
  Let $n, k \in \integers_{\geq 0}$ and $\epsilon \in (0,1)$ such
  that $n/4 \leq k+1 \leq n/2$. Let $F\colon \binom{[n]}{k+1} \to\{0,1\}$
  be a Boolean function such that $\EE[(F-f)^2] < \epsilon$ for some
  degree~$1$ function $f\colon \binom{[n]}{k+1}\to\{0,1\}$. Then there exists
  a degree~$1$ function $g\colon \binom{[n]}{k+1} \to \RR$ such that
\[\Pr[F \neq g] = O(\epsilon).\]
 Furthermore, $g \in \{0,1,y_i,1-y_i\}$, that is, $g$ is a Boolean dictator ($1$-junta).
\end{theorem}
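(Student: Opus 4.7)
The plan is to adapt the classical Boolean-cube FKN proof to the slice, using the harmonic decomposition of \cref{cor:decomposition}. Since the complete $k$-dimensional complex on $[n]$ is proper whenever $k+1 \le n/2$, every degree-$1$ function on the $(k+1)$-slice admits a unique representation $f = c_0 + \sum_{i=1}^n c_i\,y_i$ subject to the harmonicity constraint $\sum_i c_i = 0$. After a harmless truncation replacing $f$ by its clip to $[0,1]$ (which costs only $O(\epsilon)$ in $\|F-f\|_2^2$), I may assume $f$ is bounded by~$1$.

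The key step is to exploit Booleanity via the identity $F(F-1)\equiv 0$. Since $\|F-f\|_2 = O(\sqrt{\epsilon})$ and $F,f$ are both bounded, $\|f(f-1)\|_2 = O(\sqrt{\epsilon})$. Expanding using $y_i y_j = y_{\{i,j\}}$ for $i\neq j$ and $y_i^2 = y_i$ gives
\[
 f(f-1) \;=\; c_0(c_0-1) \;+\; \sum_i \bigl((2c_0-1)c_i + c_i^2\bigr)y_i \;+\; \sum_{i\neq j} c_i c_j\, y_{\{i,j\}}.
\]
Applying \cref{cor:decomposition} to regroup the right side into its harmonic levels $-1,0,1,2$, these components are mutually orthogonal, so each has $L_2$-norm $O(\sqrt{\epsilon})$. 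The hypothesis $n/4 \le k+1 \le n/2$ keeps the $L_2$-norms of the harmonic projections of $y_i$ and $y_{\{i,j\}}$ of order $\Theta(1)$ independent of $n$, so the degree-$2$ bound translates into the quantitative estimate $\sum_{i\neq j}(c_i c_j)^2 = O(\epsilon)$, i.e.\ $\bigl(\sum_i c_i^2\bigr)^2 - \sum_i c_i^4 = O(\epsilon)$.

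This algebraic inequality forces a dichotomy on $(c_1,\dots,c_n)$: either $\sum_i c_i^2 = O(\sqrt{\epsilon})$, in which case $f$ is within $O(\epsilon^{1/4})$ of the constant $c_0$ and Booleanity of $F$ pins $c_0$ near $\{0,1\}$, yielding $g\in\{0,1\}$; or there is a unique dominant index $i^*$ with $|c_{i^*}|=\Theta(1)$ and $\sum_{j\neq i^*}c_j^2 = O(\sqrt{\epsilon})$, forcing $f$ to be within $O(\epsilon^{1/4})$ of $y_{i^*}$ or $1-y_{i^*}$ (the sign of $c_{i^*}$ together with the harmonicity constraint $\sum_i c_i = 0$ pick the right one). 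A final rounding step converts $L_2$-closeness of $F$ to a Boolean dictator $g\in\{0,1,y_i,1-y_i\}$ into the pointwise bound $\Pr[F\neq g] = O(\epsilon)$. The main obstacle is bookkeeping the slice-specific constants that enter the harmonic projections of $y_i$ and $y_{\{i,j\}}$: these depend delicately on the ratio $k/n$, and only the two-sided boundedness $n/4 \le k+1 \le n/2$ keeps them uniformly bounded away from $0$ and $\infty$ so that the final $O(\epsilon)$ conclusion survives the chain of estimates.
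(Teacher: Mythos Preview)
The paper does not prove \cref{thm:fkn-1}; it is quoted from~\cite{Filmus2016-fkn} and used as a black box in the proof of \cref{thm:fkn-hdx}. So there is no argument in the paper to compare against.

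Your outline follows the natural FKN template, but two steps do not go through as written. First, clipping $f$ to $[0,1]$ destroys the degree-$1$ structure: the truncated function no longer has the form $c_0+\sum_i c_i y_i$, so you cannot expand $f(f-1)$ in terms of the $c_i$ after clipping. The usual fix is not to clip but to control $\|f(f-1)\|_2 = \|(f-F)(f+F-1)\|_2$ via Cauchy--Schwarz and a hypercontractive bound $\|f+F-1\|_4 = O(\|f+F-1\|_2)$ for low-degree functions on the slice; without invoking such an inequality the chain breaks here. Second, even granting $\|f(f-1)\|_2 = O(\sqrt\epsilon)$, your dichotomy only delivers $\Pr[F\neq g]=O(\sqrt\epsilon)$, not $O(\epsilon)$: in the near-constant branch $\sum_i c_i^2 = O(\sqrt\epsilon)$ gives $\|f-c_0\|_2^2 = \Theta\bigl(\sum_i c_i^2\bigr) = O(\sqrt\epsilon)$, and since $F$ and the rounded constant are both Boolean, $\Pr[F\neq g]=\|F-g\|_2^2 = O(\sqrt\epsilon)$ only. (You yourself write $O(\epsilon^{1/4})$ in $L_2$, which squares to $O(\sqrt\epsilon)$.) Recovering the sharp $O(\epsilon)$ requires a genuinely more delicate argument, which is precisely the content of~\cite{Filmus2016-fkn}.
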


\begin{remark}\label[remark]{rem:fkn}
\begin{enumerate}
\item\label{item:fkn}
  The function $g$ promised by the theorem satisfies $\E[(g-F)^2] = \Pr[g \neq
  F] = O(\eps)$ and hence, by the $L_2^2$-triangle inequality we have
  $\E[(f-g)^2] \leq 2\E[(f-F)^2] + 2\E[(g-F)^2] = O(\eps)$. This is the way that the FKN theorem is traditionally stated,
  but we prefer the above formulation as this is the one we are able
  to generalize to the high-dimensional expander setting.
\item The function $1$ can also be written as $\frac{1}{k+1}\sum_j y_j$. The function $1-y_i$ can also be written as $\frac{1}{k+1}\sum_{j \neq i} y_j + (\frac{1}{k+1}-1) y_i$.
\item The result of Filmus~\cite{Filmus2016-fkn} is quite a bit stronger: for every $k \leq n/2$, it promises the existence of a function $g\colon \binom{[n]}{k+1} \to \RR$, not necessarily Boolean, such that $\E[(f-g)^2] = O(\eps)$. Moreover, either $g$ or $1-g$ is of the form $\sum_{i \in S} y_i$ for $|S| \leq \max(1,\sqrt{\epsilon} \cdot n/k)$. The bound on the size of $S$ ensures that $\Pr[g \in \{0,1\}] = 1 - O(\epsilon)$.
\end{enumerate}
\end{remark}

Our main theorem is an extension of the above theorem to $k$-faces of a two-sided link expander.

\begin{theorem}[FKN theorem for two-sided link expanders]\label{thm:fkn-hdx}
Let $X$ be a $d$-dimensional $\lambda$-two-sided link expander, where $\lambda < 1/d$, and let $4k^2 < d$. Let $F\colon X(k)\to\{0,1\}$ be a function such that
$\EE[(F-f)^2] < \epsilon$ for some degree~$1$ function $f\colon
X(k)\to\RR$. Then there exists a degree~$1$ function $g\colon X(k) \to
\RR$ such that \[\Pr[F \neq g] = O_\lambda(\epsilon).\] Furthermore, the degree~1 function $g$ can be written as $g(y) = \sum_i d_i y_i$, where $d_i \in \{0,1,\frac{1}{k+1},\frac{1}{k+1}-1\}$.
\end{theorem}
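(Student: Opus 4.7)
The plan is to follow the localize-and-glue strategy sketched in the introduction. First, fix an integer $m$ with $2(k+1) \leq m \leq 4(k+1)$ and $m - 1 \leq d$, which is possible under the hypothesis $4k^2 < d$. For each $s \in X(m-1)$, let $X^s := \binom{s}{k+1}$ be the set of $k$-faces contained in $s$; this is isomorphic to the $(k)$-slice of the complete complex on $m$ vertices. The downward sampling description of $\Dist{k}$ — sample $s \sim \Dist{m-1}$ and then pick $t \subseteq s$ uniformly among faces of size $k+1$ — implies that the conditional measure on $X^s$ is the uniform measure on $\binom{s}{k+1}$. Thus the restrictions $F^s := F|_{X^s}$ and $f^s := f|_{X^s}$ fit exactly the setup of \cref{thm:fkn-1}, with $f^s$ still a degree~$1$ function.

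By the tower property, $\E_{s \sim \Dist{m-1}}\!\bigl[\E_{t \subseteq s}[(F^s(t) - f^s(t))^2]\bigr] = \E[(F-f)^2] < \epsilon$, so for a $1 - O(\sqrt{\epsilon})$ fraction of $s$ — call these \emph{good} — one has $\E[(F^s - f^s)^2] \leq \sqrt{\epsilon}$. Applying \cref{thm:fkn-1} on each good $s$ (with ``$n = m$'') yields a Boolean degree~$1$ function $g^s$ on $X^s$ of the form $0$, $1$, $y_i$, or $1 - y_i$ for some $i \in s$, satisfying $\Pr_t[F^s(t) \neq g^s(t)] = O(\sqrt{\epsilon})$. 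Encode $g^s$ as a local assignment $\phi_s : s \to \Sigma$ with $\Sigma = \{0,\,1,\,\tfrac{1}{k+1},\,\tfrac{1}{k+1}-1\}$ via $g^s = \sum_{i \in s} \phi_s(i)\, y_i$; for bad $s$, define $\phi_s$ arbitrarily.

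The next step is to verify the agreement hypothesis of the Dinur--Kaufman agreement theorem~\cite{DinurK2017} applied to the ensemble $\{\phi_s\}_{s \in X(m-1)}$. Drawing a pair $s, s'$ via the natural walk that shares a common subface $s \cap s' \in X(m-2)$, I would show that $\phi_s|_{s \cap s'} = \phi_{s'}|_{s \cap s'}$ with probability $1 - O(\sqrt{\epsilon})$. The key observation is that when both $s$ and $s'$ are good, $g^s$ and $g^{s'}$ restricted to the overlap slice $X^{s \cap s'}$ are each $O(\sqrt{\epsilon})$-close to $F$ on that slice; but any two \emph{distinct} Boolean dictators on a $(k)$-slice of $\Theta(k)$ points differ on an $\Omega(1)$-fraction of inputs. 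Hence with high probability the two local dictators agree on the overlap, which translates to agreement of $\phi_s$ and $\phi_{s'}$ on $s \cap s'$. The agreement theorem then produces a global assignment $\phi : X(0) \to \Sigma$ such that $\phi|_s = \phi_s$ for a $1 - O_\lambda(\sqrt{\epsilon})$ fraction of $s \in X(m-1)$.

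Finally, define $g := \sum_{i \in X(0)} \phi(i)\, y_i$; this is a degree~$1$ function on $X(k)$ in the exact form promised by the theorem. On every good $s$ with $\phi|_s = \phi_s$, one has $g|_{X^s} = g^s$, and hence $\Pr_{t \subseteq s}[F(t) \neq g(t)] = O(\sqrt{\epsilon})$; averaging over $s$ then yields $\Pr[F \neq g] = O_\lambda(\sqrt{\epsilon})$. The main obstacle I expect is sharpening the three $\sqrt{\epsilon}$-losses (Markov localization, the FKN application, and the agreement step) down to the claimed linear $O_\lambda(\epsilon)$. In the spirit of the biased Kindler--Safra proof of~\cite{DinurFH2019}, the natural route is to work throughout with $L_2^2$-error accounting instead of Hamming-distance/Markov bounds, and to invoke a squared-loss version of the agreement theorem on the high-dimensional expander.
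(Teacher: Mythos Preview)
Your overall strategy---localize to sub-slices, apply the slice FKN, encode the resulting dictators as local assignments, and glue via the Dinur--Kaufman agreement theorem---is exactly the paper's. There are two gaps; you flag the first but not the second.

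The $\sqrt{\epsilon}$ losses are removable precisely as you suspect, and the paper does just that: it never invokes Markov to isolate ``good'' faces. Instead it applies \cref{thm:fkn-1} to \emph{every} $t \in X(2k)$ with local error $\epsilon_t := \E_{s \subset t}[(F-f)^2]$, obtaining a dictator $g_t$ with $\E[(g_t - f|_t)^2] = O(\epsilon_t)$, and only then averages (so $\E_t \E[(g_t - f|_t)^2] = O(\epsilon)$). The conversion from $L_2^2$ to disagreement probability is the observation that two distinct Boolean dictators on a balanced slice are at $L_2^2$-distance $\Omega(1)$, so ``$\E[(g_t - g_{t'})^2]$ small in expectation'' yields ``$g_t = g_{t'}$ with high probability'' with no square root.

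The more serious gap is the agreement step. The Dinur--Kaufman theorem as stated in the paper (Theorem~\ref{thm:agree-hdx}) tests agreement under $\calD_{k,2k}$: sample $u \in X(2k)$ and then two \emph{independent} $k$-subfaces $t_1,t_2 \subset u$. Here $t_1 \cap t_2$ is typically small (possibly empty), so your overlap argument---``$g^s$ and $g^{s'}$ are both close to $F$ on the overlap slice, hence to each other''---does not apply. The test you propose, sharing a common $(m-2)$-subface, is the down-up walk, which is not the test the cited theorem covers. The paper's fix is a \emph{two-level} application of slice FKN: once on $X(2k)$ (yielding $g_t$, $d_t$) and once on $X(4k)$ (yielding $h_u$, $e_u$). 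For $t_1,t_2 \subset u$ one then compares each $d_{t_i}$ to the intermediary $e_u|_{t_i}$ rather than to each other. Since $\E_{u}\E_{t \subset u}\E[(h_u|_t - g_t)^2] = O(\epsilon)$ and both functions are Boolean dictators on the same slice, $\Pr_{u,t}[e_u|_t \not\equiv d_t] = O(\epsilon)$; a union bound then gives $\calD_{2k,4k}$-agreement of the $d_t$'s at rate $O(\epsilon)$, exactly what the agreement theorem needs. This higher-level intermediary is the idea missing from your proposal.
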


The high-dimensional analog of the FKN theorem is obtained from the
FKN theorem for the slice using the agreement theorem of Dinur and
Kaufman~\cite{DinurK2017}.

Using \cref{thm:equivalence}, we formulate the FKN theorem in terms of high-dimensional expanders:
\begin{corollary}[FKN theorem for HDX]
Let $X$ be a $d$-dimensional $\gamma$-high-dimensional expander, where $\gamma < 1/3d^2$, and let $4k^2 < d$. Let $F\colon X(k)\to\{0,1\}$ be a function such that
$\EE[(F-f)^2] < \epsilon$ for some degree~$1$ function $f\colon
X(k)\to\RR$. Then there exists a degree~$1$ function $g\colon X(k) \to
\RR$ such that \[\Pr[F \neq g] = O_\gamma(\epsilon).\] Furthermore, the degree~1 function $g$ can be written as $g(y) = \sum_i d_i y_i$, where $d_i \in \{0,1,\frac{1}{k+1},\frac{1}{k+1}-1\}$.
\end{corollary}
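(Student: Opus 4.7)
The plan is a short, two-step reduction: the corollary is essentially a restatement of Theorem~\ref{thm:fkn-hdx} using the equivalence between the two notions of high-dimensional expansion, so the work is entirely in chaining \cref{thm:equivalence} and \cref{thm:fkn-hdx}. First I would apply \cref{thm:equivalence}, Item~2: since $X$ is a $\gamma$-HDX according to \cref{def:HDX}, it is a $3d\gamma$-two-sided link expander in the sense of \cref{def:link-HDX}. Setting $\lambda := 3d\gamma$, the assumed bound $\gamma < 1/(3d^2)$ becomes
\[
\lambda = 3d\gamma < 3d \cdot \frac{1}{3d^2} = \frac{1}{d},
\]
which is precisely the quantitative hypothesis $\lambda < 1/d$ required by \cref{thm:fkn-hdx}.

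Second, I would invoke \cref{thm:fkn-hdx} directly on $X$, now viewed as a $\lambda$-two-sided link expander with $\lambda = 3d\gamma$. The remaining hypothesis $4k^2 < d$ is identical in the two statements, and the data $(F,f,\epsilon)$ transfer unchanged. The conclusion of \cref{thm:fkn-hdx} yields exactly the required degree-$1$ function $g = \sum_i d_i y_i$ with $d_i \in \{0,1,\tfrac{1}{k+1},\tfrac{1}{k+1}-1\}$ and $\Pr[F \neq g] = O_\lambda(\epsilon)$. Since $\lambda$ is a function of $\gamma$ and $d$ alone (and $d$ is fixed by the problem setup), rewriting $O_\lambda(\epsilon)$ as $O_\gamma(\epsilon)$ is purely notational.

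There is no genuine obstacle at this step; the nontrivial content lies upstream in \cref{thm:fkn-hdx} itself (whose proof localizes to links, applies the classical slice-FKN of \cref{thm:fkn-1} on each link, and glues the local solutions via the agreement theorem of~\cite{DinurK2017}) and in the constant-loss direction of \cref{thm:equivalence}. The only care required in writing out the corollary is to verify that the constant $3d$ in the equivalence theorem combines cleanly with the threshold $1/d$ in the FKN hypothesis, which is why the corollary is stated with the somewhat asymmetric constraint $\gamma < 1/(3d^2)$.
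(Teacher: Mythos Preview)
Your proposal is correct and matches the paper's own treatment: the corollary is stated immediately after \cref{thm:fkn-hdx} with only the line ``Using \cref{thm:equivalence}, we formulate the FKN theorem in terms of high dimensional expanders,'' and no further proof is given. Your two-step chaining of \cref{thm:equivalence} (Item~2) with \cref{thm:fkn-hdx}, including the arithmetic check that $\lambda=3d\gamma<1/d$, is exactly the intended derivation.
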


\subsection{Agreement theorem for high-dimensional expanders}

Dinur and Kaufman~\cite{DinurK2017} prove an agreement theorem for high-dimensional expanders. The setup is as follows. For each $k$-face $s$ we are given a local function $f_s\colon s\to\Sigma$ that assigns values from an alphabet $\Sigma$ to each point in $s$. Two local functions $f_s,f_{s'}$ are said to \emph{agree} if $f_s(v)=f_{s'}(v)$ for all $v\in s\cap s'$. Let $\calD_{k,2k}$ be the distribution on pairs $(s_1,s_2)$ obtained by choosing a random $t\sim \Dist{2k}$ and then independently choosing two $k$-faces $s_1,s_2 \subset t$. The theorem says that if a random pair  of faces $(s,s') \sim \calD_{k,2k}$ satisfies with high probability that $f_s$ agrees with $f_{s'}$ on the intersection of their domains, then there must be a global function $g\colon X(0)\to\Sigma$ such that almost always $g|_s \equiv f_s$. Formally:

\begin{theorem}[Agreement theorem for high-dimensional
  expanders~\cite{DinurK2017}]\label{thm:agree-hdx}
Let $X$ be a $d$-dimensional $\lambda$-two-sided high-dimensional expander, where $\lambda < 1/d$, let $k^2 < d$, and let $\Sigma$ be some fixed finite alphabet. Let $\{f_s\colon s
\to \Sigma\}_{s
  \in X(k)}$ be an ensemble of  local functions on
$X(k)$, i.e.\ $f_s \in \Sigma^s$ for each $s \in X(k)$. If
\[
\Pr_{(s_1,s_2)\sim \calD_{k,2k}}[f_{s_1}|_{s_1\cap s_2} \equiv
f_{s_2}|_{s_1 \cap s_2}]> 1 - \epsilon
\]
then there is a $g\colon X(0) \to \Sigma$ such that
\[
\Pr_{s\sim \Dist{k}} [ f_s \equiv g|_s ]\geq 1 - O_\lambda(\epsilon).\]
\end{theorem}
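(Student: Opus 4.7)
The plan is to construct the global function $g$ by a plurality vote at each vertex of $X(0)$ and then propagate the hypothesized agreement on $\calD_{k,2k}$ (faces sharing a common $2k$-face) down to strong agreement between $g$ and almost all local functions $f_s$. The key technical engine will be the spectral expansion guaranteed by the $\lambda$-two-sided link expander condition, which is what converts ``pairwise local agreement'' into ``global consistency''.

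Concretely, for each vertex $v\in X(0)$ I would define
\[
 g(v) := \arg\max_{\sigma\in\Sigma}\Pr_{s\sim\Dist{k}}[f_s(v)=\sigma\mid v\in s],
\]
with ties broken arbitrarily. The strategy is then to compare a sequence of agreement distributions $\calD_{k,j}$ for $j\in\{k,k+1,\ldots,2k\}$, where $(s_1,s_2)\sim\calD_{k,j}$ is obtained by choosing $t\sim\Dist{j}$ and then two $k$-faces $s_1,s_2\subset t$ independently. First I would show by a hybrid/telescoping argument that if the disagreement probability under $\calD_{k,2k}$ is $\eps$, then under $\calD_{k,j}$ for every $j\ge k$ it is at most $O_\lambda(\eps)$. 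This is the place where the link-expansion is used: the operator that refines $\calD_{k,j+1}$ to $\calD_{k,j}$ corresponds to averaging inside a link, so its nontrivial spectrum is bounded by $\lambda$ (by the definition of a two-sided link expander, plus \cref{thm:equivalence} / \eqref{eq:link-expander-key-property}), and a standard expander-sampler calculation shows the disagreement indicator cannot concentrate on a set of measure much larger than $\eps$.

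Once I have agreement with probability $1-O_\lambda(\eps)$ under $\calD_{k,k+1}$, i.e.\ for pairs sharing $k$ vertices (which is essentially ``pairs sharing a $(k+1)$-face''), I can run a second, lower-level hybrid inside the link of a vertex $v$: projecting $f_{s}(v)$ as a $\Sigma$-valued function on the $(k-1)$-skeleton of $X_v$ and iteratively reducing the common overlap, I conclude that for any fixed $v$, a random pair of $k$-faces through $v$ has $f_{s_1}(v)=f_{s_2}(v)$ with probability $1-O_\lambda(\eps)$. By convexity this forces the plurality $g(v)$ to coincide with $f_s(v)$ for a $1-O_\lambda(\eps)$ fraction of $s\ni v$, after which a final union bound over the $k+1$ vertices of $s$ (with consistency enforced by the same expansion inequality applied inside each link) yields $\Pr_s[f_s\equiv g|_s]\ge 1-O_\lambda(\eps)$.

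The main obstacle, and the step I expect to require the most work, is the last one: moving from ``for most $s\ni v$, $f_s(v)=g(v)$'' \emph{for each} $v$ separately to ``for most $s$, $f_s(v)=g(v)$ \emph{simultaneously} for all $v\in s$''. A naive union bound loses a factor of $k+1$ and, more seriously, cannot be carried out independently across the $k+1$ vertices of $s$ because the events are correlated. The intended remedy is to replace the union bound by a direct second-moment estimate on the number of ``corrupted'' vertices in a random $s$, where the moments are controlled by running the same spectral hybrid between $\calD_{k,j}$-distributions but applied to the indicator function $\one[f_s(v)\ne g(v)]$ instead of to $\one[f_{s_1}\ne f_{s_2}\text{ on intersection}]$. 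This is exactly the kind of ``expansion samples correctly'' argument the HDX property is designed to supply, so once the hybrid scheme is set up cleanly this should close the proof with the promised constant $O_\lambda(\eps)$.
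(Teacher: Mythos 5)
The paper does not prove \cref{thm:agree-hdx} at all: it is imported as a black box from Dinur and Kaufman~\cite{DinurK2017}, and the only argument the paper supplies is the remark that the general-alphabet version follows from the binary-alphabet one by applying it $\lceil \log_2 |\Sigma|\rceil$ times. So your proposal is not a variant of the paper's argument but an attempt to reprove the external theorem from scratch, and as such it has genuine gaps rather than being a complete alternative route.

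The central hybrid claim --- that disagreement $\eps$ under $\calD_{k,2k}$ forces disagreement $O_\lambda(\eps)$ under every $\calD_{k,j}$ for $k\le j\le 2k$ because ``the operator refining $\calD_{k,j+1}$ to $\calD_{k,j}$ has nontrivial spectrum at most $\lambda$'' --- is not justified: the disagreement indicator is a function on \emph{pairs} of $k$-faces, not on the vertex set of a single link graph, and the inequality \eqref{eq:link-expander-key-property} controls the quadratic form of one averaging step on one level; it does not by itself transfer a probability bound between two different pair-distributions, and no such one-step operator identity relating $\calD_{k,j}$ and $\calD_{k,j+1}$ is exhibited. Even granting an additive $O_\lambda(\eps)$ loss per step, you perform about $k$ hybrid steps and then a union bound (or second moment) over the $k+1$ vertices of $s$, so the best you could hope for is $O_\lambda(k\eps)$, whereas the theorem's constant must be independent of $k$ (this matters in its use in \cref{thm:fkn-hdx}, where only $4k^2<d$ is assumed). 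More fundamentally, in a bounded-degree complex two random $k$-faces through a fixed vertex $v$ typically share nothing beyond $v$ and lie in no common $2k$-face, so the hypothesis gives no direct handle on the pairs needed to analyze the plurality value $g(v)$; your ``lower-level hybrid inside the link of $v$'' therefore silently assumes an agreement theorem inside links, which is essentially the statement being proved. The actual proof in \cite{DinurK2017} takes a different route: it applies a known agreement theorem for the dense (complete-complex-like) structure inside links and then stitches these local conclusions together across links using expansion --- structurally the same local-to-global scheme this paper uses to derive \cref{thm:fkn-hdx} --- and the final gluing step, which you yourself flag as the main obstacle, is exactly where that work lies and is not carried out in your sketch.
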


While Dinur and Kaufman state the theorem for a binary alphabet, the general version follows in a black box fashion by applying the theorem for binary alphabets $\lceil \log_2 |\Sigma| \rceil$ many times.

\subsection{Proof of \texorpdfstring{\cref{thm:fkn-hdx}}{FKN theorem}}

Let $f,F\in C^{k}$, where $F$ is a Boolean function and $f$ is a
degree~1 function, as in the hypothesis of \cref{thm:fkn-hdx}.
Since $f$ is a degree~1 function,
\cref{lem:degree-rep} guarantees that there exist $a_i \in \RR$ such that $f(y) =
\sum_{i \in X(0)} a_i y_i$.  Note that here we view the inputs of $f$ as
$|X(0)|$-bit strings with exactly $k+1$ ones, the rest being zero.

We begin by defining two ensembles of pairs of local functions $\{(f|_t,F|_t)\}_{t \in
  X(2k)}$, $\{(f|_u, F|_u)\}_{u \in X(4k)}$, which are the restrictions
of $(f,F)$ to the $2k$-face $t$ and $4k$-face $u$. Formally, for any $t \in X(2k)$ and $u\in X(4k)$, consider the restriction of
$f$ to $t$ and $u$ defined as follows:
\begin{align*}
f|_t, F|_t\colon \binom{t}{k} \to \RR, &&f|_t(y) = f(y) = \sum_{i \in
                                         t} a_i y_i, && F|_t(y) = F(y),\\
f|_u, F|_u\colon \binom{u}{k} \to \RR, && f|_u(y) = f(y) = \sum_{i \in
                                         u} a_i y_i, && F|_u(y)  = F(y).
\end{align*}
Observe that the $f|_t$'s are degree~1 functions, while the
$F|_t$'s are Boolean functions (similarly for $f|_u$'s and $F|_u$'s).

Now, define the following quantities:
\begin{align*}
\epsilon_t := \E_{s\colon s\subset t}[(f|_t(s)-
  F|_t(s))^2], &&\delta_u := \E_{s \colon s\subset u}[(f|_u(s)-
  F|_u(s))^2].
\end{align*}
Clearly, $\E_t[\epsilon_t],
\E_u[\delta_u] = O(\epsilon)$.

Let $\alpha_k = \frac1{k+1}$. Applying \cref{thm:fkn-1} (along with
\cref{rem:fkn}) to the functions $(f|_t, F|_t)$  for
each $t \in X(2k)$, we have the following claim:

\begin{claim} \label[claim]{lem:def-gt-1}
For every $t \in X(2k)$, there exists a Boolean dictator $g_t\colon \binom{t}{k}
\to \{0,1\}$ such that \[\E_{s \colon s \subset
  t}[(f|_t-g_t)^2] = O(\eps_t).\]
Furthermore, there exists a function $d_t\colon t \to \{0,1,\alpha_k,\alpha_k-1\}$
such that  $g_t(y) = \sum_{i \in t} d_{t}(i) y_i$.
\end{claim}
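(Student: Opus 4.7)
The plan is to apply the slice FKN theorem (\cref{thm:fkn-1}) directly to the restricted pair $(f|_t, F|_t)$ on each $2k$-face $t$. For this application I first verify the hypotheses. Since $|t| = 2k+1$ and the restriction lives on $(k+1)$-element subsets of $t$, we are on a slice with ground-set size $n = 2k+1$ and parameter $k+1$, which essentially meets $n/4 \le k+1 \le n/2$. The restriction $f|_t(y) = \sum_{i \in t} a_i y_i$ is manifestly degree~$1$ on this slice (its coefficients are the $a_i$'s indexed only by $i \in t$), and $F|_t$ inherits Booleanity from $F$. By the definition of $\epsilon_t$, $\E_{s \subset t}[(f|_t(s) - F|_t(s))^2] = \epsilon_t$.

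Applying \cref{thm:fkn-1} to $(f|_t, F|_t)$ produces a Boolean dictator $g_t \colon \binom{t}{k+1} \to \{0,1\}$, i.e.\ $g_t \in \{0,\,1,\,y_i,\,1-y_i\}$ for some $i \in t$, such that
\[ \Pr_{s \subset t}[F|_t(s) \neq g_t(s)] = O(\epsilon_t). \]
Since both $F|_t$ and $g_t$ are $\{0,1\}$-valued, this probability equals $\E[(F|_t - g_t)^2]$, so by the $L_2^2$ triangle inequality recorded in \cref{rem:fkn}\eqref{item:fkn},
\[ \E_{s \subset t}[(f|_t(s) - g_t(s))^2] \le 2\,\E[(f|_t - F|_t)^2] + 2\,\E[(F|_t - g_t)^2] = 2\epsilon_t + O(\epsilon_t) = O(\epsilon_t), \]
which is exactly the bound claimed for $g_t$.

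It remains to exhibit the linear representation $g_t(y) = \sum_{i \in t} d_t(i)\, y_i$ with $d_t(i) \in \{0,\,1,\,\alpha_k,\,\alpha_k-1\}$. Using the identities from \cref{rem:fkn}, we expand each of the four cases: $g_t = 0$ gives $d_t \equiv 0$; $g_t = y_i$ gives $d_t(i) = 1$ and $d_t(j) = 0$ for $j \ne i$; $g_t = 1 = \alpha_k \sum_{j} y_j$ gives $d_t \equiv \alpha_k$; and $g_t = 1 - y_i = \alpha_k \sum_{j \ne i} y_j + (\alpha_k - 1) y_i$ gives $d_t(i) = \alpha_k - 1$ and $d_t(j) = \alpha_k$ for $j \ne i$. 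This completes the claim.

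I do not foresee a genuine obstacle: the claim is essentially a pointwise invocation of \cref{thm:fkn-1} followed by the triangle inequality and the explicit expansion of Boolean dictators into monomials. The only point requiring care is the parameter range of the slice FKN theorem for the restriction (i.e.\ that $|t| = 2k+1$ is in the admissible regime), which matches up with room to spare.
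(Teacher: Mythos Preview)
Your proposal is correct and matches the paper's approach exactly: the paper simply states that the claim follows by applying \cref{thm:fkn-1} together with \cref{rem:fkn}, and you have spelled out precisely those steps (invoke the slice FKN theorem, convert $\Pr[F|_t\neq g_t]=O(\epsilon_t)$ into $\E[(f|_t-g_t)^2]=O(\epsilon_t)$ via the $L_2^2$ triangle inequality, then read off $d_t$ from the four dictator forms). One small correction: with $n=|t|=2k+1$ the condition $k+1\le n/2$ actually fails by $\tfrac12$, so there is not ``room to spare''; this is harmless, however, since the slice is symmetric under complementation $k+1\leftrightarrow n-(k+1)=k$, which does land in the stated range.
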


A similar claim holds for each $u \in X(4k)$:
\begin{claim} \label[claim]{lem:def-hu-1}
For every $u \in X(4k)$, there exists a Boolean dictator $h_u\colon \binom{u}{k}
\to \{0,1\}$ such that \[\E_{s \colon s \subset
  u}[(f|_u-h_u)^2] = O(\delta_u).\]
Furthermore, there exists a function $e_u\colon u \to \{0,1,\alpha_k,\alpha_k-1\}$
such that  $h_u(y) = \sum_{i \in u} e_{u}(i) y_i$.
\end{claim}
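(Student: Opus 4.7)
The claim is the verbatim analogue of \cref{lem:def-gt-1} with the role of $2k$-faces played by $4k$-faces, and the proof will be a black-box invocation of the slice FKN theorem (\cref{thm:fkn-1}) together with item~2 of \cref{rem:fkn}. The only step requiring attention is checking that the parameters of the $(k+1)$-slice sitting inside a $4k$-face land in the range where \cref{thm:fkn-1} applies.

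First I would unfold the restriction. Since $X$ is a simplicial complex, every $(k+1)$-subset of $u$ is automatically a $k$-face of $X$, so the $k$-faces of $X$ contained in $u$ form exactly the complete $(k+1)$-slice of $u$, with the conditional measure $\E_{s \subset u}$ being uniform on that slice. Writing $f = \sum_{i \in X(0)} a_i y_i$ and using that $y_i(s) = 0$ whenever $i \notin u$ and $s \subset u$, we see that $f|_u = \sum_{i \in u} a_i y_i$, which is degree~$1$ on the slice; meanwhile $F|_u$ is Boolean by construction, and by definition $\E_{s \subset u}[(f|_u - F|_u)^2] = \delta_u$.

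Second, I would apply \cref{thm:fkn-1} with $n := |u| = 4k+1$ to the pair $(f|_u, F|_u)$. The hypothesis $n/4 \le k+1 \le n/2$ becomes $k + \tfrac14 \le k+1 \le 2k + \tfrac12$, which holds for every $k \ge 1$. This produces a Boolean dictator $h_u \colon \binom{u}{k} \to \{0,1\}$ with $\Pr[F|_u \ne h_u] = O(\delta_u)$, and the $L_2^2$-triangle inequality applied to $f|_u - F|_u$ and $F|_u - h_u$, exactly as spelled out in item~1 of \cref{rem:fkn}, upgrades this to $\E_{s \subset u}[(f|_u - h_u)^2] = O(\delta_u)$.

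Finally, for the ``furthermore'' clause, item~2 of \cref{rem:fkn} expresses each element of $\{0, 1, y_i, 1-y_i\}$ as a linear combination $\sum_{i \in u} e_u(i) y_i$ with $e_u(i) \in \{0, 1, \alpha_k, \alpha_k - 1\}$ (via $1 = \alpha_k \sum_j y_j$ and $1 - y_i = \alpha_k \sum_{j \ne i} y_j + (\alpha_k - 1) y_i$), yielding the desired representation of $h_u$. There is no real obstacle here: the statement is the structural twin of \cref{lem:def-gt-1}, and the sole sanity check is the size constraint $n/4 \le k+1 \le n/2$, which for $n = 4k+1$ holds comfortably.
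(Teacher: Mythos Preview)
Your proposal is correct and follows exactly the approach the paper intends: the paper does not spell out a separate proof of \cref{lem:def-hu-1} but simply says ``A similar claim holds for each $u \in X(4k)$'' after deriving \cref{lem:def-gt-1} from \cref{thm:fkn-1} and \cref{rem:fkn}. Your write-up just makes explicit the parameter check $n/4 \le k+1 \le n/2$ for $n=4k+1$ and the passage from $\Pr[F|_u \ne h_u]=O(\delta_u)$ to $\E[(f|_u-h_u)^2]=O(\delta_u)$ via the triangle inequality, both of which the paper leaves to the reader.
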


We now prove that functions in the collection of local functions $\{d_t\}_t$
typically agree with each other. This lets us use the
agreement theorem, \cref{thm:agree-hdx}, to sew these different local
functions together, yielding a single function $d\colon X(0) \to
\{0,1,\alpha_k, \alpha_k -1\}$. This $d$ determines a global
degree~1 function $g$ defined as follows: $g(y) = \sum_{i \in X(0)}
d(i)y_i$.

\begin{claim} \label[claim]{lem:def-d}
There exists a function $d\colon X(0) \to  \{0,1,\alpha_k, \alpha_k
-1\}$ such that  $\Pr_{t}[d_t \equiv d|_t] = 1 - O_\lambda(\epsilon)$.
\end{claim}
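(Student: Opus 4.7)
My plan is to apply the agreement theorem \cref{thm:agree-hdx} (with the role of $k$ played by $2k$, which is valid since $(2k)^2 = 4k^2 < d$) to the ensemble $\{d_t\colon t\to\{0,1,\alpha_k,\alpha_k-1\}\}_{t\in X(2k)}$. The main task is thus to verify the agreement hypothesis: for a random pair $(t_1,t_2)\sim\calD_{2k,4k}$ one has $d_{t_1}|_{t_1\cap t_2}\equiv d_{t_2}|_{t_1\cap t_2}$ with probability $1-O(\epsilon)$. I will route this through the $4k$-face ensemble $\{h_u\}_{u\in X(4k)}$ from \cref{lem:def-hu-1}: a pair $(t_1,t_2)\sim\calD_{2k,4k}$ is by construction obtained by sampling $u\sim\Dist{4k}$ and then two independent $t_1,t_2\subset u$, so it suffices to show that for typical $(u,t)$ one has $d_t\equiv e_u|_t$.

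The first step would be to bound $\E_{(u,t)}\|g_t-h_u|_t\|^2=O(\epsilon)$, where the expectation is over $u\sim\Dist{4k}$ and then uniform $t\subset u$ with $t\in X(2k)$. Observe that $h_u|_t$ acts on $\binom{t}{k+1}$ as $y\mapsto\sum_{i\in t}e_u(i)y_i$, since $y_i\equiv 0$ on $\binom{t}{k+1}$ for $i\in u\setminus t$. By the standard two-step sampling identity $\E_{t\subset u}\E_{s\subset t}[\cdot]=\E_{s\subset u}[\cdot]$ on the uniform distributions over the appropriate slices, one gets $\E_{t\subset u}\|h_u|_t-f|_u|_t\|^2=\|h_u-f|_u\|^2=O(\delta_u)$, which averaged over $u$ gives $O(\epsilon)$. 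Combining with $\|g_t-f|_t\|^2=O(\epsilon_t)$ from \cref{lem:def-gt-1} and using that the marginal of $(u,t\subset u)$ on $t$ is $\Dist{2k}$, the $L_2^2$-triangle inequality yields the claimed bound.

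The main obstacle will be to upgrade this $L_2^2$-closeness to pointwise equality. Both $g_t$ and $h_u|_t$ are Boolean degree~$1$ functions on the slice $\binom{t}{k+1}$, each one of the four dictator types $\{0,1,y_i,1-y_i\}$ by \cref{lem:def-gt-1} and \cref{lem:def-hu-1}. A direct computation shows that any two distinct such functions differ in $L_2^2$ by at least a positive constant $c(k)$ depending only on $k$ (the worst case being $\|y_i-y_j\|^2=\tfrac{k+1}{2k+1}$, with other pairs giving similar or larger separations). By Markov's inequality applied to the bound from the previous step, with probability $1-O_k(\epsilon)$ over $(u,t)$ we have $\|g_t-h_u|_t\|^2<c(k)$, which forces $g_t=h_u|_t$. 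Since each Boolean dictator admits a unique coefficient representation within the alphabet $\{0,1,\alpha_k,\alpha_k-1\}$, this gives $d_t\equiv e_u|_t$.

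To conclude, for $(t_1,t_2)\sim\calD_{2k,4k}$ produced via a common $u$, a union bound over $t_1,t_2$ gives $d_{t_i}\equiv e_u|_{t_i}$ for $i=1,2$ with probability $1-O_k(\epsilon)$, so on this event $d_{t_1}|_{t_1\cap t_2}=e_u|_{t_1\cap t_2}=d_{t_2}|_{t_1\cap t_2}$. Since $\lambda<1/d$ and $4k^2<d$, the hypotheses of \cref{thm:agree-hdx} are satisfied; applying it to the ensemble $\{d_t\}_{t\in X(2k)}$ with alphabet $\{0,1,\alpha_k,\alpha_k-1\}$ yields the desired global $d\colon X(0)\to\{0,1,\alpha_k,\alpha_k-1\}$ with $\Pr_{t\sim\Dist{2k}}[d_t\equiv d|_t]\geq 1-O_\lambda(\epsilon)$.
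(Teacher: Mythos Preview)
Your proposal is correct and follows essentially the same route as the paper: route agreement through the $4k$-face dictators $h_u$, bound $\E_{(u,t)}\|g_t-h_u|_t\|^2=O(\epsilon)$ via the $L_2^2$ triangle inequality and the two-step sampling identity, upgrade to $g_t=h_u|_t$ using that distinct Boolean dictators on $\binom{t}{k+1}$ are $\Omega_k(1)$-separated in $L_2^2$, and then invoke \cref{thm:agree-hdx}. One cosmetic point: your claim that $\|y_i-y_j\|^2$ is the worst case is not quite right (for instance $\|y_i-1\|^2=k/(2k+1)$ is slightly smaller), but all pairwise separations are still bounded below by a positive constant depending only on $k$, so the argument goes through unchanged.
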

\begin{proof}
To sew the various $d_t$ together via the agreement theorem, we would like to first bound the probability
\[\Pr_{(t_1,t_2) \sim \calD_{2k,4k}}[d_{t_1}|_{t_1 \cap t_2} \not\equiv
d_{t_2}|_{t_1 \cap t_2}]\;.\]
Recall the definition of the
distribution $\calD_{2k,4k}$: we first pick a set $u \in X(4k)$
according to $\Dist{4k}$ and then two $2k$-faces $t_1,t_2$ of
$u$ uniformly and independently. Consider the three functions $d_{t_1}, d_{t_2}$ and
$e_u$. Clearly, if $d_{t_1}|_{t_1 \cap t_2} \not\equiv d_{t_2}|_{t_1\cap t_2}$ then one of
$e_{u}|_{t_1} \not\equiv d_{t_1}$ or $e_{u}|_{t_2} \not\equiv d_{t_2}$ must hold. Thus,
\begin{equation}\label{eq:dt_vs_eu}
\Pr_{{(t_1,t_2) \sim \calD_{2k,4k}}}[d_{t_1}|_{t_1 \cap t_2} \not\equiv d_{t_2}|_{t_1 \cap t_2}] \leq
 2\cdot \Pr_{t,u}[e_{u}|_t \not\equiv d_t]\;.
\end{equation}
Thus, it suffices to bound the probability $\Pr_{t,u}[e_{u}|_t \not\equiv
d_t]$, where $u \sim \Dist{4k}$ and $t$ is a random $2k$-face
of $u$.

For any fixed $t \subset u$, the $L_2^2$ triangle inequality shows that
\[
 \EE[(h_u|_t-g_t)^2] \leq 2\EE[(h_u|_t-f|_t)^2] + 2\EE[(f|_t-g_t)^2] =
 2\EE[(h_u|_t-f|_t)^2]+ O(\epsilon_t).
\]
 Taking expectation over $t\in X(2k)$ conditioned on $t \subset u$, we see that
\[
 \EE_{t \subset u} \EE[(h_u|_t-g_t)^2] \leq
 2\EE[(h_u - f|_u)^2] + O\left(\EE_{t \colon t\subset u}\epsilon_t\right) = O(\delta_u) + O\left(\EE_{t \colon t\subset u}\epsilon_t\right).
\]
Taking expectation over $u \sim \Dist{4k}$, we now have
\[ \EE_u \EE_{t \subset u}\EE[(h_u|_t-g_t)^2] = O(\eps). \]
For any fixed $t \subset u$, both $h_u|_t$ and $g_t$ are Boolean dictators. Hence either they agree, or $\EE[(h_u|_t-g_t)^2] = \Omega(1)$. This shows that $h_u|_t$ disagrees with $g_t$ with probability $O(\eps)$, and so
\[
 \Pr_{t,u}[e_{u|_t} \not\equiv d_t] = O(\eps).
\]
We now return to \eqref{eq:dt_vs_eu}, concluding that
\[
 \E_{(t_1,t_2)\sim\calD_{2k,4k}}\left[d_{t_1}|_{t_1 \cap t_2} \not\equiv
   d_{t_2}|_{t_1 \cap t_2})\right] =O(\eps).
\]

We have thus satisfied the hypothesis of the agreement theorem
(\cref{thm:agree-hdx}). Invoking the agreement theorem, we deduce
that $\Pr_{t \sim \Dist{2k}}[d_t \equiv d|_t] = 1 - O_\lambda(\epsilon)$.
\end{proof}

The $d$'s guaranteed by \cref{lem:def-d} naturally correspond to
a degree~1 function $g\colon X(k) \to \RR$ as follows:
\[
g(y) : = \sum_{i \in X(0)} d(i) y_i.
\]

We now show that this $g$ is mostly Boolean.
\begin{claim}\label[claim]{clm:gbool} $\Pr_s[g(s) \in \{0,1\}] = 1- O_\lambda(\eps)$.
\end{claim}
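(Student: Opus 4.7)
The plan is to sample $s \sim \Pi_k$ by first sampling $t \sim \Pi_{2k}$ and then sampling $s \subset t$ uniformly among $k$-subfaces, which correctly reproduces the marginal distribution $\Pi_k$. The key observation is that on the event $\{d_t \equiv d|_t\}$, the value $g(s)$ coincides with $g_t(s)$ for every $k$-face $s \subset t$, since
\[
g(s) = \sum_{i \in s} d(i) = \sum_{i \in s} d_t(i) = g_t(s),
\]
and $g_t$ is a Boolean dictator by \cref{lem:def-gt-1}, so in particular $g_t(s) \in \{0,1\}$.

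Given this, I would conclude by a one-line union bound:
\[
\Pr_s[g(s) \notin \{0,1\}] = \Pr_{t \sim \Pi_{2k},\, s \subset t}[g(s) \notin \{0,1\}] \leq \Pr_t[d_t \not\equiv d|_t] + 0 = O_\lambda(\epsilon),
\]
using \cref{lem:def-d} to bound the first term. The second term vanishes because on the complementary event, $g(s) = g_t(s) \in \{0,1\}$ as observed above.

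There is essentially no obstacle here — the work has all been done in the preceding claim. The only thing to be slightly careful about is the sampling identity that $\Pi_k$ is the marginal of the two-step process (sample $t \sim \Pi_{2k}$, then $s \subset t$ uniformly); this follows directly from the definition of the distributions $\Pi_i$ via the random walk down from $\Pi_d$, applied in two stages (from $\Pi_d$ to $\Pi_{2k}$, and then from $\Pi_{2k}$ to $\Pi_k$).
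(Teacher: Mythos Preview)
Your proof is correct and follows essentially the same approach as the paper's own proof. The paper compresses your argument into the single line $\Pr_{s}[g(s) \in \{0,1\}] \geq \Pr_t[g|_t = g_t] = \Pr_t[d|_t \equiv d_t] = 1 - O_\lambda(\epsilon)$, which is exactly your two-step sampling plus the observation that $d|_t \equiv d_t$ forces $g(s) = g_t(s) \in \{0,1\}$ for every $s \subset t$; you simply make the sampling identity and the coefficient-matching explicit.
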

\begin{proof}
Since
$g_t$ is Boolean-valued,
\[
 \Pr_{s \sim \Dist{k}}[g(s) \in \{0,1\}] \geq \Pr_t[g|_t = g_t] = \Pr_t[d|_t \equiv d_t] = 1 - O_\lambda(\epsilon). \qedhere
\]
\end{proof}

We now show that $g$ in fact agrees pointwise with $F$ most of the time.
\begin{claim}\label[claim]{clm:geqF} $\Pr_s[g \neq F] = O_\lambda(\eps)$.
\end{claim}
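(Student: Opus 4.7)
The plan is to bridge $F$ and $g$ through the local Boolean dictators $g_t$ via a triangle inequality, using that $g_t$ is close to $F|_t$ locally while $g|_t$ coincides with $g_t$ for most $t$.

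First I would observe that for every fixed $t\in X(2k)$, the $L_2^2$-triangle inequality together with \cref{lem:def-gt-1} gives
\[
\EE_{s\subset t}[(g_t - F|_t)^2] \leq 2\EE_{s\subset t}[(g_t - f|_t)^2] + 2\EE_{s\subset t}[(f|_t - F|_t)^2] = O(\epsilon_t) + 2\epsilon_t = O(\epsilon_t).
\]
Because both $g_t$ and $F|_t$ are $\{0,1\}$-valued, $(g_t-F|_t)^2$ is the indicator of disagreement, so $\Pr_{s\subset t}[g_t(s)\neq F(s)] = O(\epsilon_t)$. Taking expectation over $t\sim\Dist{2k}$ and using $\EE_t[\epsilon_t]=\epsilon$, together with the fact that the two-step distribution (pick $t\sim\Dist{2k}$, then uniform $s\subset t$) has marginal $\Dist{k}$ on $s$, yields
\[
\Pr_{t,\,s\subset t}[g_t(s) \neq F(s)] = O(\epsilon).
\]

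Next I would compare $g|_t$ and $g_t$. By construction $g(y) = \sum_{i\in X(0)} d(i)\,y_i$, so $g|_t(y) = \sum_{i\in t} d(i)\,y_i$, while $g_t(y) = \sum_{i\in t} d_t(i)\,y_i$. Thus $g|_t \equiv g_t$ on $\binom{t}{k}$ whenever $d|_t \equiv d_t$, which by \cref{lem:def-d} happens with probability $1 - O_\lambda(\epsilon)$. Combining via a union bound,
\[
\Pr_{s\sim\Dist{k}}[g(s)\neq F(s)] \leq \Pr_{t,\,s\subset t}[g|_t(s)\neq g_t(s)] + \Pr_{t,\,s\subset t}[g_t(s)\neq F(s)] \leq \Pr_t[d|_t\not\equiv d_t] + O(\epsilon) = O_\lambda(\epsilon).
\]

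There is no real obstacle here beyond bookkeeping: the proof is a two-step triangle inequality, with the first step controlled by the pointwise FKN closeness of $g_t$ to $F|_t$ (average $O(\epsilon)$) and the second step controlled by the agreement-theorem conclusion of \cref{lem:def-d}. The only mild subtlety is to note that because $g_t$ and $F|_t$ are Boolean, squared $L_2$ error coincides with probability of disagreement, which is what lets us pass from $L_2^2$ bounds to the pointwise statement $\Pr[g\neq F]$ in the conclusion.
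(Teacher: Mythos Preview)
Your proof is correct and essentially identical to the paper's own argument: both bound $\Pr_{s\subset t}[F|_t\neq g_t]=O(\epsilon_t)$ via the $L_2^2$-triangle inequality through $f|_t$ (using that $F|_t$ and $g_t$ are Boolean), then combine this with $\Pr_t[g|_t\neq g_t]=\Pr_t[d|_t\not\equiv d_t]=O_\lambda(\epsilon)$ from \cref{lem:def-d} via a union bound.
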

\begin{proof}
Fix any $t \in X(2k)$. We compute $\Pr_{s \colon s \subset t}[F|_t
\neq g_t]$ as follows
\begin{align*}
\Pr[F|_t \neq g_t] & = \|F|_t - g_t\|^2
  &[\text{ Since $F|_t$ and $g_t$ are both Boolean }]\\
&\leq 2\cdot \|F|_t - f|_t\|^2 + 2 \cdot \|f|_t - g_t\|^2\\
& = O(\eps_t) + O(\eps_t) = O(\eps_t).
\end{align*}
We can now compute $\Pr_s[F \neq g]$ as follows:
\[
\Pr[F \neq g] = \E_t \Pr[F|_t \neq g|_t] \leq
\E_t \Pr[F|_t \neq g_t] + \Pr_t[g|_t \neq
g_t] = O(\eps) + \Pr_t[d|_t \not\equiv d_t] = O_\lambda(\eps). \qedhere
\]
 \end{proof}

This completes the proof of \cref{thm:fkn-hdx}.

{\small
\bibliographystyle{prahladhurl}
\bibliography{DDFH-hdx-bib}
}
\end{document}